\newif\ifprocs
\newtheorem{infthm}{Informal Theorem}
\newtheorem{oq}{Open Question}
\newtheorem{theorem}[lemma]{Theorem}
\newtheorem{corollary}[lemma]{Corollary}
\newtheorem{observation}[lemma]{Observation}
\newtheorem{claim}[lemma]{Claim}
\newtheorem{definition}[lemma]{Definition}
\newtheorem{hypothesis}[lemma]{Hypothesis}
\newtheorem{proposition}[lemma]{Proposition}
\newtheorem{theorem}{Theorem}[section]
\newtheorem{lemma}[theorem]{Lemma}
\newtheorem{assumption}[theorem]{Assumption}
\newtheorem{observation}[theorem]{Observation}
\newtheorem{corollary}[theorem]{Corollary}
\newtheorem{proposition}[theorem]{Proposition}
\theoremstyle{plain}
\newtheorem{claim}[theorem]{Claim}
\newtheorem*{rep@theorem}{\rep@title}
\newcommand{\newreptheorem}[2]{
\newenvironment{rep#1}[1]{
 \def\rep@title{#2 \ref{##1}}
 \begin{rep@theorem}}
 {\end{rep@theorem}}}
\newtheorem*{rep@corollary}{\rep@title}
\newcommand{\newrepcorollary}[2]{
\newenvironment{rep#1}[1]{
 \def\rep@title{#2 \ref{##1}}
 \begin{rep@corollary}}
 {\end{rep@corollary}}}
\def\compactify{\itemsep=0pt \topsep=0pt \partopsep=0pt \parsep=0pt}
\newcommand{\rnote}[1]{}
\newcommand{\anote}[1]{}
\newcommand{\onote}[1]{}
\newcommand{\ProblemName}[1]{\textsf{#1}}
\newcommand{\MF}{\ProblemName{Max-Flow}\xspace}
\newcommand{\MFV}{\ProblemName{Max-Flow}\xspace}
\newcommand{\CAG}{\ProblemName{CAG}\xspace}
\newcommand{\LCA}{\ProblemName{LCA}\xspace}
\newcommand{\CAGs}{{\CAG}s\xspace}
\newcommand{\MC}{\ProblemName{Min-Cut}\xspace}
\DeclareMathOperator{\dist}{dist}
\DeclareMathOperator{\FirstVisit}{FirstVisit}
\DeclareMathOperator{\LastVisit}{LastVisit} 
\DeclareMathOperator{\parent}{parent} 
\DeclareMathOperator{\lbl}{label} 
\let\poly\relax
\DeclareMathOperator{\poly}{poly} 
\newcommand{\OutputLen}{\text{output}}
\newcommand\eps{\varepsilon}
\renewcommand\epsilon{\varepsilon}
\newcommand\tO{\ensuremath{\tilde O}}
\newcommand{\Raecke}{R\"{a}cke\xspace} 
\newcommand{\calF}{\mathcal{F}}
\newcommand{\T}{\mathcal{T}}
\newcommand{\TG}{\mathcal{T}^*}
\providecommand{\set}[1]{{\{#1\}}}
\providecommand{\card}[1]{\lvert#1\rvert}
\begin{document}

\ifprocs

\title{Cut-Equivalent Trees are Optimal for Min-Cut Queries\thanks{A full version appears at \href{http://arxiv.org/abs/---}{arXiv:---}}}
\author{Amir Abboud
  \thanks{IBM Almaden Research Center.
    Email: \texttt{amir.abboud@ibm.com}
  }
  \and
  Robert Krauthgamer
  \thanks{Weizmann Institute of Science.
  Work partially supported by ONR Award N00014-18-1-2364, the Israel Science Foundation grant \#1086/18, and a Minerva Foundation grant.
    Part of this work was done while the author was visiting the Simons Institute for the Theory of Computing.
    Email: \texttt{robert.krauthgamer@weizmann.ac.il}
  }
  \and
  Ohad Trabelsi
  \thanks{Weizmann Institute of Science.
    Email: \texttt{ohad.trabelsi@weizmann.ac.il}
  }
}
\date{}

\else

\title{Cut-Equivalent Trees are Optimal for Min-Cut Queries}

\author[1]{Amir Abboud}
\author[2]{Robert Krauthgamer}
\author[3]{Ohad Trabelsi}
\affil[1]{IBM Almaden Research Center. Email: \texttt{amir.abboud@ibm.com}}
\affil[2]{Weizmann Institute of Science. Email:
		 \texttt{robert.krauthgamer@weizmann.ac.il}}
 \affil[3]{Weizmann Institute of Science. Email: \texttt{ohad.trabelsi@weizmann.ac.il}}

\fi

\maketitle

\ifprocs
\fancyfoot[R]{\scriptsize{Copyright \textcopyright\ 2020 by SIAM\\
Unauthorized reproduction of this article is prohibited}}
\fi 

%%%%%%%%%%%%%%% abstract %%%%%%%%%%%%%%%%%%%%%
\begin{abstract}
\ifprocs
\small
\fi
Min-Cut queries are fundamental: Preprocess an undirected edge-weighted graph,
to quickly report a minimum-weight cut that separates a query pair of nodes $s,t$.
The best data structure known for this problem simply builds a \emph{cut-equivalent tree}, discovered 60 years ago by Gomory and Hu,
who also showed how to construct it using $n-1$ minimum $st$-cut computations.
Using state-of-the-art algorithms for minimum $st$-cut (Lee and Sidford, FOCS 2014),
one can construct the tree in time $\tilde{O}(mn^{3/2})$,
which is also the preprocessing time of the data structure. 
(Throughout, we focus on polynomially-bounded edge weights, 
noting that faster algorithms are known for small/unit edge weights,
and use $n$ and $m$ for the number of nodes and edges in the graph.) 

Our main result shows the following equivalence:
Cut-equivalent trees can be constructed in near-linear time if and only if
there is a data structure for Min-Cut queries with near-linear preprocessing time and polylogarithmic (amortized) query time, and even if the queries are restricted to a fixed source.
That is, equivalent trees are an essentially optimal solution for Min-Cut queries. 
This equivalence holds even for every minor-closed family of graphs,
such as bounded-treewidth graphs, for which a two-decade old data structure 
(Arikati, Chaudhuri, and Zaroliagis, J.~Algorithms 1998)
implies the first near-linear time construction of cut-equivalent trees.

Moreover, unlike all previous techniques for constructing cut-equivalent trees, ours is robust to relying on \emph{approximation algorithms}. 
In particular, using the almost-linear time algorithm
for $(1+\epsilon)$-approximate minimum $st$-cut
(Kelner, Lee, Orecchia, and Sidford, SODA 2014),
we can construct a $(1+\epsilon)$-approximate flow-equivalent tree
(which is a slightly weaker notion) in time $n^{2+o(1)}$. 
This leads to the first $(1+\epsilon)$-approximation for All-Pairs Max-Flow
that runs in time $n^{2+o(1)}$, and matches the output size almost-optimally. 
\end{abstract}

%%%%%%%%%%%%%%% Introduction %%%%%%%%%%%%%%%%%
\newpage

\section{Introduction}
\label{sec:intro}

Minimum $st$-cut queries, or Min-Cut queries for short, are ubiquitous: Given a pair of nodes $s,t$ in a graph $G$ we ask for the minimum cut that separates them. 
Countless papers study their algorithmic complexity from various angles and in multiple contexts.
Unless stated otherwise, we are in the standard setting of an undirected graph $G=(V,E,c)$ with $n=|V|$ nodes and $m=|E|$ weighted edges, 
where the weights (aka capacities) are polynomially bounded, 
i.e., $c:E\to\{1,\ldots,U\}$ for $U=\poly(n)$. 
While a \emph{Min-Cut query} asks for the set of edges of the minimum cut,
a \emph{Max-Flow query} only asks for its weight.
\footnote{This terminology is common in the literature, although some recent papers \cite{BSW15,BENW16} use other names.
}
A single Min-Cut or Max-Flow query can be answered in time $\tilde{O}(m \sqrt{n})$~\cite{LS14},
\footnote{The notation $\tilde{O}(\cdot)$ hides $\poly\log n$ factors
  (and also $\poly\log U$ factors in our case of $U=\poly(n)$).
}
and there is optimism among the experts that near-linear time,
meaning $\tO(m)$, can be achieved.

In the \emph{data structure} (or \emph{online}) setting, we would like to preprocess the graph once and then quickly answer queries.
There are two naive strategies for this. We can either skip the preprocessing and use an offline algorithm for each query, making the query time at least $\Omega(m)$. Or we can precompute the answers to all possible $O(n^2)$ queries, making the query time $O(1)$, at the cost of increasing the time and space complexity to $\Omega(n^3)$ or worse.

Half a century ago, Gomory and Hu gave a remarkable solution \cite{GH61}. 
By using an algorithm for a single Min-Cut query $n-1$ times, they can compute a \emph{cut-equivalent tree} (aka Gomory-Hu tree) of the original graph $G$.
This is a tree on the same set of nodes as $G$, with the strong property that for every pair of nodes $s,t\in V$, their minimum cut in the tree is also their minimum cut in the graph.
\footnote{If $G$ has a unique minimum $st$-cut
  then the reverse direction clearly holds as well. 
}
This essentially reduces the problem from arbitrary graphs to trees,
for which queries are much easier --- the minimum $st$-cut is attained by cutting a single edge, the edge of minimum weight along the unique $st$-path, which can be reported in logarithmic time.
\footnote{This immediately answers Max-Flow queries in logarithmic time.
  For Min-Cut queries extra work is required to output the edges in amortized logarithmic time; one simple way for doing it is shown in Section~\ref{Alg_Out}.
}
Cut-equivalent trees have other attractive properties beyond making queries faster, as they also provide a deep structural understanding of the graph
by compressing all its minimum cut information into $O(n)$ machine words,
and in particular they give a data structure which is space-optimal,
as $\Omega(n)$ words are clearly necessary. 
Let us clarify that a cut-equivalent tree guarantees that for all $s,t\in V$,
every edge $e_{st}$ that has minimum weight along the tree's unique $st$-path,
not only has the same weight as a minimum $st$-cut in $G$, 
but this edge also bipartitions the nodes into $V=S\sqcup T$
(the two connected components when $e_{st}$ is removed from the tree), 
such that $(S,T)$ is a minimum cut in the graph $G$. 
Without this additional property we would only have a weaker notion called a \emph{flow-equivalent tree}.

Gomory and Hu's solution ticks all the boxes, except for the preprocessing time.
Using current offline algorithms for each query~\cite{LS14}, 
the total time for computing the tree is $\tilde{O}(mn^{3/2})$, 
and no matter how much the offline upper bound is improved, this strategy has
a barrier of $\Omega(mn)$. 
While this barrier was not attained (let alone broken) for general inputs, 
there has been substantial progress on special cases of the problem. 
If the largest weight $U$ is small,
one can use offline algorithms~\cite{madry2016computing,LS19}
that run in time $\tilde{O}( \min\{m^{10/7}U^{1/7}, m^{11/8}U^{1/4} \} )$
to get even closer to the barrier.
In the unweighted case (i.e., unit-capacity $U=1$),
Bhalgat, Hariharan, Kavitha, and Panigrahi~\cite{BHKP07} (see also~\cite{KL15})
achieved the bound $\tilde{O}(mn)$ without relying on a fast offline algorithm, 
and this barrier was partially broken recently with a time bound of $\tilde{O}(m^{3/2}n^{1/6})$~\cite{AKT20}.
Near-linear time algorithms were successfully designed for planar graphs \cite{BSW15} and surface-embedded graphs \cite{BENW16}.
See also~\cite{GT01} for an experimental study,
and the Encyclopedia of Algorithms~\cite{Panigrahi16} for more background.

Meanwhile, on the hardness side, the only related lower bounds are for the online problem in the harder settings of directed graphs \cite{AVY15,KT18,A+18} or undirected graphs with node weights \cite{AKT20},
where Gomory-Hu trees cannot even exist, 
because the $\Omega(n^2)$ minimum cuts might all be different~\cite{HL07}.
However, no nontrivial lower bound, i.e., of time $\Omega(m^{1+\eps})$,
is known for computing cut-equivalent trees, and there is even a barrier for proving such a lower bound under the popular
Strong Exponential-Time Hypothesis (SETH)
at least in the case of unweighted graphs, due to the existence of a near-linear time \emph{nondeterministic} algorithm \cite{AKT20}.
Thus, the following central question remains open.

\begin{oq}
\label{oq1}
Can one compute a cut-equivalent tree of a graph in near-linear time?
\end{oq}

A seemingly easier question is to design a data structure with near-linear time preprocessing that can answer queries in near-constant (which means $\tO(1)$, i.e., polylogarithmic) time.
We should clarify that we are interested in near-constant \emph{amortized} time; that is, if the output minimum $st$-cut has $k_{s,t}$ edges then it is reported in time $\tilde{O}(k_{s,t})$.
Building cut-equivalent trees is one approach, but since they are so structured they might be limiting the space of algorithms severely. 

\begin{oq}
\label{oq2}
Can one preprocess a graph in near-linear time to answer Min-Cut queries in near-constant amortized time?
\end{oq}

An even simpler question is the \emph{single-source} version,
where the data structure answers only queries $s,t\in V$
where $s$ is a fixed source (i.e., known at preprocessing stage) and $t$ can be any target node.
This restriction seems substantial, as the number of possible queries goes down from $O(n^2)$ to $O(n)$,
and in several contexts the known single-source algorithms are much faster
than the all-pairs ones. 
One such context is shortest-path queries, 
where single-source is solved in near-linear time via Dijkstra's algorithm,
while the all-pairs problem is conjectured to be cubic. 
Another context is \MF queries in \emph{directed} graphs(digraphs),
where single-source is trivially solved by $n-1$ applications of \MF,
while based on some conjectures, all-pairs requires at least $\Omega(n^{3/2})$
such applications~\cite{KT18,A+18}.
Single-source \MF queries is currently faster than all-pairs
also in the special case of 
unit-capacity DAGs~\cite{CLL13}. 
However, this is still open for undirected Min-Cut queries.

\begin{oq}
\label{oq3}
Can one preprocess a graph in near-linear time to answer Min-Cut queries from a single source $s$ to any target $t \in V$ in near-constant amortized time?
\end{oq}
  
It is natural to suspect that each of these questions is strictly easier than the preceding one.
The case of bounded-treewidth graphs gives one point of evidence since a positive solution to Question~\ref{oq2} (and thus~\ref{oq3}) was found over two decades ago~\cite{ACZ98}, but Question~\ref{oq1} remained open to this day.

\subsection{Our Results}

Our first main contribution is to prove that all three open questions above are \emph{equivalent}.
We can extract a cut-equivalent tree from any data structure, even if it only answers single-source queries, without increasing the construction time by more than logarithmic factors. 
Thus, the appealingly simple trees are near-optimal as data structures for Min-Cut queries in all efficiency parameters; we find this conclusion quite remarkable.

\begin{infthm}
Cut-equivalent trees can be constructed in near-linear time
\textbf{if and only if}
there is a data structure with near-linear time preprocessing and $\tilde{O}(1)$ amortized time for Min-Cut queries,
\textbf{and even if} the queries are restricted to a fixed source.
\end{infthm}

The main new link that we establish in this paper is to reduce Question~\ref{oq1} to Question~\ref{oq3}, by essentially designing an entirely new algorithm for constructing cut-equivalent trees. The precise statement is given in Theorem~\ref{theorem:accel_alg}. 
The two other links required for the equivalence are from Question~\ref{oq3} to Question~\ref{oq2}, which holds by definition, and from Question~\ref{oq2} to Question~\ref{oq1}. The latter link is to be expected, and was shown before in specific settings; for completeness, we give a simple proof via 2D range-reporting in Theorem~\ref{Theorem:Alg_Out}.
Thus, we get the reduction from all-pairs to single-source indirectly by going through the trees, and we are not aware of another way to prove this counter-intuitive link. 

Notably, our result holds not only for general graphs but also for every graph family closed under minors.
It is particularly useful for bounded-treewidth graphs, 
for which the two-decades-old results of
Arikati, Chaudhuri, and Zaroliagis~\cite{ACZ98}
now imply the construction of a cut-equivalent tree in near-linear time,
as stated below. 
We do not see an alternative way to compute a cut-equivalent tree,
e.g., using directly the techniques of~\cite{ACZ98}, where parts of the graph  $G$ are replaced by constant-size mimicking networks~\cite{HKNR98}. 

\begin{corollary} [see Corollary~\ref{cor:CutEquivTW}]
\label{cor:CutEquivTWintro} 
A cut-equivalent tree for a bounded-treewidth graph $G$
can be constructed in randomized time $\tO(m)$. 
\end{corollary}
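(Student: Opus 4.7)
The plan is to instantiate the generic reduction of Theorem~\ref{theorem:accel_alg} in the minor-closed family of bounded-treewidth graphs, using the classical data structure of Arikati, Chaudhuri, and Zaroliagis~\cite{ACZ98} as the single-source Min-Cut oracle.

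First I would verify that graphs of treewidth at most $k$ form a minor-closed family: treewidth does not increase under edge deletion, vertex deletion, or edge contraction, so this class is closed under minors. This matches precisely the hypothesis under which Theorem~\ref{theorem:accel_alg} (and the paper's main equivalence) is stated to hold, so the reduction machinery applies off-the-shelf.

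Next I would invoke~\cite{ACZ98}: for a bounded-treewidth graph $G$ with $n$ nodes and $m$ edges, this result preprocesses $G$ in time $\tO(m)$ and thereafter answers any Min-Cut query in $\tO(1)$ amortized time. In particular, fixing any source $s \in V$ yields a single-source Min-Cut data structure of the kind demanded by Question~\ref{oq3}. Feeding this oracle into Theorem~\ref{theorem:accel_alg} then produces a cut-equivalent tree of $G$ in $\tO(m)$ randomized time, which is exactly the statement of Corollary~\ref{cor:CutEquivTWintro}.

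The main obstacle is absorbed by Theorem~\ref{theorem:accel_alg} itself: during its execution it inspects contracted (Gomory--Hu-style) subgraphs derived from $G$, and one must ensure the oracle can be (re)applied to these subinstances within the claimed budget. Minor-closure takes care of this qualitatively, since each intermediate graph still has bounded treewidth and thus admits the~\cite{ACZ98} preprocessing; the reduction's internal accounting then bounds the total cost of all oracle instantiations by $\tO(m)$. No additional ingredient beyond minor-closure of the class and the existence of the ACZ98 oracle is needed.
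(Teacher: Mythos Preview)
Your proposal is correct and mirrors the paper's own argument: the paper derives Corollary~\ref{cor:CutEquivTW} by plugging the Arikati--Chaudhuri--Zaroliagis data structure (with $t_p=n\log n\cdot 2^{2^{O(t)}}$ and $t_{mc}=2^{2^{O(t)}}$) into Theorem~\ref{theorem:accel_alg}, relying on minor-closure of bounded-treewidth graphs so that every auxiliary \CAG encountered still has bounded treewidth. Your outline identifies exactly these ingredients, including the key point that minor-closure lets the oracle be re-applied to the contracted auxiliary graphs within the stated budget.
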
 

In planar graphs, combining our reduction with the single-source algorithm of \cite{LNSW12} gives an alternative to the all-pairs algorithm of \cite{BSW15} that used a very different technique.
\footnote{The conference paper of \cite{BSW15} appeared in FOCS 2010, before \cite{LNSW12} appeared in FOCS 2012. While the latter solves an easier task (single-source), it does so for the harder setting of \emph{directed} planar graphs.
} 

To evaluate our results, consider how much other existing techniques for constructing cut-equivalent trees would benefit from a (hypothetical) data structure for Min-Cut queries.
The classical Gomory-Hu algorithm would have two main issues. 
First, it modifies the graph (merging some nodes) after each Min-Cut query,
hence preprocessing a single graph (or a few ones)
cannot answer all the $n-1$ queries. 
This issue was alleviated by Gusfield~\cite{Gusfield90},
who modified the Gomory-Hu algorithm so that all the $n-1$ queries
are made on the original graph $G$. 
A second issue is that the answer to each query might have $\Omega(m)$ edges, hence the total time $\Omega(mn)$ would far exceed $\tilde{O}(m)$. 
Optimistically, a more careful analysis could give an upper bound of $O(\phi)$, where $\phi$ is the total number of edges (in the original graph) in the $n-1$ cuts corresponding to the final tree's edges. 
Clearly, any such algorithm that does not merge edges
must take $\Omega(\phi)$ time.
Still, in weighted graphs $\phi$ could be $\Omega(mn)$, 
and even bounded-treewidth graphs could have $\phi=\Omega(n^2)$ even though $m=O(n)$ (e.g., a path with an extra node connected to all others).
Therefore, our approach, which is very different from Gusfield's, shaves a factor of $n$. 
Notably, our result does not apply if the data structure is available only for unweighted graphs, 
because we need to perturb the edge weights to make all minimum cuts unique;
but in this unweighted setting $\phi=O(m)$ \cite[Lemma 5]{BHKP07}, 
hence it is plausible that other techniques, e.g.~\cite{Gusfield90,KL15}, would be capable of showing the equivalence.

It is worth mentioning in this context a somewhat restricted form of the equivalence in unweighted graphs.
In this case, the known $\tO(mn)$ time algorithm~\cite{BHKP07} for constructing a cut-equivalent tree actually runs in time $\tO(\phi\cdot c)$ where $c=\max_{u,v\in V}\MF(u,v)$ is at most $n$ in unweighted graphs,
utilizes a tree-packing approach~\cite{Gabow95,Edmonds70} to find  \emph{minimal} Min-Cuts between a single source and multiple targets,
meaning that the side not containing the source is minimal with respect to  containment.
Their method crucially relies on this minimality property to bypass
the well-known barrier of uncrossing multiple cuts found in the same graph
(which could be an auxiliary graph or the input $G$).
This tree-packing approach is the basis of a few algorithms for cut-equivalent trees~\cite{Cole03,HKP07,AKT20}, and it does not seem useful for weighted graphs.

While the equivalence for flows is incomparable to that for cuts,
our techniques are robust enough to prove it.
In particular, we show that $\tilde{O}(n)$ Max-Flow queries are sufficient to construct a flow-equivalent tree.
Currently, this relaxation (flow-equivalent instead of cut-equivalent tree)
is not known to make the problem easier in any setting,
although Max-Flow queries could potentially be computed faster than Min-Cut queries. 
Our proof follows from a lemma that an $n$-point ultrametric 
can be reconstructed from $\tilde{O}(n)$ distance queries,
under the assumption that it contains at least (and thus exactly) $n-1$
distinct distances (see Theorem~\ref{Theorem:ultrametrics}). 
Interestingly, it is easy to show that without this extra assumption, $\Omega(n^2)$ queries are needed.  
To our knowledge, this is the first efficient construction of flow-equivalent trees only from Max-Flow queries (without looking at the cuts themselves).
A well-known non-efficient construction (see \cite{GH61}) is to make Max-Flow queries for all $O(n^2)$ pairs, view it as a complete graph with edge weights, and take a maximum-weight spanning tree.

\begin{infthm} [see Theorem~\ref{thm:floweq}] 
Flow-equivalent trees can be constructed in near-linear time
\textbf{if and only if} there is a data structure with near-linear time preprocessing and $\tilde{O}(1)$ time for Max-Flow queries.
\end{infthm}

\paragraph{$(1+\eps)$-Approximations}
Our first result offers a quantitative improvement over the
Gomory-Hu reduction from cut-equivalent trees to Min-Cut queries. 
It turns out that our technique also gives a qualitative improvement.
A well-known open question among the experts, see e.g.~\cite{Panigrahi16},
is to utilize \emph{approximate} Min-Cut queries
(to construct an approximate cut-equivalent tree).
An obvious candidate is an algorithm of Kelner et al.~\cite{Kelner14}
for the offline setting (i.e., a single query),
that achieves $(1+\eps)$-approximation and runs in near-linear time. 
It beats the time-bound of all known exact algorithms, 
however no one has managed to utilize it for the online setting, 
or for constructing equivalent trees. 
It is not difficult to come up with counter-examples (see Section~\ref{sec:overview}) that show that
following the Gomory-Hu algorithm but using at each iteration
a $(1+\eps)$-approximate (instead of exact) minimum cut, 
results with a tree whose quality (approximation of the graph's cut values) 
is arbitrarily large. 
Our second main contribution is an efficient reduction
from \emph{approximate} equivalent trees to \emph{approximate} Min-Cut queries.
Previously, no such reductions were known (the aforementioned
maximum-weight spanning tree would again give a non-efficient solution).

\begin{infthm} [see Theorem~\ref{thm:algapprox}] 
\label{infthm3}
Assume there is an oracle that can
 answer Min-Cut queries within $(1+\eps)$-approximation.
Then one can compute, using $\tO(n)$ queries to the oracle and an additional processing in time $\tilde{O}(n^2)$:
\begin{enumerate} \compactify
\item a $(1+\eps)$-approximate flow-equivalent tree; and
\item a tree-like data structure that stores $\tilde{O}(n)$ cuts 
  and can answer a Min-Cut query in time $\tilde{O}(1)$
  and with approximation $1+\eps$
  by reporting (a pointer to) one of these stored cuts.
\end{enumerate}
\end{infthm}
For unweighted graphs, we can improve the $\tilde{O}(n^2)$ term to $\tilde{O}(m)$ which could be significant.
While it may not be obvious why our new data structure is better than the oracle we start with, there are a few benefits (see Section~\ref{sec:approx}).
Most importantly, since it only uses $\tO(n)$ queries, we can combine our reduction with
the algorithm of Kelner et al.~\cite{Kelner14}
(even though it is for the offline problem,
we essentially plug it into our reduction),
and obtain three new approximate algorithms that 
are faster than state-of-the-art exact algorithms!
We discuss these results next. 

\begin{corollary}[Section~\ref{sec:approx}] 
\label{cor:FlowEquiv1}
Given a capacitated graph $G$ on $n$ nodes,
one can construct a $(1+\varepsilon)$-approximate flow equivalent tree of $G$
in randomized time $\eps^{-4} \cdot n^{2+o(1)}$. 
\end{corollary}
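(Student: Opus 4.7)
The plan is to plug an almost-linear time approximate max-flow routine into the reduction already provided by Theorem~\ref{thm:algapprox}. That theorem says that, given an oracle answering Min-Cut queries within $(1+\eps)$-approximation, a $(1+\eps)$-approximate flow-equivalent tree of $G$ can be built using $\tO(n)$ oracle calls plus $\tO(n^2)$ additional processing. So it suffices to implement such an oracle where each call takes $n^{1+o(1)}\cdot\poly(\eps^{-1})$ time, and then track constants through a standard rescaling of $\eps$.

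First I would sparsify. Apply Benczúr--Karger sampling to $G$ in randomized near-linear time to obtain a $(1+\eps/3)$-cut sparsifier $H$ on the same vertex set with $\tO(n/\eps^2)$ edges, so that every cut in $H$ agrees with the corresponding cut in $G$ up to factor $1\pm\eps/3$. Any $(1+\eps/3)$-approximate minimum $st$-cut of $H$ is thus a $(1+\eps)$-approximate minimum $st$-cut of $G$ by multiplicative composition of the two $(1+\eps/3)$ errors.

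Next, implement the oracle by invoking the Kelner--Lee--Orecchia--Sidford algorithm \cite{Kelner14} on $H$: for every query pair $(s,t)$, it produces a bipartition whose value in $H$ is a $(1+\eps/3)$-approximate minimum $st$-cut of $H$ in time $|E(H)|^{1+o(1)}\cdot \eps^{-O(1)} = n^{1+o(1)}\cdot \eps^{-O(1)}$. Combined with the sparsifier guarantee (and evaluating the returned bipartition against the original capacities of $G$, which keeps the $(1+\eps)$-guarantee intact), this is a valid $(1+\eps)$-approximate Min-Cut oracle for $G$. Feeding this oracle into Theorem~\ref{thm:algapprox}, the $\tO(n)$ queries cost $n \cdot n^{1+o(1)}\eps^{-O(1)} = n^{2+o(1)}\cdot \eps^{-O(1)}$ in aggregate, the additional processing is $\tO(n^2)$, and the one-time sparsifier construction is near-linear in $m$; summing and tracking the $\eps$-exponents through the sparsifier ($\eps^{-2}$) and the Kelner et al.\ bound ($\eps^{-2}$) yields the claimed $\eps^{-4}\cdot n^{2+o(1)}$ runtime.

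I do not expect a significant obstacle here, since the argument is a black-box composition of three known ingredients. The only thing that requires care is the bookkeeping: each cut returned by the inner routine lives in $H$, but the flow-equivalent tree must reflect capacities of $G$, so the reduction should treat oracle answers as bipartitions of $V$ whose value is measured in $G$; the sparsifier's cut-preservation and a union bound over the $\tO(n)$ queries (to amplify the randomized sparsifier's success probability) close the argument.
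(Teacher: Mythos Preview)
Your proposal is correct and follows essentially the same route as the paper: sparsify via Bencz\'ur--Karger to $\tO(n/\eps^2)$ edges, answer each of the $\tO(n)$ oracle calls by running the Kelner--Lee--Orecchia--Sidford $(1+\eps)$-approximate minimum $st$-cut algorithm on the sparsifier, and plug into the reduction of Theorem~\ref{thm:algapprox} (the paper phrases this through its offline variant, Theorem~\ref{Theorem:StaticApprox}, but the content is identical). The only point you leave slightly implicit is that Theorem~\ref{thm:algapprox} is stated for a \emph{deterministic} oracle while Kelner et al.\ is randomized; the paper handles this separately in Section~\ref{sec:randomized} by fixing the random coins, and you should invoke that extension rather than just a union bound over the sparsifier.
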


It follows that the All-Pairs Max-Flow problem in undirected graphs can be solved within $(1+\eps)$-approximation in time $n^{2+o(1)}$,
which is optimal up to sub-polynomial factors since the output size is $\Omega(n^2)$. 
This problem is also well-studied in directed graphs \cite{Mayeda62,Jelinek63,HL07,LNSW12,CLL13,G+17},
where it is known that exact solution in sub-cubic time
is conditionally impossible \cite{KT18,A+18},
but it is open for approximated solutions.

\begin{corollary}[Section~\ref{sec:approx}] 
\label{cor:FlowDS1}
Given a capacitated graph $G$ on $n$ nodes,
one can construct in $\eps^{-4}\cdot n^{2+o(1)}$ randomized time, 
a data structure of size $\tO(n^2)$,
that stores a set $\mathcal{C}$ of $\tO(n)$ cuts,
and can answer a Min-Cut query in time $\tO(1)$
and with approximation $1+\varepsilon$ by reporting a cut from $\mathcal{C}$. 
\end{corollary}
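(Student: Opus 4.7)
The plan is to combine Theorem~\ref{thm:algapprox} (the formal version of Informal Theorem~\ref{infthm3}) with a cut sparsifier and the near-linear time $(1+\eps)$-approximate minimum $st$-cut algorithm of Kelner, Lee, Orecchia, and Sidford~\cite{Kelner14}. The key observation is that Theorem~\ref{thm:algapprox}(2) already delivers a data structure of exactly the form claimed, using only $\tO(n)$ oracle queries and $\tO(n^2)$ additional processing; my job is therefore to supply a sufficiently fast approximate min $st$-cut oracle.

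First I would compute a Benczur--Karger cut sparsifier $H$ of $G$ on the same vertex set, with only $\tO(n/\eps^2)$ edges, in time $\tO(m)$, so that every cut in $H$ has value within a factor of $1\pm\eps'$ of its value in $G$, where $\eps' = \Theta(\eps)$ is chosen so that the composition of approximations below is bounded by $1+\eps$.

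Second, on the sparsified graph $H$ I would instantiate an approximate min $st$-cut oracle using~\cite{Kelner14}. Because $|E(H)| = \tO(n/\eps^2)$, each oracle call returns a $(1+\eps')$-approximate minimum $st$-cut of $H$ in time $(n/\eps^2)^{1+o(1)}\cdot\poly(1/\eps)$. Since any vertex bipartition $(S, V\setminus S)$ has cut values in $G$ and in $H$ that differ by at most a factor $1+\eps'$, any such cut is a $(1+\eps')^2$-approximate minimum in $G$, and taking $\eps' = \eps/3$ gives the target ratio $1+\eps$.

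Third, I would run the reduction of Theorem~\ref{thm:algapprox} on $G$ using this oracle. Its conclusion~(2) produces a set $\mathcal{C}$ of $\tO(n)$ cuts together with a tree-like data structure that, on a query $(s,t)$, returns in $\tO(1)$ time a pointer to a cut in $\mathcal{C}$ that is $(1+\eps)$-approximately minimum. The space bound of $\tO(n^2)$ is immediate since each cut in $\mathcal{C}$ is described by a bipartition of $V$. The total time is
\[
\tO(m) + \tO(n)\cdot (n/\eps^2)^{1+o(1)}\cdot \poly(1/\eps) + \tO(n^2) = \eps^{-4}\cdot n^{2+o(1)}.
\]
The main point that requires care, and essentially the only nontrivial step, is the composition of approximation factors across the sparsifier and the Kelner et al.\ oracle while still guaranteeing that cuts produced relative to $H$ remain near-minimizers in the original graph $G$; this is immediate from the all-cuts guarantee of Benczur--Karger. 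A minor bookkeeping issue is that Theorem~\ref{thm:algapprox} is phrased with an oracle that approximates cuts of the input graph, so one either views $H$ as the working graph throughout and transfers the guarantee to $G$ at the end, or wraps the sparsifier-plus-Kelner pipeline into a single oracle that formally returns $(1+\eps')^2$-approximate min cuts of $G$.
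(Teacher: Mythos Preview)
Your proposal is correct and follows essentially the same approach as the paper: apply Bencz\'ur--Karger sparsification to reduce to $\tO(n/\eps^2)$ edges, use the Kelner et al.\ offline $(1+\eps)$-approximate minimum $st$-cut algorithm as the oracle (this is precisely Theorem~\ref{Theorem:StaticApprox}), and plug into Theorem~\ref{thm:algapprox}(2) to obtain the data structure, with the approximation factors composed and the running time working out to $\eps^{-4}\cdot n^{2+o(1)}$. The only point you glossed over is that Theorem~\ref{thm:algapprox} as stated assumes a deterministic oracle while Kelner et al.\ is randomized, but the paper handles this separately in Section~\ref{sec:randomized} and your black-box use of the theorem is fine.
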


Altogether, we provide for all three problems above
(flow-equivalent tree, All-Pairs Max-Flow, and data structure for Max-Flow)
randomized algorithms that run in time $n^{2+o(1)}$.
Previously, the best approximation algorithm known for these three problems
was to sparsify $G$ into $m'=\tO(\eps^{-2} n)$ edges
in randomized time $\tO(m)$ using \cite{BK15} (or its generalizations), 
and then execute on the sparsifier the Gomory-Hu algorithm,
which takes time $\tO(n\cdot m'\sqrt{n}) = \tO(\eps^{-2} n^{2.5})$.
The best exact algorithms previously known for these problems
was essentially to compute a cut-equivalent tree runs in time $O(m n^{1.5})$. 
An alternative way to approximate Max-Flow queries without the Gomory-Hu algorithm is to use \Raecke's approach of a cut-sparsifier tree \cite{Rac02}.
This is a much stronger requirement (it approximates all cuts of $G$)
and can only give polylogarithmic approximation factors.
Its fastest version runs in near-linear time $m^{1+o(1)}$
and achieves approximation factor $O(\log^{4}n)$ \cite{RST14}. 

Unfortunately, we could not prove the same results for $(1+\eps)$-\emph{cut}-equivalent trees and more new ideas are required; in Section~\ref{sec:overview} we show an example where our approach fails.
Interestingly, this is the first setting where we see different time bounds showing that the extra requirements of cuts indeed make the equivalent trees harder to construct.

\medskip
Besides the inherent interest in the equivalence result and its applications, we believe that our results make progress towards the longstanding goal of designing optimal algorithms for cut-equivalent trees. 
It is likely that such algorithms will be achieved via a fast algorithm for online queries, as was the case for bounded-treewidth graphs.

\subsection{Preliminaries} 
A \emph{Min-Cut data structure} for a graph family $\mathcal{F}$
is a data structure that after preprocessing of a capacitated graph $G\in \mathcal{F}$ in time $t_p(m)$,
can answer Min-Cut queries for any two nodes $s,t\in V$
in amortized query time (or output sensitive time) $t_{mc} (k_{st})$,
where $k_{st}$ denotes the output size (number of edges in this cut).
This means that the actual query time is $O(k_{st}\cdot t_{mc}(k_{st}))$.
A $(1+\varepsilon)$-approximate Min-Cut data structure is defined similarly but for $(1+\varepsilon)$-approximate minimum $st$-cut whose total capacity is at most $(1+\varepsilon)$ times that of the minimum $st$-cut in $G$.
We denote by $\MF_G(s,t)$ the value of the minimum-cut between $s$ and $t$, and we might omit the graph $G$ subscript when it is clear from the context.
Throughout, we restrict our attention to connected graphs and thus assume that $m\ge n-1$, and additionally we assume that the edge-capacities are integers (by scaling).

\section{Our Approximation Algorithms}
In this section we present our approximation algorithms, but first we give a high level overview of them. 

\subsection{Overview}
\label{sec:overview}

Here we discuss the obstacles to speeding up Gomory-Hu's approach,
and why plugging in \emph{approximate} Min-Cut queries
fails to produce an \emph{approximate} cut-equivalent tree.
To explain how our approach overcomes these issues,
we present the key ingredients in our approximation algorithm
from Section~\ref{sec:approx}. 
This overview also prepares the reader for Section~\ref{Section:Cut_Alg},
which is the most complicated part of the paper and proves our main result (Theorem~\ref{theorem:accel_alg}). 

\paragraph*{Overview of the Gomory-Hu method}
Start with all nodes forming one super-node $V$. 
Then, pick an arbitrary pair of nodes $s,t$ from the super-node, 
find a minimum $st$-cut $(S,V \setminus S)$,
and split the super-node into two super-nodes $S$ and $V\setminus S$.
Then connect the two new super-nodes by an edge of weight $w(S,V \setminus S)$,
and recurse on each of them. 
In each recursive call (which we also view as an iteration),
say on a super-node $V'$,
the Min-Cut query is performed on an auxiliary graph $G_{V'}$ 
that is obtained from $G$ by contracting every super-node other than $V'$.
These contractions prevent the other super-nodes from being split by the cut, 
which is crucial for the consistency of the constructed tree,
and by a key lemma about uncrossing cuts (proved using submodularity of cuts),
these contractions (viewed as imposing restrictions on the feasible cuts in $G_{V'}$) do not increase the value of the minimum $st$-cut.
The cut found in $G_{V'}$ is then used to split $V'$ into two new super-nodes,
and every edge that was incident to $V'$ is ``rewired''
to exactly one of the new super-nodes.
The process stops when every super-node contains a single node,
which takes exactly $n-1$ iterations and results in a tree on $n$ super-nodes, giving us a tree on $V$.

\paragraph*{Why Gomory-Hu fails when using approximations}
There are two well-known issues (see \cite{Panigrahi16}) for employing this approach using \emph{approximate} (rather than exact) Min-Cut queries,
even if the approximation factor is as good as $1+\eps$. 
The first issue is that errors of this sort multiply,
and thus a $(1+\eps)$-factor at each iteration accumulates in the final tree
to $(1+\eps)^d$, where $d$ is the depth of the recursion. 
The second issue is even more dramatic; without the uncrossing-cuts property, the error could increase faster than multiplying
and might be unbounded even after a single iteration.
The reason is that when we find in super-node $V'$ a cut $(S,V'\setminus S)$
that is (approximately) optimal for a pair $s,t\in V'$, 
we essentially assume that for all pairs $s' \in S, t' \in V\setminus S$
there is an (approximately) optimal cut that splits
at most one of $S$ and $V'\setminus S$ (not both). 
While true for exact optimality, it completely fails in the approximate case,
and there are simple examples, see e.g.~Figure~\ref{Figs:ApproxGH}, 
where allowing $(1+\eps)$-approximation in the very first iteration
makes the error of the final tree unboundedly large.
We will refer to this issue as the main issue.

\ifprocs
\begin{figure*}[!ht]
\else
\begin{figure}[!ht]
\fi
       \includegraphics[width=0.9\textwidth,left]{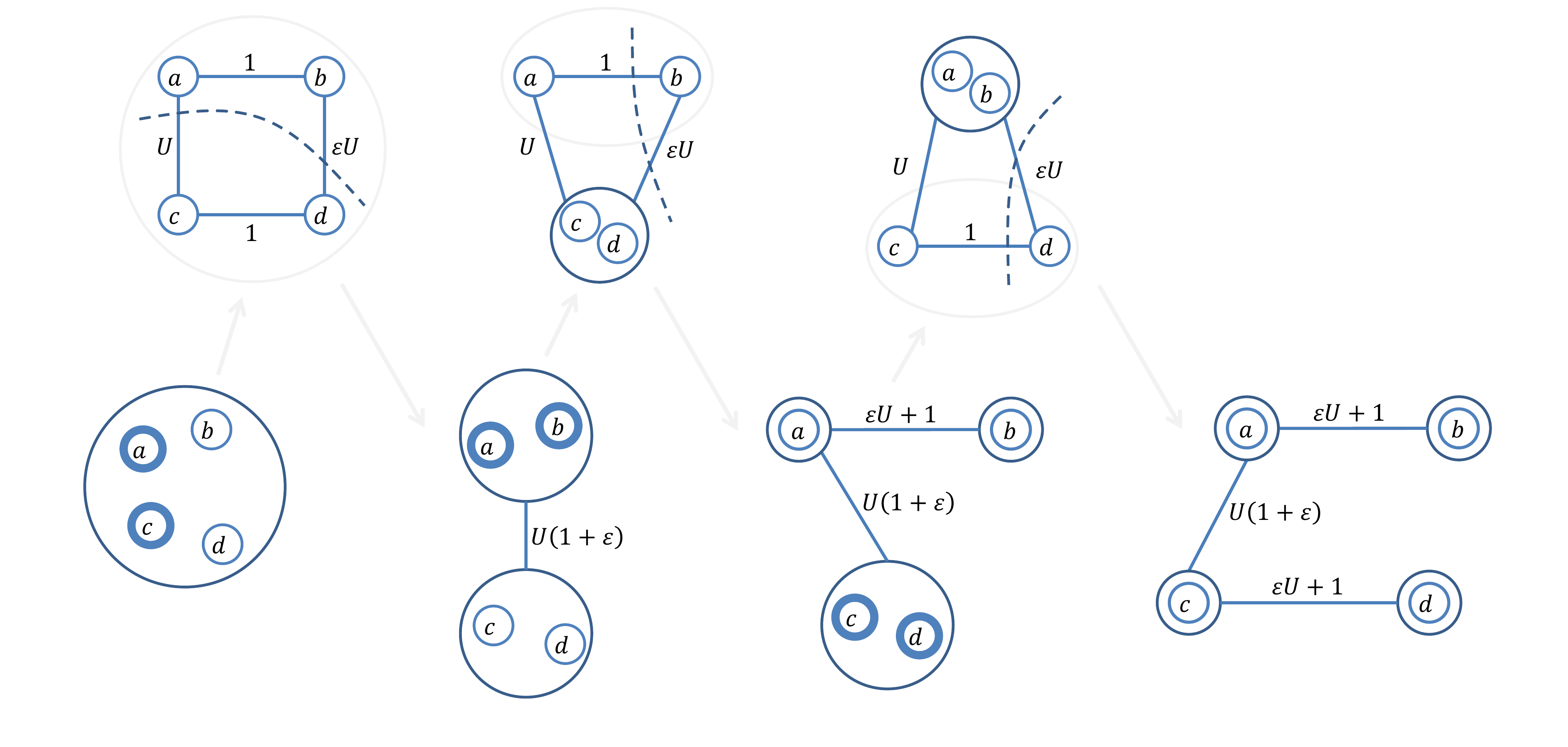}
   \caption[-]{
An example of the main issue with using $(1+\eps)$-approximate minimum cuts in the Gomory-Hu algorithm.
The input graph $G$ is at the top left; the intermediate trees are at the bottom, from left to right; and the auxiliary graphs  $G_{V'}$ are at the top.
Each iteration uses a $(1+\eps)$ Min-Cut for the node pair shown in bold.
In the input graph $\MFV(b,c)=2$ but in the tree it is $\Omega(U)$;
thus the error can be as bad as $\poly(n)$.
   }
   \label{Figs:ApproxGH}
\vspace{.1in}\hrule
\ifprocs
\end{figure*}
\else
\end{figure}
\fi

\paragraph*{Our strategy}
Our approach is different and simultaneously resolves both issues for flow-equivalent trees; for cut-equivalent trees, as we show below, the first issue remains (but not the second).

Our main insight is to identify a property of the cut $(S,V'\setminus S)$,
that is sufficient to resolve the main issue: 
This property is stronger than being a minimum $st$-cut, 
and requires that for all pairs $s'\in S, t'\in V'\setminus S$,
this same cut is an (approximate) minimum $s't'$-cut, 
i.e., it works for them as well. 
Thus, the error for every pair $s',t'$ from this split of $V'$
is bounded by $(1+\eps)$-factor,
and we can recursively deal with pairs inside the same super-node.
While this property may seem too strong, notice that it holds whenever
$(S,V'\setminus S)$ is an (approximate) global minimum cut 
(i.e., achieves the minimum over all pairs $s',t'\in V'$). 
While our algorithm builds on this intuition,
it does not compute a global minimum cut at each iteration,
but rather employs a more complicated strategy that it is substantially more efficient.
For example, its recursion depth is bounded by $O(\log n)$, 
which is important to bound the overall running time, 
and also to control the approximation factor.

\paragraph*{Bounding the depth of the recursion}
The foremost idea is that the recursion depth should be bounded by $O(\log n)$.
This does not happen in the Gomory-Hu algorithm, 
nor in the aforementioned strategy of using an (approximate) global minimum cut,
where splits could be unbalanced and recursion depth might be $\Omega(n)$. 
Assuming -- by way of wishful thinking -- that the total time spent 
in all recursive calls of the same level is $\tO(m)$,
\footnote{One moral justification is that super-nodes $V'$ 
of the same recursion level are disjoint, as they form a partition of $V$.
However, the real challenge is to process their auxiliary graphs $G_{V'}$.
This may be possible in the special case where $G$ is unweighted,
becuase the total size (number of edges) of these auxiliary graphs
(from one level) is $O(m)$ \cite{BHKP07, BCHKP08, KL15, AKT20},
but for a general graph $G$ the total size of these auxiliary graphs
might easily exceed $\tO(m)$. 
}
the challenge is to dictate how to (quickly) choose cuts
so that the recursion depth is small. 

Instead of insisting on a balanced cut, 
we partition the super-node $V'$ into \emph{multiple} sets at once, 
which can be viewed as performing a batch of consecutive Gomory-Hu iterations
at the cost of one iteration (up to logarithmic factors).
This approach was previously used in a few other algorithmic settings,
however, none of their methods is applicable in our context.
\footnote{This approach was used in three different algorithmic settings:
(1) in the special case of an unweighted graph $G$ \cite{BHKP07,BCHKP08};
(2) in parallel algorithms~\cite{AnariV18},
which can compute in parallel polynomially-many cuts 
(e.g., for all $s',t'\in V'$) to find a partition; 
or 
(3) in non-deterministic algorithms~\cite{AKT20}, 
which can ``guess'' a good partition but have to verify it quickly
(achieved in~\cite{AKT20} for an unweighted graph $G$).
}
Before explaining how our algorithm computes a partition, 
let us explain which properties it needs to satisfy.
A partition of super-node $V'$ into $r$ sets $S_1,\ldots,S_r$
(that will be processed recursively)
should satisfy the following strong property: 
\begin{itemize} \compactify
\item[(*)] \label{it:starproperty} 
  For every pair $s' \in S_i, t'\in S_j$ for $i \neq j$, 
  at least one of $(S_i, V'\setminus S_i)$ or $(S_j, V'\setminus S_j)$
  corresponds in $G_{V'}$ to a $(1+\eps)$-approximate minimum $s't'$-cut.
\end{itemize}
(We will actually allow an exception of one set $S_0$ 
that does not satisfy this property, and must be handled in a special way;
this is the set $V''_{big}$ in Section~\ref{Our_Tree_Like_Data_Structure}.)
In addition, the sizes of these sets should be bounded by $\card{V'}/2$ (with the exception of the set $S_0$, which is bounded by $\tfrac34\card{V'}$)
which guarantees recursion depth $O(\log n)$, unlike a global minimum cut. 

Our algorithm to partition $V'$ picks a pivot node $p\in V'$
and queries a data structure built for $G_{V'}$ 
for an (approximate) minimum cut between $p$ and every other node $u \in V'$; 
let $S_u\subset V'$ be the side of $u$ in the returned cut.
To form a partition out of these $|V'|-1$ sets $S_u$,
reassign each node $u$ to a set $S_{u'}$ that contains $u$, 
which naturally defines a partition 
(by grouping nodes reassigned to the same $S_{u'}$).
The reassignment process is elaborate and subtle (see Section~\ref{Our_Tree_Like_Data_Structure}),
aiming to preserve property (*) 
while reassigning nodes only to sets $S_{u'}$ of size at most $|V'|/2$.

\paragraph*{Choosing effective pivots} 
The above technique is not sufficient for bounding the depth of the recursion,
because a poorly chosen pivot $p$ might result in many unbalanced cuts
(sets $S_u$ of size larger than $\frac34 |V'|$),
in which case this pivot is ineffective.
Our next idea is that for a randomly chosen pivot $p\in V'$
this will not happen with high probability.
\footnote{A random pivot was previously used in \cite{BCHKP08}
  in the special case of an unweighted graph $G$,
  and their proof relies heavily on this restriction. 
  Moreover, the cuts $S_u$ in their algorithm form a laminar family,
  hence their reassignment process is straightforward. 
}
We analyze the performance of a random pivot
using a simple lemma about tournaments that works as follows (see Lemma~\ref{Lemma:Tournament} and Corollary~\ref{Corollary:Tournament}
for details). 
Assume for now that the Min-Cut data structure is deterministic
(we show how to lift this assumption in Section~\ref{sec:randomized}),
then every query $\set{x,y}$ (described as an unordered pair)
is answered with some cut $(S_x,S_y)$,
and obviously $\card{S_x} \leq \card{V'}/2$ or $\card{S_y} \leq \card{V'}/2$ (or both). 
It follows by symmetry that a query for $\set{u,p}$
has a chance of at least $1/2$ of having $\card{S_u} \leq \card{V'}/2$,  
in which case we say that node $u$ is ``good'' 
(in Section~\ref{sec:approx} we call these $V_{small}$). 
But we need a stronger property, 
that at least $1/4$ of the nodes in $V'$ are good in this sense;
we thus define on the nodes $V'$ a tournament,
with an edge directed from $x$ to $y$ whenever $\card{S_x} \leq \card{S_y}$, 
and prove that most nodes have a large out-degree,
and will thus be effective pivots. 

With constant probability, such an effective pivot is chosen,
hence the number of nodes that are not good is bounded by $\frac34 |V'|$,
and we must handle them with a separate recursive call
(this is the problematic set $V''_{big}$ in Section~\ref{Our_Tree_Like_Data_Structure}).
A related but different issue that arises in Section~\ref{Section:Unnecessariness}
is that we cannot afford a Min-Cut query from $p$ to all other $u\in V'$.
To handle this we utilize the mentioned tournament properties by making Min-Cut queries from a random pivot $p$
to only a small sample of targets.

\paragraph*{Using dynamic-connectivity algorithms} 
Even if the recursion depth is bounded by $O(\log n)$,
it is not clear how to execute the entire algorithm in near-linear time,
as each iteration computes $|V'|-1$ cuts followed by a reassignment process. 
A straightforward implementation could require quadratic time $\Omega(n^2)$
even in the first iteration (on super-node $V$),
which appears to be necessary because in some instances the total size of all good sets $S_u$ (where $|S_u|\le n/2$) is indeed $\Omega(n^2)$.
For unweighted graphs, however, the total number of \emph{edges}
in these cuts (all minimum cuts from a fixed source to all targets)
can be bounded by $O(m)$ (see Lemma $4$ in \cite{BHKP07}, and Lemma~\ref{lemma:NoDouble} ahead),
and indeed in this case our entire algorithm can be executed in time $\tO(m)$.
The key is to only spend time proportional to the number of edges in each cut,
rather than to the number of nodes $|S_u|$. 
In unweighted graphs, and also in the ``capacitated auxiliary graphs'' that we construct in Section~\ref{Section:Cut_Alg},
the total number of nodes and edges our algorithm observes is bounded
by $\tO(m)$.

The reassignment process poses an additional challenge. 
For example, can one decide whether $u\in S_{u'}$
in time that is proportional to the number of edges (rather than nodes)
in the cut $S_{u'}$  
(more precisely, the reported cut between $p$ and $u'$ in $G_{V'}$)? 
Our solution utilizes an efficient dynamic-connectivity algorithm
(we use a simple modification of~\cite{HK95}, see Section~\ref{sec:unweighted}),
that preprocesses a graph in near-linear time,
and support edge updates and connectivity queries in polylogarithmic time
--- we simply delete the edges of the cut $S_v$
and then ask if $u$ and $u'$ are connected.

\subsection{Approximate Min-Cut Queries and Flow-Equivalent Trees}
\label{sec:approx}

In this section we present our results for using \emph{approximate} Min-Cut queries that were presented in Section~\ref{sec:intro} and a technical overview for them was given in Section~\ref{sec:overview}.

We prove the following theorems, which formalize Informal Theorem~\ref{infthm3} and give Corollaries~\ref{cor:FlowEquiv1} and~\ref{cor:FlowDS1} from Section~\ref{sec:intro}.

\begin{theorem}\label{thm:algapprox}

There is a randomized algorithm such that 
given a capacitated graph $G=(V,E,c)$ on $n$ nodes, $m$ edges, and using $\tO(n)$ queries to a deterministic $(1+\varepsilon)$-approximate Min-Cut data structure for $G$ with a running time $t_p$ and amortized time $t_{mc}$, can with high probability:
\begin{itemize}
\item construct in time $O(t_p(n))+\tO(n^2)$ a $(1+\varepsilon)$-approximate flow-equivalent tree $T$ of $G$, and

\item construct in time $O(t_p(n))+\tO(n^2)$ a data structure $D$ of size $\tO(n^2)$ 
that stores a set $\mathcal{C}$ of $\tO(n)$ cuts, such that given a queried pair $s,t\in V$ returns in time $\tO(1)$ a pointer to a cut in $\mathcal{C}$ that is a $(1+\varepsilon)$-approximate minimum $st$-cut.
\end{itemize}
\end{theorem}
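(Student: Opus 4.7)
The plan is to implement the recursive partitioning strategy outlined in Section~\ref{sec:overview}, constructing the tree top-down by repeatedly splitting super-nodes via randomized pivots. I would define a recursive procedure $\textsc{Build}(V')$ that takes a super-node $V' \subseteq V$, processes the contracted auxiliary graph $G_{V'}$, and returns a $(1+\eps)$-approximate flow-equivalent tree on $V'$ together with a pointer from each tree edge to the concrete cut in $G$ realizing its weight. At the top level, $V' = V$.

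Within $\textsc{Build}(V')$, I would first sample a uniform random pivot $p \in V'$ and, using the oracle on $G_{V'}$, query an approximate minimum $pu$-cut $(S_u, V' \setminus S_u)$ for every $u \in V' \setminus \{p\}$ (with $u \in S_u$). Let $V'_{small} = \{u : \card{S_u} \le \card{V'}/2\}$ and $V'_{big} = V' \setminus (V'_{small} \cup \{p\})$. By the tournament argument of Corollary~\ref{Corollary:Tournament}, with constant probability $\card{V'_{small}} \ge \card{V'}/4$, so repeating the pivot choice $O(\log n)$ times guarantees $\card{V'_{big}} \le \tfrac34 \card{V'}$ w.h.p. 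Next, I would run the reassignment process of Section~\ref{Our_Tree_Like_Data_Structure} to group the small nodes into clusters $C_1,\ldots,C_r$, each of size at most $\card{V'}/2$, so that property~(*) from Section~\ref{sec:overview} holds: for any $s' \in C_i, t' \in C_j$ with $i \ne j$, one of the stored cuts is a $(1+\eps)$-approximate minimum $s't'$-cut in $G$. I would then recurse on every $C_i$ and on $V'_{big}$, hanging the returned sub-trees off a common skeleton joined by edges whose weights/pointers come from the splitting cuts. Because every recursive sub-call shrinks the super-node by a factor of at most $\tfrac34$, the depth is $O(\log n)$ and the total number of oracle calls is $\sum_{\text{levels}} \sum_{V'} (\card{V'} - 1) = \tO(n)$.

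The main obstacles are threefold. First, and most delicate, is arguing that property~(*) holds with the single factor $1+\eps$ rather than compounding to $(1+\eps)^{O(\log n)}$ across recursion levels; the point is that every pair $(s,t)$ is separated by exactly one tree edge, and property~(*) forces the cut on that edge to be a $(1+\eps)$-approximate $st$-cut in the original graph $G$, so the approximation never composes. The proof of~(*) will be an approximate uncrossing argument in the spirit of Gomory--Hu but tailored to the reassignment rules. Second, achieving the $\tO(n^2)$ additional-processing bound requires that the reassignment be driven by dynamic-connectivity data structures (as in Section~\ref{sec:overview}) so that membership tests $u \in S_{u'}$ are charged to edges of the stored cuts; since at each level the cuts partition $V'$ and each cut has $O(n)$ nodes, the total cost is $\tO(n \cdot n \cdot \log n) = \tO(n^2)$. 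Third, the randomness must succeed at all $O(\log n)$ levels simultaneously, which is handled by a union bound after polylog-many pivot retries.

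Finally, for the data-structure claim, once $T$ is built I would preprocess it in $\tO(n)$ time using a standard bottleneck/heavy-path structure so that a query $(s,t)$ returns, in $\tO(1)$, the minimum-weight edge on the $s$-$t$ path of $T$; the pointer attached to that edge yields a cut from $\mathcal{C}$ that is a $(1+\eps)$-approximate minimum $st$-cut by property~(*). The stored set $\mathcal{C}$ has exactly $n-1$ cuts (one per tree edge), each represented by at most $O(n)$ identifiers, giving total size $\tO(n^2)$ as claimed.
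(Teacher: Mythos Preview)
Your proposal contains a fundamental mismatch with the paper's approach that would cause the argument to fail. You propose to work on \emph{contracted auxiliary graphs} $G_{V'}$ and to query the oracle on those graphs. The paper does the opposite: all oracle queries are on the \emph{original} graph $G$ (preprocessed once), and for a super-node $V'$ one simply intersects the returned cut with $V'$, i.e.\ $S'_u = S_u \cap V'$. This is not a cosmetic difference. Once you contract along previously-found \emph{approximate} cuts, the min-cut value in $G_{V'}$ can exceed the min-cut value in $G$ by an unbounded factor; this is precisely the ``main issue'' illustrated in Figure~\ref{Figs:ApproxGH}, and your proposed ``approximate uncrossing argument in the spirit of Gomory--Hu'' is exactly the strategy the paper explains does \emph{not} work. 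By querying only $G$, every returned cut is a genuine $(1+\eps)$-approximate cut in $G$, and the correctness proof (Claim~\ref{alg1:correctness}) is a direct case analysis on the reassignment functions $f,g$ with no uncrossing at all. Note also that the resulting guarantee is $(1+\eps)^2$ (or $(1+\eps)^3$ for the flow-emulator), not $(1+\eps)$; the paper absorbs this by running the oracle with parameter $\eps'=\eps/3$.

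Two smaller gaps. First, your data structure $D$ is not the flow-equivalent tree with $n-1$ attached cuts: in an approximate \emph{flow}-equivalent tree the minimum-weight edge on the $st$-path need not correspond to a cut that actually separates $s$ from $t$. In the paper, $D$ is the recursion tree itself; each recursion node $t_{V'}$ stores the $\card{V'}-1$ cuts $S_{f(u)}$ (or $S_{g(u)}$) together with the array $B$ recording which part of the partition each node went to, and a query walks down from the root until $s,t$ are first separated and returns the smaller of the two stored cuts that separates them. Second, the flow-equivalent tree is not the recursion tree either: it is obtained by building a flow-emulator graph $H$ (or, in the weighted case, the complete graph of approximate values) and taking a maximum spanning tree, via Proposition~\ref{Proposition:extendingGH}.
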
 
While the significance of the first item of the theorem is clear (the flow-equivalent tree) let us say a few words about why the second item is interesting compared to the assumption.
The first benefit of our data structure is that it only stores $\tO(n)$ cuts and therefore it will only have $\tO(n)$ different answers to the ${n \choose 2}$ possible queries it can receive. 
This makes it more similar to a cut-equivalent tree.
Second, the space complexity of our data structure is upper bounded by $\tO(n^2)$ in weighted or $\tO(m)$ in unweighted graphs (see Section~\ref{sec:unweighted}), while the oracle could have used larger space; thus we could save space without incurring loss to the preprocessing and query times by more than log factors.
The third benefit is that it only uses $\tO(n)$ queries to the assumed oracle, which allows us to obtain consequences even from an oracle with larger query times and even from \emph{offline} algorithms.
If rather than a $(1+\varepsilon)$ Min-Cut data structure we have an offline $(1+\varepsilon)$-approximate minimum $st$-cut algorithm such as \cite{Kelner14}, by simply computing it every time there is a query, we get the following theorem.
\begin{theorem}\label{Theorem:StaticApprox}
If in Theorem~\ref{thm:algapprox} instead of a $(1+\varepsilon)$-approximate Min-Cut data structure we have an offline $(1+\varepsilon)$-approximation algorithm with running time $t_{\text{offline}}(m)$, the time bounds for constructing $P$ and $D$ become $\tO(n\cdot t_{\text{offline}}(n))$.
\end{theorem}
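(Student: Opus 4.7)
The plan is to treat the offline $(1+\eps)$-approximate minimum $st$-cut algorithm as a trivial $(1+\eps)$-approximate Min-Cut data structure and then plug it directly into Theorem~\ref{thm:algapprox}. Concretely, I would first apply a Benczur--Karger $(1+\eps/2)$-cut sparsifier once at the outset, reducing the edge count from $m$ to $\tO(n/\eps^2)$ in randomized time $\tO(m)$; since any $(1+\eps)$-approximate minimum cut in the sparsifier is a $(1+O(\eps))$-approximate minimum cut in $G$, a single rescaling of $\eps$ preserves the final approximation guarantee. The ``data structure'' then does nothing at preprocessing beyond storing the sparsifier, so $t_p(n)=\tO(m)$, and answers every query by running the offline algorithm on the current (contracted) auxiliary graph $G_{V'}$, which is derived from the sparsifier and hence has at most $n$ nodes and $\tO(n/\eps^2)$ edges.

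Next, I would invoke Theorem~\ref{thm:algapprox} with this wrapped oracle. Its algorithm issues $\tO(n)$ oracle queries, and by the wrapping each query costs $t_{\text{offline}}(\tO(n))=\tO(t_{\text{offline}}(n))$ by monotonicity of $t_{\text{offline}}$. Summing up, the total running time for constructing either the flow-equivalent tree $P$ or the data structure $D$ is
\[
O(t_p(n))+\tO(n)\cdot t_{\text{offline}}(n)+\tO(n^2)
=\tO(n\cdot t_{\text{offline}}(n)),
\]
where the additive $\tO(n^2)$ post-processing overhead from Theorem~\ref{thm:algapprox} is absorbed using the mild assumption that $t_{\text{offline}}(n)\ge n$ (any offline algorithm must read at least $\Omega(n)$ edges of a connected input).

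The one subtle point, and in my view the main obstacle worth mentioning, is that Theorem~\ref{thm:algapprox} requires a \emph{deterministic} oracle, whereas natural candidates such as Kelner et al.~\cite{Kelner14} are randomized. This is handled by the same machinery that Section~\ref{sec:randomized} uses to accommodate a randomized oracle: standard amplification (take the best of $O(\log n)$ independent runs, or couple random bits across queries) converts high-probability correctness into the per-query correctness that the algorithm of Theorem~\ref{thm:algapprox} assumes, at only a $\polylog n$ multiplicative overhead that is already absorbed into the $\tO(\cdot)$ notation. With this in hand, the substitution is entirely black-box and the stated bound $\tO(n\cdot t_{\text{offline}}(n))$ follows.
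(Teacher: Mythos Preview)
Your approach is essentially the paper's: the paper's entire justification is the sentence preceding the theorem, ``by simply computing it every time there is a query,'' combined with the $\tO(n)$ query count of Theorem~\ref{thm:algapprox}. Your explicit use of Bencz\'ur--Karger sparsification to turn $t_{\text{offline}}(m)$ into $t_{\text{offline}}(n)$ is correct and is exactly what the paper does in the paragraph after the theorem when deriving the corollaries.

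One inaccuracy worth fixing: you say the oracle runs the offline algorithm ``on the current (contracted) auxiliary graph $G_{V'}$.'' There are no contracted auxiliary graphs in the approximation algorithm of Section~\ref{sec:approx}; all $(1+\eps)$-Min-Cut queries are asked on the original graph $G$ (or its sparsifier). The paper stresses this explicitly (``the results of this section do not use any edge contractions and only ask queries about the original graph''), and it is also why the family need not be minor-closed here. This does not affect your runtime arithmetic, since the sparsifier itself has $n$ nodes and $\tO(n)$ edges, but you should correct the description. Your handling of randomization via Section~\ref{sec:randomized} matches the paper.
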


We also remark that the above theorems only deal with deterministic data structures and algorithms. The reason will be clarified during the proof. However, this restriction can be removed and we explain how to generalize the theorem to randomized ones in Section~\ref{sec:randomized}.

\medskip
To conclude Corollaries~\ref{cor:FlowEquiv1} and~\ref{cor:FlowDS1} from Section~\ref{sec:intro}, given a graph we begin by applying a sparsification due to Benczur and Karger~\cite{BeK15}, where a near-linear-time construction transforms any graph on $n$ nodes into an $O(n \log n/\eps^2)$-edge graph on the same set of nodes whose cuts $(1+\varepsilon)$-approximate the values in the original graph. This incurs a $(1+\varepsilon)$ approximation factor to the result.
By utilizing a $(1+\varepsilon)$-approximate minimum $st$-cut algorithm for general capacities by~\cite{Kelner14} with $t_{\text{offline}}(m)=m^{1+o(1)}/\eps^2$ we get the $n^{2+o(1)}/\eps^{4}$ upper bound for constructing $(1+\eps)$-approximate flow-equivalent trees and the tree-like data structure.
The main previously known method for constructing a data structure that can answer $(1+\varepsilon)$-approximate minimum $st$-cuts is to construct an exact cut equivalent tree of a sparsification of the input graph using, e.g., Benczur-Karger~\cite{BK15}.
For general capacities, this gives a total running time of $\tO(n^{5/2})$.
For unit-capacities, since this sparsification introduces edge weights, it is not clear how to do anything better for the approximation version than the exact bounds.

\medskip

In the unit-capacity case, using the same techniques as in Theorem~\ref{thm:algapprox} (but with extra care), our bounds are better: we replace the $\tO(n^2)$ term with $\tO(m)$. While we do not currently have an application for this improved bound, it will be significant in the likely event that a $(1+\eps)$-approximate Min-Cut data structure can be designed for sparse unweighted graphs that will have near-linear or even $O(n^{1.5-\delta})$ preprocessing time. Then, our improved theorem would give an approximate flow-equivalent tree construction that improves on the $n^{1.5}$ barrier that currently exists for exact~\cite{AKT20}.
We remark that, since the results of this section do not use any edge contractions and only ask queries about the original graph, they hold for \emph{any} graph family even if it is not minor-closed. This is important since the family of sparse graphs is not minor closed.
This is discussed in Section~\ref{sec:unweighted}.

\subsection{Our Tree-Like Data Structure}
\label{Our_Tree_Like_Data_Structure}

We start by proving the second item in Theorem~\ref{thm:algapprox} and then show how it gives the construction of approximate flow-equivalent tree in a simple way.

Let $G$ be the input graph with node set $V$, we will show how to construct a data structure $D$ that utilizes a tree structure $T$, and we will also construct a graph $H$ which we will call \emph{flow-emulator} on the same node set $V$ that will only be used for our flow-equivalent tree construction.
We assume we are given an arbitrary data structure for answering $(1+\eps)$-approximate Min-Cut queries, and give a new data structure or flow-equivalent tree with error $(1+\eps)^2$. Thus, to get the theorem we could use a data structure with parameter $\eps'=\eps/3$.

\paragraph*{Preprocessing}
To construct our data structure we recursively perform \emph{expansion} operations.
Each such operation takes a subset $V' \subset V$ and partitions it into a few sets $S_i \subseteq V'$ on which the operation will be applied recursively until they have size $1$ ($V'$ can be thought of as a super-node as in Gomory-Hu but here we do not have auxiliary graphs and contractions).
The partition $S_i$ will (almost) satisfy the strong property (*) that we discussed in Section~\ref{sec:overview}.
In the beginning we apply the expansion on $V':=V$.
It will be helpful to maintain the recursion-tree $T$ that has a node $t_{V'}$ for each expansion operation that stores $V'$ as well as some auxiliary information such as cuts and a mapping from each node $v \in V'$ to a cut $S_{f(v)}$.
To perform a query on a pair $u,v$ we will go to the recursion-node in $T$ that separated them, i.e. the last $V'$ that contains both of them, and we will return one of the cuts stored in that node.

We will prove that, because of how we build the partition, the depth of the recursion will be $O(\log{n})$. For each level of the recursion, the expansion operations are performed on disjoint subsets $V'_i$. All the work that goes into the expansion operations in one level can be done in $O(n^2)$ time in a straightforward way. In unweighted graphs, it can even be done in $\tilde{O}(m)$ time by adapting known dynamic connectivity algorithms; this will be discussed in Section~\ref{sec:unweighted}. 

\medskip
The expansion operation on a subset $V' \subseteq V$ (it is helpful to think of the case $V'=V$): 

\begin{enumerate}
\item Pick a pivot node $p\in V'$ uniformly at random. 
\item For every node $u\in V'\setminus\{p\}$ ask a $(1+\eps)$-approximate Min-Cut query for the pair $u,p$ to get a cut $(V\setminus S_u, S_u)$ where $u \in S_u$ and $p \in V\setminus S_u$. 
Compute the value of the cut and denote it by $c(S_u)$. 
Moreover, compute the intersection of the side of $u$ with $V'$, that is $S_u\cap V'$, and denote this set by $S'_u$.

\item Treat the cut values as being all different by breaking ties arbitrarily and consistently. One way is to redefine the value $c(S)$ of the cut $S$ to be $c(S)+i/n^2$ if $S$ was the answer to the $i^{th}$ Min-Cut query we performed. From now on assume that all $c(S)$ values are unique.

\item We would like to use the sets $S_u'$ for each $u \in V'$ to partition $V'$, but these sets can be intersecting in arbitrary ways and moving nodes around could hurt our property (*). The following is a carefully designed \emph{reassignment} process that makes it work. There are three main criteria when reassigning nodes to cuts. First, we can only assign a node $v$ to a cut $S_u$ whose value is within $(1+\eps)$ of the best cut separating $v$ and $p$; this is necessary to satisfy property (*).
Second, we want to prioritize assigning $v$ to a cut $S_u$ separating it from $p$ with good value that also has \emph{small cardinality} $S_u'$; this will make sure the sets are getting smaller with each recursive step and upper bound the depth of the recursion by $O(\log{n})$. 
And third, a subtle but crucial criterion for satisfying property (*) is that we may not assign two nodes $u,v$ to two different sets unless we have evidence for doing so in the form of a cut $S$ with good value that separates one but not the other from $p$ (and therefore separates them).
While each of these criteria is easy to satisfy on its own, getting all of them requires the following complicated process.

We define a reassignment function $f:V'\rightarrow V'\cup \{\bot\}$ such that for every node $u\in V'\setminus \{p\}$ with cut $(V\setminus S_u, S_u)$, we reassign $u$ to $v$, denoted $f(u)=v$ with the cut $(V\setminus S_{f(u)},S_{f(u)})$ as follows.
Denote by $V_{small}, V'_{small}$ two initially identical sets, each containing all nodes $u$ such that $\card{S'_u}\leq n'/2$, where $\card{V'}=n'$, and denote by $V_{big},V'_{big},V''_{big}$ three sets that are initially all equal to $V'\setminus V_{small}$.
As a preparation for defining $f$ we need another function $g$ that reassigns nodes in $V_{big}$ to the best cut corresponding to another node in $V_{big}$ that separates them from $p$.
Sort $V_{big}$ by $c(S_u)$, and for all $u\in V_{big}$ from low $c(S_u)$ to high and for every node $v\in S'_u\cap V'_{big}$, set $g(v)=u$ and then remove $v$ from $V'_{big}$.
Sort $V_{small}$ by $c(S_u)$, and for all $u\in V_{small}$ from low $c(S_u)$ to high and for every node $v\in S'_u\cap V'_{small}$, set $f(v)=u$ and then remove $v$ from $V'_{small}$. 
For every node $v\in S'_u\cap V''_{big}$, if $c(S_{u})\leq (1+\varepsilon)c(S_{g(v)})$ then set $f(v)=u$ and then remove $v$ from $V''_{big}$.
Finally, set $f(v)=\bot$ for every node $v$ for which $f$ was not assigned a value (including $p$). 

To get the partition, let $IM(f)$ be the image of $f$ (excluding $\bot$) and for each ${i\in IM(f)}$ let $f^{-1}(i)$ be the set of all nodes $u$ that were reassigned by $f$ to the cut $S_{f(i)}$. Notice that the nodes in $V''_{big}$, which includes $p$, were not assigned to any set. Thus, we get the partition of $V'$ into $V''_{big}$ and each set in $\{f^{-1}(i)\}_{i\in IM(f)}$. The latter sets satisfy the property (*) but $V''_{big}$ may not (because it does not correspond to an approximate minimum cut) and therefore it will be handled separately next.

\item If $\card{V_{small}}< n'/4$ then $p$ is a failed pivot. In this case, re-start the expansion operation at step 1 and continue to choose new pivots until $\card{V_{small}}\geq n'/4$. We will prove that we will only do $O(\log{n})$ repetitions with high probability.

\item Finally, we recursively compute the expansion operation on each of the sets of the partition.
Let us describe what we store at the recursion node $t_{V'}$ corresponding to the just-completed expansion operation on $V'$ with (successful) pivot $p$.
Simultaneously, we describe what we add to the flow-emulator graph $H$ (that will be used in for constructing a flow-equivalent tree in unweighted graphs more efficiently in Section~\ref{sec:unweighted}) which initially has no edges, but gets $\card{V'}-1$ new weighted edges with each expansion operation.
If $\card{V'}=1$ we do nothing, so assume that $\card{V'}\geq 2$.
We store $|V'|-1$ cuts in $t_{V'}$: For each node $v \in V''_{big}$ that is not $p$ we store the cut $S_{g(v)}$ and we also add an edge between $p$ and $v$ in the flow-emulator graph $H$ with weight $(1+\varepsilon)c(S_{g(v)})$. And for each node $u$ in one of the other sets of the partition $\{f^{-1}(i)\}_{i\in IM(f)}$ we store the cut it was reassigned to $S_{f(u)}$ and we also add an edge $\{v,p\}$ of weight $(1+\varepsilon)c(S_{f(u)})$ to $H$. If any of these edges already exists in $H$ (which could happen for the nodes $v \in V''_{big}$) then we simply do nothing and keep the previous edge.
We also keep an array of pointers from each node to its corresponding cut and also the value of the cut, call this array $A$.
Moreover, we store for each node of $V'$ the name of the set in the partition that it belongs to, in an array $B$.

\end{enumerate}

\paragraph*{Queries} To answer a query for a pair $u,v$ we go to the recursion level that separated them, corresponding to some node $t_{V'}$ in $T$ and output a pointer to one of the two corresponding cuts $S_{u}$ or $S_{v}$; choose the cut among the two that separates $u$ and $v$ (we prove that at least one of the two cuts does) and has smaller capacity. 
To find out which recursive node separates $u$ and $v$ we can simply start from the root and continue going down (with the help of array $B$) to the nodes that contain both of them until we reach $V'$. The query time will depend on the depth of the recursion which we will show to be logarithmic.

\paragraph*{Correctness}
The next claim proves that the cuts our data structure returns are approximately optimal. The main idea is to prove that the partition we get at each expansion step satisfies the property (*) discussed in Section~\ref{sec:overview}, except for the set $V''_{big}$ which has to be treated separately; things work out because there is only one such problematic set.

\begin{claim}\label{alg1:correctness}
The cut returned by $D$ for any pair of nodes is a $(1+\varepsilon)^2$ approximate minimum cut.
Moreover, for any pair $u,v\in V$ there exists a special node $p_{uv}\in V$ such that
$$
(1+\varepsilon)^3\MF(u,v) \geq \min \{c_H(u,p_{uv}),c_H(v,p_{uv})\}\geq \MF(u,v),
$$ 
where $c_H$ is the weight of the edge in our flow-emulator graph $H$.
\end{claim}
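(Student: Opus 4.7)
My plan is to prove both parts of Claim~\ref{alg1:correctness} by analyzing a single expansion step on the super-node $V'$ where the pair $u,v$ first gets split. Let $p$ be the pivot at that expansion, and write $C_u,C_v$ for the cuts stored for $u,v$ at the recursion node $t_{V'}$ (so $C_x=S_{g(x)}$ if $x$ ended in $V''_{big}$, and $C_x=S_{f(x)}$ otherwise). The whole argument rests on the following \emph{Separation Lemma}: among $C_u$ and $C_v$, the one of smaller capacity always separates $u$ from $v$. Combined with a bound $c(C_x)\le (1+\varepsilon)^2\MF(x,p)$ delivered by the reassignment, this lemma drives both statements.

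To prove the Separation Lemma I would assume for contradiction that $v\in S(C_u)$ and $u\in S(C_v)$, and split into cases by which of $V_{small},V_{big},V''_{big}$ contain $u$ and $v$. In each case the contradiction comes from the $V_{small}$-sort by $c(S_\cdot)$ together with the oracle's $(1+\varepsilon)$-approximation. The most delicate subcase is $u\in V''_{big}$ and $v\in f^{-1}(w)$ for $w\in V_{small}$: because $u$ survived the $V''_{big}$ check against $w$, one has $c(S_w)>(1+\varepsilon)\,c(S_{g(u)})$; if additionally $v\in S_{g(u)}$, then $S_{g(u)}$ is a $\{v,p\}$-cut, so $\MF(v,p)\le c(S_{g(u)})$, which via the oracle's guarantee yields $c(S_v)\le (1+\varepsilon)c(S_{g(u)})<c(S_w)$. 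Hence $v$ is processed strictly before $w$ in the $V_{small}$-loop and self-assigns $f(v)=v\neq w$, a contradiction. The remaining subcases (small/small and other small/big combinations) are handled by the same kind of short contradiction, each exploiting that the earlier-processed endpoint of the sort would have pulled the other node into its block.

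For the first statement, the reassignment rules yield $c(C_x)\le (1+\varepsilon)^2\MF(x,p)$ for every $x\in V'\setminus\{p\}$, where the extra $(1+\varepsilon)$ is incurred only when a $V_{big}$ node is reassigned to a $V_{small}$ cut via the $V''_{big}$ check. Combining this with the ultrametric inequality $\min(\MF(u,p),\MF(v,p))\le \MF(u,v)$ (any $uv$-cut is either a $\{u,p\}$-cut or a $\{v,p\}$-cut, depending on the side of $p$) gives $\min(c(C_u),c(C_v))\le (1+\varepsilon)^2\MF(u,v)$. The Separation Lemma then ensures the smaller of the two cuts actually separates $u$ from $v$, so the query returns a $(1+\varepsilon)^2$-approximate minimum $uv$-cut.

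For the second statement I would take $p_{uv}:=p$. The edges $(p,u)$ and $(p,v)$ are both present in $H$ with weights $(1+\varepsilon)c(\tilde C_u)$ and $(1+\varepsilon)c(\tilde C_v)$, where $\tilde C_u,\tilde C_v$ are the cuts from the level at which each edge was first inserted (either $V'$ itself, or an ancestor reached by the $V''_{big}$-descent at which $p$ was already the pivot). Applying the bound $c(\tilde C_x)\le (1+\varepsilon)^2\MF(x,p)$ and the ultrametric inequality gives the upper bound $\min(c_H(u,p_{uv}),c_H(v,p_{uv}))\le (1+\varepsilon)^3\MF(u,v)$. For the lower bound, the same contradiction argument used in the Separation Lemma (adapted to the $g$-processing at the relevant level) shows that the smaller of $\tilde C_u,\tilde C_v$ is itself a $uv$-cut, so its capacity is at least $\MF(u,v)$; the other is no smaller by choice, so both weights are $\ge (1+\varepsilon)\MF(u,v)\ge \MF(u,v)$. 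The main obstacle I anticipate is getting the case analysis in the Separation Lemma clean---in particular, handling every way a node can end up in $V''_{big}$ relative to the $V_{small}$-sort---but once the lemma is in hand the remaining bounds follow quickly from the reassignment guarantees and the ultrametric inequality.
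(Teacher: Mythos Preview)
Your proof has a real gap in the Separation Lemma. You state the lemma as ``the one of smaller capacity always separates $u$ from $v$,'' but your proposed proof assumes for contradiction that \emph{neither} cut separates (i.e., $v\in S(C_u)$ and $u\in S(C_v)$). That contradiction only establishes the weaker statement ``at least one of $C_u,C_v$ separates $u$ from $v$,'' which is exactly what the paper's query procedure relies on. The stronger version you claim can fail: take $u\in V''_{big}$ with $g(u)=a\in V_{big}$, and $v\in V_{big}\setminus V''_{big}$ with $f(v)=w\in V_{small}$, arranged so that $u,v\in S'_a$, $v\in S'_w$ but $u\notin S'_w$, $c(S_a)=100$, $c(S_w)=105$, and $c(S_w)\le(1+\varepsilon)c(S_{g(v)})$. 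Then $C_u=S_a$ is the smaller cut but does \emph{not} separate $u$ from $v$ (both lie in $S_a$), while $C_v=S_w$ does separate them. Your contradiction hypothesis is not triggered here because $u\notin S(C_v)$.

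This matters for the first statement: you bound $\min(c(C_u),c(C_v))\le (1+\varepsilon)^2\MF(u,v)$ via the ultrametric inequality and then invoke the lemma to say the smaller cut is the one returned. But when only the larger cut separates, the query returns the larger one, and you have no direct bound on $c(C_{\max})$. The paper handles precisely this situation (its Case~3b) by an extra step: it shows that whenever the smaller cut fails to separate, the two cuts are within a $(1+\varepsilon)$ factor of each other, so $c(C_{\max})\le(1+\varepsilon)c(C_{\min})\le(1+\varepsilon)^2\MF(u,v)$. That ``close vs.\ far'' dichotomy is exactly the content of the paper's subcases (a) versus (b), and it is what your unified lemma is missing. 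The same issue propagates to your lower bound in the second statement, where you again assume the smaller of $\tilde C_u,\tilde C_v$ is a $uv$-cut; the paper instead uses that in the bad case the two values are within $(1+\varepsilon)$, so multiplying by the explicit $(1+\varepsilon)$ factor in the $H$-weights restores the lower bound $\MF(u,v)$. Your overall strategy is close in spirit to the paper's, but to make it go through you must either weaken the Separation Lemma to ``at least one separates'' and add the close/far case split, or prove separately that when the smaller cut fails to separate the two capacities are within $(1+\varepsilon)$.
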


\begin{proof}
Let $u,v$ be an arbitrary pair of nodes and let $V'\subseteq V$ be the set such that $u,v \in V'$ but $u$ and $v$ were sent to different sets in the expansion operation on $V'$ during the construction of $D$. 
There are a few cases, depending on whether any of them is in $V''_{big}$ or not, and whether the cuts they got assigned to had similar costs up to $(1+\eps)$.

\begin{enumerate}

\item The first case is when none of $u,v$ are in $V''_{big}$. Assume without loss of generality that $c(S_{f(u)})>c(S_{f(v)})$ where $S_{f(u)}$ and $S_{f(v)}$ are the corresponding cuts. There are two sub-cases, depending on whether the values of the two cuts are close or not.
\begin{enumerate}
\item
If 
$c(S_{f(u)})>(1+\varepsilon)c(S_{f(v)})$
then 
$$
\MFV(u,p)>\MFV(v,p),
$$
and so 
$$\MFV(v,u)=\MFV(v,p).$$ 
As a result, it must be that $u\in V'\setminus S'_{f(v)}$ and $S_{f(v)}$ is indeed the cut returned, with $(1+\varepsilon)$ approximation ratio.

\item
Otherwise, if 
$c(S_{f(u)})\leq (1+\varepsilon)c(S_{f(v)})$
then it must be that
$u\in V'\setminus S'_{f(v)},$
since otherwise when the algorithm examined $f(v)$, it was the case that both $u$ and $v$ were in $S'_{f(v)}$, and as they are in $V_{small}$ they must had been sent to the same recursion instance, contradicting our assumption on the expansion operation on $V'$, and so 
$$
\MFV(u,v)\leq c(S_{f(v)}).
$$
Furthermore,
$$
\MFV(u,v)\geq \min(\MFV(u,p),\MFV(v,p))
$$
and thus 
$$
(1+\varepsilon)\MFV(u,v)\geq \min(c(S_{f(u)}),c(S_{f(v)})).
$$
By our assumption, $c(S_{f(u)})>c(S_{f(v)})$ and so altogether 
$$
(1+\varepsilon)\MFV(u,v)\geq c(S_{f(v)}).
$$
Thus, the algorithm can output $S_{f(v)}$ with an approximation guarantee $(1+\varepsilon)$, as required.
\end{enumerate}

\item The second case is when one of the nodes is in $V''_{big}$ and its Max-Flow to $p$ is larger.
More specifically, let $u_{big}\in V''_{big}$ and $v\notin V''_{big}$ be nodes such that
$c(S_{g(u_{big})})>c(S_{f(v)}),$ where $S_{g(u_{big})}$ is the cut corresponding to $u_{big}$.
Again, there are two sub-cases.

\begin{enumerate}
\item
If 
$c(S_{g(u_{big})})>(1+\varepsilon)c(S_{f(v)})$ 
then similar to before, $S_{f(v)}$ separates $u_{big}$ and $v$, providing a $(1+\varepsilon)$-approximation.

\item 
Otherwise, if $c(S_{g(u_{big})})\leq (1+\varepsilon)c(S_{f(v)})$
then it must be that
$u_{big}\in V'\setminus S'_{f(v)},$
since if not then as $u_{big}\in V_{big}$ and when the algorithm examined $f(v)$ it did not set $f(u_{big}):= f(v)$, it must have been the case for a node $x$ that was either $f(v)$ or before $f(v)$ in the order (i.e. such that $c(S_{x})\leq c(S_{f(v)})$) that $u_{big}$ was tested for the first time, with $c(S_x)>(1+\varepsilon) c(S_{g(u_{big})})$, and so 
$c(S_{f(v)})>(1+\varepsilon) c(S_{g(u_{big})})$.
However, by our assumption it holds that $c(S_{f(v)})< c(S_{g(u_{big})})$, in contradiction.
Thus, $u_{big}\in V'\setminus S'_{f(v)}$.
Similar to before, 
$$
(1+\varepsilon)\MFV(u_{big},v)\geq c(S_{f(v)}),
$$
and thus the returned cut $S_{f(v)}$ is a $(1+\varepsilon)$ approximation, as required.
\end{enumerate}

\item The third and last case is when one of the nodes is in $V''_{big}$ and its Max-Flow to $p$ is smaller.
Let $u_{big}\in V''_{big}$ and $v\notin V''_{big}$ be nodes such that
$c(S_{f(v)})>c(S_{g(u_{big})})$.
There are two sub-cases.

\begin{enumerate}
\item
If 
$c(S_{f(v)})>(1+\varepsilon)c(S_{g(u_{big})})$ 
then similar to before, $S_{g(u_{big})}$ separates $u_{big}$ and $v$, providing a $(1+\varepsilon)$-approximation.
\item\label{Case:Lower}
Otherwise, if $c(S_{f(v)})\leq 
(1+\varepsilon)c(S_{g(u_{big})})$
then it must be that
$u_{big}\in V'\setminus S'_{f(v)}.$
Otherwise, since $u_{big}\in V_{big}$ and when the algorithm examined $f(v)$ it did not set $f(u_{big}):= f(v)$, it must have been the case that $c(S_{f(v)})>(1+\varepsilon) c(S_{g(u_{big})})$. 
However, by our assumption it holds that 
$c(S_{f(v)})\leq 
(1+\varepsilon)c(S_{g(u_{big})})$, in contradiction. 
Thus, $u_{big}\in V'\setminus S'_{f(v)}$. By previous arguments, 
$$
(1+\varepsilon)\MFV(u_{big},v)\geq \min\{c(S_{big}), c(S_{f(v)})\},
$$
and since $1/(1+\varepsilon)c(S_{f(v)})\leq c(S_{g(u_{big})}),$ it must be that
$$
(1+\varepsilon)\MFV(u_{big},v)\geq 1/(1+\varepsilon)c(S_{f(v)}),
$$
and finally
$$
(1+\varepsilon)^2\cdot\MFV(u_{big},v)\geq S_{f(v)}
$$
providing an approximation ratio of $(1+\varepsilon)^2$, concluding the claim.
\end{enumerate}

\end{enumerate}

To prove the statement about the weights in $H$ simply observe that the weights in $H$ correspond exactly to $(1+\varepsilon)$ times the weights of the cuts that were considered in the proof above. 
Note that when $p_{uv}$ is the pivot separating $u$ and $v$, i.e., the pivot that sent $u$ and $v$ to different instances in an expansion step, it might be the case that the returned cut's capacity is the bigger out of the cuts of $(u,p_{uv})$ and $(v,p_{uv})$, in particular it happens in case~\ref{Case:Lower} in the above proof.
However, in this case the smaller value is at least $1/(1+\varepsilon)$ times the bigger value, and so the fact that we multiplied all values by $(1+\varepsilon)$ when we added them to $H$ on one hand ensures the lower bound of $\MF(u,v)$ and on the other hand increases the upper bound by a factor of $(1+\varepsilon)$ to be concluded as $(1+\varepsilon)^3\MF(u,v)$.

\end{proof}

\paragraph*{Running Time}
Next we prove the upper bounds on the preprocessing time, by proving that with high probability, the algorithm terminates after $\tO(n^2)$ time. The crux of the argument is to bound the depth of the recursion by $O(\log n)$. Later, in Section~\ref{sec:unweighted} we build on this analysis to show that our more efficient implementation for unweighted graphs gives an upper bound of $\tilde{O}(m)$.
There, we show that a single expansion step takes only $\tilde{O}(m)$ rather than  $O(n^2)$ but the rest of the analysis is the same.

Let us give a high-level explanation of the argument below.
Our goal is to bound the size of each of the sets in the partition in an expansion operation by $3/4\card{V'}$.
This is immediate for the sets $\{f^{-1}(i)\}_i$ because they are subsets of cuts $S_u$ of nodes $u$ in $V_{small}$, and by definition they satisfy that $\card{S_u} \leq \card{V'}/2$. 
Therefore, we should only worry about $V''_{big}$.
However, any node $u$ that is initially in $V_{small}$ will end up reassigned to one of the sets $\{f^{-1}(i)\}_i$ and not to $V''_{big}$.
Thus, it suffices to argue that there will be at least $\card{V'}/4$ nodes in $V_{small}$.
To argue about this, let us recall where the cuts $S_u$ for each node $u$ come from. 
They are the approximate Min-Cuts that our assumed data structure returns when queried for pairs $u,p$ for a randomly chosen pivot $p$.
For simplicity, let us assume that this data structure is deterministic (we show how to lift this assumption in Section~\ref{sec:randomized}) which means that for any pair $x,y$ the answer to the query will always be a certain cut $(S_x,S_y)$ and in this cut it must be that either $\card{S_x} \leq n/2$ or $\card{S_y} \leq n/2$ or both. 
(More generally, if we take the intersection of each side of the cut with a subset $V' \subseteq V$ we can replace $n/2$ by $\card{V'}/2$, as we will do below.) 
Therefore, the $u,p$ query has a chance of at least $1/2$ of having $\card{S_u} \leq \card{V'}/2$ meaning that $u$ is in $V_{small}$.  
To complete the argument, we need a stronger property: we want that for a randomly chosen $p$, at least $1/4$ of the nodes $u \in V'$ will have that the side of $u$ is smaller than the side of $p$ and they will end up in $V_{small}$.
This is argued more formally below.

We start with a general lemma about tournaments.

\begin{lemma}\label{Lemma:Tournament}
Let $Y=(V_Y,E_Y)$ be a directed graph on $n$ nodes and $m$ edges that contains a tournament on $V_Y$. Then $Y$ contains at least $n/2$ nodes with out-degree at least $n/4$.
\end{lemma}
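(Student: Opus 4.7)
The plan is to prove the contrapositive by a short counting argument on out-degrees. I would let $B := \{v \in V_Y : d^+_Y(v) < n/4\}$ and $k := |B|$, and aim to show $k \le n/2$, which is exactly equivalent to the stated claim.

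The key observation is that since $Y$ contains a tournament on $V_Y$, restricting that tournament to the subset $B$ yields a tournament on $k$ nodes, contributing $\binom{k}{2}$ directed edges that all lie inside $Y[B]$. Every such edge contributes exactly $1$ to the $Y$-out-degree of one endpoint in $B$, so I obtain the lower bound
\[
  \sum_{v \in B} d^+_Y(v) \;\ge\; \binom{k}{2} \;=\; \frac{k(k-1)}{2}.
\]
At the same time, the defining property of $B$ gives the upper bound $\sum_{v \in B} d^+_Y(v) < k \cdot n/4$.

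Combining the two inequalities (the case $k=0$ being trivial) and dividing by $k/2$ yields $k - 1 < n/2$, and integrality of $k$ forces $k \le n/2$. Therefore $|V_Y \setminus B| \ge n/2$ nodes have out-degree at least $n/4$, which is the claim. The argument is a two-line averaging once the right subset is isolated; I expect no substantive obstacle, the only subtle point being the rounding step that turns the strict inequality $k-1 < n/2$ into $k \le n/2$.
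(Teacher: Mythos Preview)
Your argument is correct and is a genuinely different (and cleaner) route than the paper's. The paper proceeds by iterated extraction: it observes that any graph containing a tournament on $n$ nodes has average out-degree at least $(n-1)/2$, so some node has out-degree $\ge (n-1)/2$; removing it and repeating $\lceil n/2\rceil$ times yields $\lceil n/2\rceil$ nodes with out-degrees $\ge (n-1)/2, (n-2)/2,\ldots,(n-\lceil n/2\rceil)/2$, each of which is (essentially) $\ge n/4$. Your approach instead isolates the bad set $B$ directly and double-counts the edges of the induced sub-tournament on $B$; this avoids the greedy peeling entirely and gives the bound in one shot. Both proofs rest on the same edge count $\binom{k}{2}$, but yours is a static double-counting rather than an iterative argument.

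One small point on the rounding step you flagged: from $k-1<n/2$ with $k$ integral you conclude $k\le n/2$, but for odd $n$ this only yields $k\le \lceil n/2\rceil=(n+1)/2$, hence $n-k\ge \lfloor n/2\rfloor$ rather than $n-k\ge n/2$. The paper's own proof has exactly the analogous slack (its smallest extracted out-degree is $(n-\lceil n/2\rceil)/2$, which equals $(n-1)/4<n/4$ for odd $n$), and in the application the constants $1/2$ and $1/4$ are used only to guarantee a constant-probability good pivot, so this parity issue is harmless. If you want the inequality to be literally clean you can either restrict to even $n$ or state the conclusion as ``at least $\lfloor n/2\rfloor$ nodes with out-degree at least $\lfloor n/4\rfloor$.''
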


\begin{proof}
	Each edge contributes exactly $1$ to the total sum of the out-degrees and the in-degrees. Thus, these two sums are equal and so the average out-degree in $Y$ equals
$\sum_{v\in V_Y}outdeg_{Y}(v)/n=m/n\geq\binom{n}{2}/n=(n-1)/2$. 
	Using the probabilistic method, we get that there exists a node with out-degree that is at least $(n-1)/2$. By removing this node and using similar arguments repeatedly, we conclude that there exist $\lceil n/2 \rceil$ nodes with degrees at least $(n-1)/2,(n-2)/2,\ldots,(n-\lceil n/2\rceil)/2$, i.e. at least $n/4$.
\end{proof}

The following is a general corollary, and is a result of Lemma~\ref{Lemma:Tournament}, about cuts between every pair of nodes.
\begin{corollary}\label{Corollary:Tournament}
Let $F=(V_F,E_F)$ be a graph where each pair of nodes $u,v\in V_F$ is associated with a cut $(S_{uv},S_{vu}=V_F\setminus S_{uv})$ where $u\in S_{uv}, v\in S_{vu}$ (possibly more than one pair of nodes are associated with each cut), and let $V'_F\subseteq V_F$. 
Then there exist $\card{V'_F}/2$ nodes $p'$ in $V'_F$ such that at least $\card{V'_F}/4$ of the other nodes $w \in V'_F{\setminus} \{p'\}$ satisfy $\card{S_{p'w}\cap V'_F}>\card{ S_{wp'} \cap V'_F }$. 
\end{corollary}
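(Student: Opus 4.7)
The plan is to prove Corollary~\ref{Corollary:Tournament} by reducing it directly to Lemma~\ref{Lemma:Tournament}: I will construct a tournament on the vertex set $V'_F$ whose out-degree at each vertex $p'$ equals, up to tie-breaking, the number of nodes $w$ whose cut-side is no larger than $p'$'s when both are restricted to $V'_F$. The quantitative guarantee of the lemma then immediately yields the claim.

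Concretely, I would build a directed graph $Y=(V'_F, E_Y)$ by iterating over all unordered pairs $\{u,v\} \subseteq V'_F$ and inserting a single directed edge per pair: orient $u\to v$ whenever $\card{S_{uv}\cap V'_F} > \card{S_{vu}\cap V'_F}$, and break ties (when the two sides have equal cardinality) by a fixed rule, e.g.\ a lexicographic order on node labels, so that exactly one of $u\to v$ or $v\to u$ is added. By construction, $Y$ is a tournament on $n := \card{V'_F}$ nodes, so Lemma~\ref{Lemma:Tournament} applies and delivers a set $P\subseteq V'_F$ of at least $\card{V'_F}/2$ nodes, each with out-degree at least $\card{V'_F}/4$ in $Y$.

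To finish, I would observe that for any $p'\in P$ and any out-neighbor $w$ of $p'$ in $Y$, the orientation rule ensures $\card{S_{p'w}\cap V'_F}\geq \card{S_{wp'}\cap V'_F}$, with strict inequality whenever the two sides do not have equal cardinality. Hence there are at least $\card{V'_F}/4$ nodes $w$ satisfying the desired relation. The only mild obstacle is the cosmetic handling of ties, which correspond to the degenerate case $\card{S_{p'w}\cap V'_F}=\card{V'_F}/2$; these can be absorbed either by the fixed tie-breaking rule in the tournament (counting them as valid out-neighbors) or, in the downstream application of Section~\ref{Our_Tree_Like_Data_Structure}, by noting that such a $w$ still lies on a ``small-enough'' side of the cut (i.e.\ on a side of cardinality at most $\card{V'_F}/2$, which is all that is needed to place $w$ in $V_{small}$). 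The substantive content of the argument is therefore the one-line reduction via the tournament $Y$; everything else is bookkeeping.
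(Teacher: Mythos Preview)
Your proposal is correct and follows essentially the same route as the paper: define a directed graph on $V'_F$ by orienting each pair according to which side of the associated cut has larger intersection with $V'_F$, then apply Lemma~\ref{Lemma:Tournament}. The paper calls this the \emph{helper graph} $H_F(V'_F)$ and invokes the lemma in one line; your version is in fact slightly more careful, since you explicitly handle the tie case $\card{S_{p'w}\cap V'_F}=\card{S_{wp'}\cap V'_F}$ (which the paper glosses over) and correctly observe that such ties are harmless for the downstream use in Section~\ref{Our_Tree_Like_Data_Structure}.
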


\begin{proof}
Let~$H_{F}(V'_F)$ denote the \emph{helper graph} of~$F$ on~$V'_F$, where there is a directed edge from $u\in V'_F$ to $v\in V'_F$ if and only if $\card{S_{uv}\cap V'_F}>\card{S_{vu}\cap V'_F}$.
By Lemma~\ref{Lemma:Tournament}, since $H_{F}(V'_F)$ contains a tournament on~$V'_F$,  Corollary~\ref{Corollary:Tournament} holds.
\end{proof}

Next, apply Corollary~\ref{Corollary:Tournament} on $G$, and let $H=H_G(V')$ be the helper graph of $G$ on $V'$ with the reassigned cuts.
As a result, with probability at least~$1/2$, the pivot~$p$ is one of the nodes with out-degree at least $n'/4$, and in that case, when the algorithm partitions~$V'$,
	it must be that $\max_i \card{f^{-1}(i)}\leq n'/4$, and $\card{f^{-1}(\bot)}\leq 3n'/4$, that is, the largest set created is of size at most~$3n'/4$.
	After~$O(n\log n)$ successful choices of~$p$, the algorithm finishes with the total depth of the recursion being~$O(\log_{4/3} n)$.
	Note that the algorithm verifies the choice of~$p$ and never proceeds with an unsuccessful one.
	Hence, it is enough to bound the running time of the algorithm given only successful choices of~$p$ by~$\tO(n^2)$ and~$\tO(m)$ in the general case and in the unit edge-capacities case, respectively, and then multiply by the maximal number of unsuccessful choices for any instance, which is bounded by $3\log n$ with high probability, as shown below.

A straightforward implementation of an expansion step gives an upper bound of $\tO(n^2)$ on the total running time for the algorithm given only successful choices of $p$.
In Lemma~\ref{lemma:NoDouble} we prove the better upper bound of $\tilde{O}(m)$ for unweighted graphs.

Finally, the probability for failure of $3\log n$ consecutive trials in a single instance is at most $(1/2)^{3\log n}=1/n^3$, and by the union bound over the $\tO(n)$ instances in the recursion, the probability that at least one instance takes more than $3\log n$ attempts to have a successful choice of $p$ is bounded by $1/n$.
We conclude that with high probability, the running time of the algorithm is bounded by 
$O(t_p(n))+\tO(n^2)$ for general capacities and $O(t_p(m))+\tO(m)$ for unit edge-capacities, as required.

\paragraph*{Space Usage}
In the general weighted case, the total space usage is $\tO(n^2)$: There are $O(\log{n})$ levels and in each level the expansion operations are performed disjoint sets $V'$. Each operation stores arrays of size $\card{V'}$, containing pointers, values, and cuts. Each cut can take $O(m)$ bits, but since we can apply the Benczur-Karger sparsification we can assume that $m=\tO(n)$ (unless we are in the unweighted setting which we will discuss separately). Therefore, the total size at each recursive level is $\tO(n^2)$ and we are done.
In unweighted graphs, we will argue in Section~\ref{sec:unweighted} that for any partition of $V$ and any choices of pivots in each of the parts, the total number of edges in all minimum cuts from the pivots to the nodes in their parts is upper bounded by $O(m)$. The fact that we are dealing with approximations only incurs a $(1+\eps)$ factor to this cost. Therefore, we can store all the cuts in a single recursive level in $O(m)$ space, and the other arrays only take $O(n \log n)$ space per level. In total, we get the $\tO(m)$ bound.

\paragraph*{Flow-Equivalent Tree Construction} We apply a technique of Gomory and Hu~\cite{GH61}.
Our data structure lets us to query for the approximate Max-Flow value for a pair of nodes in $\tO(1)$ time. We have the following proposition, extending the technique of~\cite{GH61} to approximated values of an input graph $G$. 
\begin{proposition}\label{Proposition:extendingGH}
Let $G=(V,E)$ be an input graph and $N=(V,c)$ a complete graph on $V$ such that for every two nodes $u,v\in V$, $(1+\varepsilon)\MF(u,v)\geq c_N(u,v)\geq \MF(u,v)$. Then a maximum spanning tree $T$ of $N$ is a $(1+\varepsilon)$-approximate flow-equivalent tree of $G$.
\end{proposition}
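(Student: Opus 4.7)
The plan is to mimic the classical Gomory-Hu proof (that a maximum spanning tree of the exact Max-Flow matrix is a flow-equivalent tree), being careful that the approximation factor only enters once, not once per edge of a path.

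The key structural fact is the \emph{Max-Flow triangle inequality}: for any three nodes $u,w,v\in V$, $\MF_G(u,v) \ge \min\{\MF_G(u,w),\MF_G(w,v)\}$, because any $uv$-cut must separate $w$ from $u$ or from $v$. Iterating this along any path $u=v_0,v_1,\dots,v_k=v$ in $T$ yields $\MF_G(u,v)\ge \min_i \MF_G(v_i,v_{i+1})$. The other classical fact I will use is the \emph{MST cut/swap property}: in any maximum spanning tree $T$ of $N$, if $(u,v)$ is a non-tree edge and $e^*$ is the minimum-weight edge on the unique $uv$-path $P$ in $T$, then $c_N(e^*)\ge c_N(u,v)$, for otherwise swapping $e^*$ with $(u,v)$ would produce a heavier spanning tree.

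Now fix any pair $u,v\in V$, let $P$ be the $uv$-path in $T$, and let $w^*:=\min_{e\in P} c_N(e)$. For the \textbf{lower bound} $w^*\ge \MF_G(u,v)$: apply the MST property to the (possibly non-tree) edge $(u,v)$ of $N$ to get $w^*\ge c_N(u,v)$, and then use the hypothesis $c_N(u,v)\ge \MF_G(u,v)$. For the \textbf{upper bound} $w^*\le (1+\varepsilon)\MF_G(u,v)$: by the triangle inequality along $P$ and then the hypothesis applied edgewise,
\[
\MF_G(u,v)\ \ge\ \min_{(a,b)\in P}\MF_G(a,b)\ \ge\ \frac{1}{1+\varepsilon}\min_{(a,b)\in P} c_N(a,b)\ =\ \frac{w^*}{1+\varepsilon}.
\]
Combining the two bounds shows that the minimum edge weight on the $uv$-path in $T$ is a $(1+\varepsilon)$-approximation to $\MF_G(u,v)$, which is exactly the definition of a $(1+\varepsilon)$-approximate flow-equivalent tree.

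I do not foresee a real obstacle: the proof is essentially the standard Gomory-Hu argument. The only subtlety worth double-checking is that although the triangle-inequality chain involves up to $k=|P|$ terms, taking a minimum (rather than summing) over these terms means the $(1+\varepsilon)$ slack does \emph{not} compound — it is applied once, uniformly to all edges on $P$, and then pulled outside the minimum. This is precisely why the approximate flow-equivalent tree behaves well here, in contrast to the failure mode for cut-equivalent trees discussed in Section~\ref{sec:overview}.
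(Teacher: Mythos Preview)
Your proof is correct and takes essentially the same approach as the paper: both use the iterated Max-Flow triangle inequality along the tree path for the upper bound (applying the $(1+\varepsilon)$ slack once, outside the minimum), and the maximum-spanning-tree swap property together with $c_N(u,v)\ge \MF(u,v)$ for the lower bound. If anything, you are slightly more explicit than the paper in stating that the MST inequality $w^*\ge c_N(u,v)$ must then be chained with the hypothesis $c_N(u,v)\ge \MF_G(u,v)$ to finish the lower bound.
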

\begin{proof}
To prove the claim about $T$, let $u,v$ be any two nodes and consider any $uv$-path in $T$ $u_1=u,\ldots,u_k=v$, and we will show that $$
(1+\varepsilon)\MF(u,v)\geq \min \{c_N(u_1,u_2),\ldots,c_N(u_{k-1},u_k)\}\geq \MF(u,v).
$$

For the first inequality, we follow the original proof for the exact case~\cite{GH61}, where it is shown that for any path $u_1=u,\ldots,u_k=v$ in the complete network representing exact answers, it holds that 
$$
\MF(u_1,u_k)\geq \min \{\MF(u_1,u_2),\ldots,\MF(u_{k-1},u_k)\}.
$$
This is proved by induction. 
By the strong triangle inequality 
$$
\MF(u_1,u_k)\geq \min \{\MF(u_1,u_{k-1}),\MF(u_{k-1},u_k)\},
$$
and by the inductive hypothesis
$$
\MF(u_1,u_{k-1})\geq \min \{\MF(u_1,u_2),\ldots,\MF(u_{k-2},u_{k-1})\}.
$$
Thus, in our approximate setting and by our construction, it must follow that
$$
\MF(u,v)\geq 1/(1+\varepsilon)\min \{c_N(u_1,u_2),\ldots,c_N(u_{k-1},u_k)\}.
$$

The second inequality relies on the properties of any path in a maximum-weight spanning tree, as follows.
For any path $u_1=u,\ldots,u_k=v$ between $u$ and $v$ in $T$ it holds that 
$$
\min \{c_N(u_1,u_2),\ldots,c_N(u_{k-1},u_k)\} \geq c_N(u,v).
$$
Indeed, otherwise the edge $uv$ must not be in $T$, and it could thus replace the minimum-weight edge in the path $u_1,\ldots,u_k$ in $T$ while increasing the total weight of the edges in $T$, in contradiction.

\end{proof}
This allows us to construct, in $\tO(n^2)$ time, a complete graph $N$ on $V$ that has an edge of weight $c_N(s,t)$ between any pair of nodes $s,t$ such that $(1+\varepsilon)^2\MF(s,t) \geq w(s,t) \geq \MF(s,t)$. 
By Proposition~\ref{Proposition:extendingGH}, the maximum spanning tree (MST) of this complete graph is a $(1+\eps)^2$-approximate flow-equivalent tree of $G$.

\subsection{A Faster Implementation For Unweighted Graphs}
\label{sec:unweighted}

In this section we explain how to improve the bounds of Theorem~\ref{thm:algapprox} in the case of unweighted graphs.

\begin{theorem}\label{Theorem:UnitApprox}
For graphs $G=(V,E)$ with unit edge-capacities, the time bounds in Theorem~\ref{thm:algapprox} for constructing $T$ and $D$ become $t_p(m) + \tO(m)$, and the space bound for $D$ becomes $\tO(m)$.
\end{theorem}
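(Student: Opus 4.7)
I would retain the exact high-level algorithm of Section~\ref{Our_Tree_Like_Data_Structure} --- the random pivot, the partition induced by $f$ and $g$, and the recursion --- so that Claim~\ref{alg1:correctness} and the $O(\log n)$ recursion-depth analysis carry over verbatim. The entire task is to re-implement a single expansion step so that its running time, its contribution to the space, and the total number of edges it manipulates are proportional to cut-edges (rather than to cut-sides). Since the subsets $V'$ processed at any fixed recursion level are disjoint, a per-level $\tO(m)$ bound immediately implies the claim.

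\textbf{Main ingredients.} The first ingredient is Lemma~\ref{lemma:NoDouble}, which asserts that in a unit-capacity graph, for any partition of $V$ with a designated pivot $p$ in each part $V'$, the total number of edges lying across all approximate min-cuts $\{S_u\}_{u \in V' \setminus \{p\}}$ returned by the oracle is $O(m)$; the $(1+\eps)$ slack is absorbed into a constant factor. The second is a mild adaptation of the dynamic-connectivity structure of~\cite{HK95}: after $\tO(m)$ preprocessing of $G$, it supports edge deletions, edge insertions, and pairwise connectivity queries in polylogarithmic time. I represent each cut $S_u$ implicitly by the list of its crossing edges, so that the total storage used by cuts at a single level is $\tO(m)$, and across the $O(\log n)$ levels is still $\tO(m)$ --- this already gives the stated space bound for $D$.

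\textbf{Executing one expansion step.} Steps 1--3 are direct: picking the pivot is $O(1)$; the oracle queries cost $O(t_p(m))$ preprocessing plus total output size $\tO(m)$ by Lemma~\ref{lemma:NoDouble}; sorting by $c(S_u)$ is $\tO(n)$. The delicate part is Step 4, where for each cut $S_u$ (in the appropriate sorted order) we must identify the currently unassigned nodes in $S_u \cap V'$ from the correct bucket ($V'_{big}$, $V'_{small}$, or $V''_{big}$). I would temporarily load the crossing edges of $S_u$ as deletions into the dynamic-connectivity structure, so that ``$v \in S_u$'' becomes equivalent to ``$v$ and $u$ are in the same component,'' answerable in polylogarithmic time. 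Unassigned nodes in $S_u$ are then enumerated by a local exploration from $u$, with work charged against two budgets: (i) the crossing edges of $S_u$ (polylog per edge, $\tO(m)$ total over the level by Lemma~\ref{lemma:NoDouble}), and (ii) newly-assigned nodes, which are removed from doubly linked unassigned lists upon assignment (polylog per node, $\tO(n)$ total). After processing $S_u$ I reinsert its edges, so the structure again represents $G$.

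\textbf{Accounting and the main obstacle.} Adding up a single level gives $O(t_p(m)) + \tO(m)$; multiplying by $O(\log n)$ levels and absorbing logs yields $t_p(m) + \tO(m)$ for preprocessing $D$. The flow-equivalent tree $T$ is then assembled from the emulator $H$ (which has $O(n \log n)$ edges accumulated across all levels) as its maximum-weight spanning tree in $\tO(n) \subseteq \tO(m)$ extra time, so the same $t_p(m) + \tO(m)$ bound applies to $T$. The hard part is verifying that the local-exploration charging works uniformly across all three reassignment passes --- in particular the $V''_{big}$-pass, which additionally compares $c(S_u)$ against $c(S_{g(v)})$ and must therefore weave together two already-computed partial assignments without revisiting nodes. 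Once this charging argument is made precise (and the dynamic-connectivity updates are interleaved consistently with the unassigned-list deletions), the theorem follows.
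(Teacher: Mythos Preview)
Your high-level plan matches the paper's, and you correctly identify the two ingredients (Lemma~\ref{lemma:NoDouble} and a dynamic-connectivity structure based on~\cite{HK95}). However, the crucial step---enumerating the unassigned nodes of $S'_u$ in time proportional to $|\delta(S_u)|$ plus the number of nodes enumerated---has a real gap.

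Your charging scheme has only two budgets: cut edges and newly-assigned nodes. But a ``local exploration from $u$'' inside the component $S_u$ (after deleting $\delta(S_u)$) will in general walk over nodes of $S_u$ that were \emph{already} assigned when an earlier cut $S_{u'}$ (with smaller $c(S_{u'})$) was processed. Those nodes are neither cut edges nor newly assigned, so nothing pays for them; since $|S_u|$ can be $\Theta(n')$ even when $|\delta(S_u)|$ is tiny, this can blow up to $\Theta(n^2)$ over a level. The doubly-linked unassigned lists do not help here, because already-assigned nodes can sit between $u$ and the still-unassigned nodes in $S_u$, so you cannot simply skip them during a graph walk. The same issue arises one step earlier: you also need $|S'_u|$ for every $u$ (to form $V_{small}$ and $V_{big}$) before any assignment happens, and you give no mechanism to compute this count in $\tO(|\delta(S_u)|)$ time.

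The paper's fix (Claim~\ref{Claim:dynamic}) is precisely this missing idea: implement the spanning forest of~\cite{HK95} via Euler-tour trees whose BST nodes store, as an aggregate, the number of terminals $V_T$ in their subtree. Setting $V_T$ to the current unassigned set ($V'$, $V'_{big}$, $V'_{small}$, or $V''_{big}$ as appropriate), one deletes the edges of $\delta(S_u)$, reads the count at the root of $u$'s component in $O(1)$, and enumerates exactly the terminals by descending only into BST subtrees with nonzero aggregate---never touching a non-terminal node. This yields $\tO(|\delta(S_u)|)$ for the count and an additional $\tO(|S'_u \cap V_T|)$ for the enumeration, making both of your budgets valid. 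Once you substitute this mechanism for ``local exploration,'' the rest of your accounting goes through (note also that the oracle preprocessing $t_p(m)$ is performed once on $G$, not once per level, so it should not be multiplied by $O(\log n)$).
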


First, we show that an expansion step can be executed more efficiently in unweighted graphs by only spending time proportional to the number of edges in all the cuts we process. In unweighted graphs the total size is only $O(m)$. This is challenging because our reassignment needs to analyze which nodes are in each cut and what is the best value for each one.
We have managed to do this by adapting known data structures for dynamic graph connectivity.

\begin{lemma}\label{lemma:NoDouble}
The running time for the algorithm given only successful choices of $p$ is bounded by $\tO(m)$ for graphs with unit edge-capacities.
\end{lemma}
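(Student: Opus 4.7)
The plan is to bound the total work of the algorithm, given successful pivots, by $\tO(m)$. Since the recursion depth is $O(\log n)$ (already established) and the subsets $V'$ processed at each level partition (a subset of) $V$, it suffices to bound the work at a single recursion level by $\tO(m)$, and then multiply by the $O(\log n)$ depth and the (also $O(\log n)$) pivot-retries per instance. The argument has two ingredients: a structural bound on total cut size per level, and an efficient reassignment procedure based on dynamic connectivity.

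\emph{Structural bound on total cut size.} For a single expansion operation on $V'$ with pivot $p$, I claim
\[
\sum_{u \in V' \setminus \{p\}} \lvert \delta(S_u) \rvert \;=\; \tO(m_{V'}),
\]
where $m_{V'}$ denotes the number of edges of $G$ incident to $V'$, and $\delta(S_u)$ is the set of edges crossing the returned cut $(V\setminus S_u, S_u)$. This is the $(1+\eps)$-approximate, unit-capacity analogue of \cite[Lemma 4]{BHKP07}, which bounds the total size of \emph{exact} single-source minimum cuts by $O(m)$; their edge-charging argument, based on the laminar-like structure of single-source minimum cuts, carries over up to $\poly\log n$ factors when the cuts are only $(1+\eps)$-approximate. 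Since at one level the sets $V'$ are pairwise disjoint, summing the bound over them charges each edge of $G$ a constant number of times, giving $\tO(m)$ cut-edges per level.

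\emph{Efficient reassignment via dynamic connectivity.} A naive implementation of step~4 scans $S'_u$ for every $u$, costing $\Omega(\lvert V' \rvert)$ per cut, which is unaffordable. Instead, I maintain a fully dynamic connectivity structure on $G[V']$, using the simple variant of \cite{HK95} referenced in the overview, which supports edge deletions and connectivity queries in polylogarithmic amortized time. I process the cuts in the sort order prescribed by the algorithm; when handling $S_u$, I delete its crossing edges from the structure, so that the component of $u$ equals $S'_u$. Residual-list pointers, attached to each still-unassigned node of $V'_{small}$, $V'_{big}$, and $V''_{big}$ and bubbled to the root of its tree, allow me to enumerate \emph{only} the nodes that actually receive an assignment at this step, each of which is then removed from the residual lists and never re-enumerated. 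The same paradigm, applied in the correct sort order, implements $g$ on $V'_{big}$ and the conditional $f$-extension on $V''_{big}$. The total cost is $\tO(1)$ per processed cut-edge plus $\tO(1)$ per reassigned node; by the structural bound and the disjointness of the $V'$'s across a level, both sums are $\tO(m)$ per level.

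\emph{Main obstacle.} The delicate step is the conditional $f$-extension on $V''_{big}$, which requires, for every $v \in V''_{big} \cap S'_u$, a comparison between $c(S_u)$ and $(1+\eps)\,c(S_{g(v)})$, and may or may not assign $v$ depending on the outcome. Implementing this without scanning the entire set of currently-unassigned but ineligible nodes --- so that the amortized cost of each iteration remains bounded by either a freshly-processed edge or a freshly-assigned node --- is where the coupling between the BHKP-style edge-charging argument and the residual-list bookkeeping inside the dynamic-connectivity structure has to be carried out most carefully.
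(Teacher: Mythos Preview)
Your overall two-ingredient plan (structural bound on total cut-edges per level, plus dynamic connectivity for the reassignment) matches the paper's proof. But two steps in your writeup are not actually carried out, and one of them is justified incorrectly.

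\textbf{Structural bound.} The claim that the laminarity-based edge-charging of \cite[Lemma~4]{BHKP07} ``carries over'' to $(1+\eps)$-approximate cuts is unsupported: the cuts $S_u$ returned by an arbitrary approximate oracle are \emph{not} laminar, so that argument does not transfer. The paper's argument is different and does not need any structure on the approximate cuts themselves. It works level-wide (over all $V'$ at a fixed recursion depth simultaneously): since $c(S_u)\le(1+\eps)\,\MF_G(u,p)$, it suffices to bound $\sum_{(u,p)\in Q_i}\MF_G(u,p)$. For each pair one charges the edge of the Gomory--Hu tree $T$ incident to the sink $u$; because at a fixed depth the $V'$'s are disjoint (so no node is simultaneously a pivot and a sink), each edge of $T$ is charged at most twice, and the edge weights of $T$ sum to $2m$ in the unit-capacity case (this is \cite[Lemma~5]{BHKP07}, not Lemma~4). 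That gives the $O(m)$ bound directly, with only a $(1+\eps)$ loss for the approximation.

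\textbf{The $V''_{big}$ obstacle.} You correctly flag this as the delicate point but do not resolve it. The resolution is a monotonicity observation you are missing: the test $c(S_u)\le(1+\eps)\,c(S_{g(v)})$ has a fixed right-hand side (once $g$ is computed) and the left-hand side is nondecreasing as $u$ ranges over $V_{small}$ in sorted order. Hence if the test fails the first time $v$ is enumerated, it fails for every subsequent $u$ as well, and you may immediately drop $v$ from the enumeration set (while leaving it in the semantic $V''_{big}$). With this, every node is enumerated at most once per terminal set ($V'_{small}$, $V'_{big}$, $V''_{big}$), and the amortization goes through. The paper's proof glosses over this point, but it is exactly what makes the ``each examined node is either assigned or removed'' accounting valid.
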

\begin{proof}
For unit edge-capacities, we first show that the total space of all cuts examined by the algorithm is bounded by $\tO(m)$, and then that the running time is linear in that measure.
Indeed, the cuts computed in each recursion depth are between pivot-sink pairs such that a pivot in one instance is never a sink in another instance in the same depth.
Let $Q_i\subseteq V\times V$ denote the set containing all pairs of nodes queried in depth $i$.
Denote by $T$ a cut-equivalent tree of $G$, and by $\alpha_T$ the (multi-)set of edges in $T$ that are 
the answers to (exact) \MC queries in $T$ of the pairs in $Q_i$.
We assume that for every pair $t,p$ in $Q_i$, the edge in $T$ answered is the one touching $t$.
Note that our assumption could have only increased the total capacity of the edges in $\alpha_T$.
Since no node can be both a pivot and a sink in the same depth, it must be that every edge in $T$ is returned and added to $\alpha_T$ at most twice, and since the sum of all edge-capacities in $T$ is $2m$ (see Lemma 5 in \cite{BHKP07}), an $O(m)$ bound for the total capacity of the edges in $\alpha_T$ follows. Since the capacity of every edge in $T$ is the number of edges in the cut it represents, and the cuts our algorithm uses are $(1+\varepsilon)$-approximated, they contain at most $(1+\varepsilon)$ times the number of edges in the cuts corresponding to the edges in $\alpha_T$, as claimed.

Now, to see that the running time is bounded, first note that for every cut $S_u$ examined by the algorithm throughout its execution, nodes $v\in S'_u$ are examined and they either getting a value under $g$ or $f$, or removed from the corresponding set it belonged to, $V'_{big}$, $V_{small}$, or $V''_{big}$, so we are left with showing that counting and reporting a set $S'_u$ could be done in $\tO(1)$ and $O(\card{S'_u})$ time, respectively. In fact, for each $S_u$ we will consider a subset of $S_u$ that is the connected component in $G\setminus \delta(S_u)$ containing $u$, where $\delta(S_u)$ is the set of edges leaving $S_u$, with additional running time of $O(\delta(S_u))$, and $\tO(m)$ for all cuts $S_u$'s. We explain these steps below.

\begin{claim}\label{Claim:dynamic}
Let $G=(V,E)$ be a graph and $V_T\subseteq V$ a subset of terminals. For every cut $S\subseteq V$ given by the edges $\delta(S)$ and every node $y\in S$, it is possible to count the nodes in $S(y)\cap V_T$ for a cut $S(y)\subseteq S$ that is the connected component of $G[S]$ that contains $y$, in time $O(\card{\delta(S)})$, and enumerate $S(y)\cap V_T$ in additional time $O(\card{S(y)\cap V_T})$.
\end{claim}
\begin{proof}
The idea is to slightly modify a known dynamic connectivity algorithm ~\cite{HK95}, as follows.
In~\cite{HK95}, by using Euler Tour Trees (ETTs) implemented by Binary Search Trees (BSTs) a dynamic forest is maintained, each of whose trees representing a connected component in the graph.
The important feature of ETTs we utilize here is that their BST implementation is well suited for storing and answering aggregate information on its subtrees, in addition to supporting elementary operations such as finding the root of a tree containing a node, cutting and linking a subtree from and to trees, and answering if two nodes are connected, all in $\tO(1)$ time. Thus, the information we keep for every subtree is the size of its intersection with $V_T$.
Next, using the dynamic algorithm, remove the edges $\delta(S)$, denoting the resulting graph by $G_S$ and the connected component of $y$ in $G_S$ by $C_y$.
Then enumerate every edge in the cut $\delta(S)$ and remove every edge that neither of its ends lies in $C_y$, resulting in a cut $S(y)=C_y$ containing $y$ and such that $c(S(y))\leq c(S)$, as in the claim.
In order to report $S(y)\cap V_T$, simply output the aggregated information in the root of the BST corresponding to $S(y)$.
To enumerate the nodes in $S(y)\cap V_T$, traverse the BST of the connected component $S(y)$ starting with the root, and follow a child whose intersection with $V_T$ is $\geq 1$, until arriving at a leaf which is then enumerated. 
The total time spent for removing the cut edges and reporting the intersection size is thus $O(\card{\delta(S)})$, and an additional time of $O(\card{S(y)\cap V_T})$ is spent on traversing the BST and enumerating the nodes in $S(y)\cap V_T$.
\end{proof}

We use claim~\ref{Claim:dynamic} on our instance by first preprocessing the cuts $S_y$ the algorithm computed and switch them with the corresponding cuts $S_y(y)$ 
in total time $\tO(m)$ for the current depth (as shown in the beginning of this proof), and then setting $V_T$ to be either $V',V'_{big}, V_{small}$, or $V''_{big}$, which incurs an addition of $O(\card{V'}+\card{V'_{big}}+\card{V_{small}}+\card{V''_{big}})=O(V')$ to the running time, bringing the total running time at a single depth to $\tO(m)$, as required. 
Multiplying by the height of the recursion, which is at most $O(\log_{4/3} n)$, concludes the proof.
\end{proof}
As claimed before, there are at most $\tO(1)$ unsuccessful choices of pivots per a successful one, thus the total time for constructing $D$ is $\tO(m)$, as required.

\paragraph*{Flow-Equivalent Tree Construction for Unweighted Graphs} We use the flow-emulator $H$ to compute a flow-equivalent tree without spending $\Omega(n^2)$ time as in the general case.

\begin{lemma}\label{Lemma:Flow_Eq}
A flow equivalent tree $T$ can be constructed from $H$ in near linear time in the size of $H$, such that $T$ represents a $(1+\varepsilon)^3$ approximation of the correct \MF values.
\end{lemma}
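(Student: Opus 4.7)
The plan is to take $T$ to be a maximum-weight spanning tree of $H$, computed in $\tO(|E(H)|)$ time by Kruskal's algorithm with a union--find structure. The graph $H$ is connected because the first expansion operation (applied to $V'=V$) adds an edge from the chosen pivot to every other node of $V$, so after the root call $H$ already spans $V$, and subsequent expansions only add further edges. To prove that $T$ is a $(1+\varepsilon)^3$-approximate flow-equivalent tree it suffices to show, for every pair $u,v\in V$ and the bottleneck weight $w_T(u,v)$ on the unique $u$--$v$ path in $T$,
\[
\MF(u,v) \;\leq\; w_T(u,v) \;\leq\; (1+\varepsilon)^3\,\MF(u,v).
\]
This adapts Proposition~\ref{Proposition:extendingGH} to the sparse graph $H$: rather than forming a complete approximate network, I will exploit that, for every pair $u,v$, $H$ already contains the relevant ``two-edge certificate'' through the separating pivot $p_{uv}$ supplied by Claim~\ref{alg1:correctness}.

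For the lower bound, Claim~\ref{alg1:correctness} yields a node $p_{uv}$ with $\min\{c_H(u,p_{uv}),c_H(v,p_{uv})\}\geq \MF(u,v)$; both edges $(u,p_{uv})$ and $(v,p_{uv})$ lie in $H$, since they were added during the expansion step whose pivot is exactly $p_{uv}$ and which first sends $u$ and $v$ into different children. By the defining property of a maximum spanning tree, $w_T(u,v)$ is at least the bottleneck of \emph{any} $u$--$v$ path in $H$, in particular of the two-edge path through $p_{uv}$, so $w_T(u,v)\geq \MF(u,v)$.

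For the upper bound, I would first observe that every edge $(x,p)\in E(H)$ satisfies $c_H(x,p)\leq (1+\varepsilon)^2\,\MF(x,p)$: it has weight $(1+\varepsilon)c(S)$ for a cut $S$ that separates $x$ from the pivot $p$, and the ordering-by-$c(S_\cdot)$ reassignment rule (used in both the $f$-phase on $V_{\mathrm{small}}$ and the $g$-phase on $V_{\mathrm{big}}$) ensures $c(S)\leq c(S_x)\leq (1+\varepsilon)\MF(x,p)$, the last inequality being the guarantee of the approximate Min-Cut oracle. Then for any $u$--$v$ path $u=u_0,u_1,\ldots,u_k=v$ in $T$, the strong triangle inequality for max-flows yields
\[
\MF(u,v) \;\geq\; \min_i \MF(u_i,u_{i+1}) \;\geq\; \min_i \frac{c_H(u_i,u_{i+1})}{(1+\varepsilon)^2} \;=\; \frac{w_T(u,v)}{(1+\varepsilon)^2},
\]
which is stronger than the required $(1+\varepsilon)^3$ bound.

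The main obstacle I anticipate is the careful bookkeeping across the two reassignment branches, $V_{\mathrm{small}}$ via $f$ and $V''_{\mathrm{big}}$ via $g$: in \emph{each} case one must verify that the edge placed in $H$ incident to $x$ arises from a cut that truly separates $x$ from the expansion's pivot $p$, so that the three sources of slack --- the $(1+\varepsilon)$-factor of the approximate Min-Cut oracle, the possible suboptimality introduced by reassigning a node to a different cut, and the $(1+\varepsilon)$-scaling applied when inserting weights into $H$ --- never compound beyond $(1+\varepsilon)^3$. Since this is precisely what Claim~\ref{alg1:correctness} guarantees, plugging it into the two displayed inequalities closes the argument and gives a construction in $\tO(|E(H)|)$ time.
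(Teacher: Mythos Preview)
Your overall strategy is sound and is in fact more direct than the paper's. The paper does \emph{not} argue via the per-edge bound plus the strong triangle inequality; instead it completes $H$ to a graph $H'$ on $\binom{n}{2}$ edges by setting $c_{H'}(u,v):=\min\{c_H(u,p_{uv}),c_H(v,p_{uv})\}$ for every missing pair, invokes Claim~\ref{alg1:correctness} to see that $H'$ satisfies the hypotheses of Proposition~\ref{Proposition:extendingGH}, and then runs a somewhat delicate edge-swapping argument to show that some maximum spanning tree of $H'$ avoids all the new edges, hence equals $T_H$. Your route replaces that detour with two clean observations: the max-bottleneck optimality of a maximum spanning tree (applied to the two-edge path $u\!-\!p_{uv}\!-\!v$ in $H$) for the lower bound, and the strong triangle inequality along the $T$-path for the upper bound. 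This is shorter and arguably more transparent.

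There is, however, one slip in your per-edge upper bound. You assert $c_H(x,p)\le(1+\varepsilon)^2\MF(x,p)$ by appealing to the ``ordering-by-$c(S_\cdot)$'' rule in the $f$-phase on $V_{\mathrm{small}}$ and the $g$-phase on $V_{\mathrm{big}}$. But there is a third branch you have not covered: a node $v\in V_{\mathrm{big}}$ that is pulled \emph{out} of $V''_{\mathrm{big}}$ by the $f$-phase, i.e.\ is assigned $f(v)=u$ for some $u\in V_{\mathrm{small}}$. For such $v$ the defining test is only $c(S_u)\le(1+\varepsilon)\,c(S_{g(v)})$, not $c(S_u)\le c(S_v)$; chaining with $c(S_{g(v)})\le c(S_v)\le(1+\varepsilon)\MF(v,p)$ gives merely
\[
c_H(v,p)=(1+\varepsilon)\,c(S_{f(v)})\le(1+\varepsilon)^3\MF(v,p).
\]
Thus the per-edge bound is $(1+\varepsilon)^3$, not $(1+\varepsilon)^2$. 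Plugging this corrected constant into your displayed chain yields exactly $w_T(u,v)\le(1+\varepsilon)^3\MF(u,v)$, which is precisely the lemma's statement (so you do not get the sharper $(1+\varepsilon)^2$ you claimed, but you do get what is required). With this fix the argument is complete.
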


\begin{proof}
The algorithm is to simply pick a maximum spanning tree $T_{H}$ of the flow-emulator $H$.
In order to prove that $T_{H}$ is an approximate flow-equivalent tree of the input graph $G$, consider a complete graph $H'$ on $V$ that is constructed from $H$ by adding an edge between every pair of nodes $u,v$ that did not have an edge in $H$, with capacity $c(uv)=\min \{c_H(u,p_{uv}),c_H(v,p_{uv})\},$ for the special node $p_{uv}$ from Claim~\ref{alg1:correctness}.
This claim and the construction of $H'$ imply that for every pair $uv$ in $H'$,
$$
(1+\varepsilon)^3\MF_G(u,v)\geq c_{H'}(u,v)\geq \MF_G(u,v).
$$
We show that there exists a maximum spanning tree of $H'$ that does not pick the newly added edges. 
It will follow that $T_H$ is also a maximum spanning tree of $H'$ and thus, by Proposition~\ref{Proposition:extendingGH}, $T_H$ is a $(1+\varepsilon)^3$-approximate flow-equivalent tree of $G$, as required.

Now, let $T_{H'}$ be any maximum spanning tree of $H'$. In what follows we show that new edges could always be replaced by edges from $H$ in a way that does not decrease the weight of $T_{H'}$.
We call an edge $uv$ in $T_{H'}$ a new edge if it does not exist in $H$. For every new
edge $uv$ in $T_{H'}$ that satisfies, without loss of generality, that $c_{H'}(p_{uv},u)\geq c_{H'}(p_{uv},v)$ (the case $c_{H'}(p_{uv},u)\leq c_{H'}(p_{uv},v)$ is symmetric),
replace $uv$ with an edge in $H$ according to the first of the following rules that applies (note that at least one must be true).
\begin{enumerate}
\item 
If the edge $p_{uv}v$ is in $T_{H'}$, then replacing $uv$ with $p_{uv}u\in E(H)$ could only increase the weight of $T_{H'}$.
\item
If the edge $p_{uv}u$ is in $T_{H'}$, then replacing $uv$ with $p_{uv}v\in E(H)$ would keep the weight of $T_{H'}$ the same.
\item
If neither of the edges $p_{uv}u$ and $p_{uv}v$ is in $T_{H'}$, then
\begin{enumerate}
\item
If the path in $T_{H'}$ between $p_{uv}$ and $v$, denoted $P'_{pv}$, does not contain the edge $uv$, then we replace $uv$ with $p_{uv}u\in E(H)$, which could only increase the total weight of the tree.
\item
If $P'_{pv}$ does contain the edge $uv$, then
we replace $uv$ with $p_{uv}v\in E(H)$, keeping the total weight of the tree the same.
\end{enumerate}
\end{enumerate}
At the end, $T_{H'}$ remains only with edges that are in $H$. Thus, we concluded Lemma~\ref{Lemma:Flow_Eq}.
\end{proof}

\subsection{Handling Randomized Data Structures}
\label{sec:randomized}

To bound the depth of the recursion by $O(\log{n})$ we argued (using Lemma~\ref{Lemma:Tournament} about tournaments) that for a randomly chosen pivot $p$ it will be the case that for at least a $1/4$ of the targets $u$ the side of $u$ in the cut \emph{returned by our hypothetical data structure} is smaller. 
If the data structure we wish to use is randomized, there could be an issue because the returned cut could change each time we ask this query (or if we ask the query as $(p,u)$ or $(u,p)$), and the notions we use in the arguments are not well-defined.
Here we show how to avoid these issues by a more careful analysis that fixes the random bits used by the data structure.

First, for Theorem~\ref{thm:algapprox} we assume that the preprocessing step is deterministic and the queries are randomized, and note that it is enough to consider this case also for Theorem~\ref{Theorem:StaticApprox} that deals with offline $(1+\varepsilon)$-approximate minimum $st$-cut algorithms, called henceforth $(1+\varepsilon)MinCut(s,t)$.
Generate a sequence of $O(t_{\text{offline}}(m))$ random coins, and use these coins for every application of $(1+\varepsilon)MinCut(s,t)$, keeping the results consistent in the following way. For a pair $s,t$ queried by the algorithm, apply $(1+\varepsilon)MinCut(s,t)$ or $(1+\varepsilon)MinCut(t,s)$, according to increasing order of $s$ and $t$'s binary representation. By standard amplification techniques and union bound, we assume that for all pairs $s,t$, $(1+\varepsilon)MinCut(s,t)$ succeed.
Thus, the tournament in Lemma~\ref{Lemma:Tournament}  is well-defined, and this case is concluded.
Second, we assume the preprocessing step is randomized, and the queries are deterministic. In this case, by union bound over all $\binom{n}{2}$ pairs of distances, $(1+\varepsilon)MinCut(s,t)$ succeeds.
Finally, if both preprocessing and queries are randomized, generate first all random coins as described in the previous two cases, then apply union bound over the two of them.

\section{Algorithm for a Cut-Equivalent Tree}
\label{Section:Cut_Alg}

In this section we show a new algorithm for constructing a cut-equivalent tree
for graphs from a minor-closed family $\calF$ (for example all graphs),
given a Min-Cut data structure for this family $\calF$. 
For ease of exposition, we first assume that the data structure supports 
also Max-Flow queries (reporting the value of the cut) in time $t_{mf}(m)$; 
we will later show that Min-Cut queries suffice. 

\begin{theorem}\label{theorem:accel_alg}
Given a capacitated graph $G\in\calF$ on $n$ nodes and $m$ edges,
and access to a deterministic Min-Cut data structure for $\calF$
with preprocessing time $t_p(\cdot)$ and output sensitive time $t_{mc}(\cdot)$, 
one can construct, with high probability,
a cut-equivalent tree for $G$ in time $\tO(t_p(m)+m\cdot t_{mc}(m))$.
Furthermore, it suffices that the data structure's queries
are restricted to a fixed source. 
\end{theorem}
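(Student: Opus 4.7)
My plan is to carry over the recursive partitioning strategy of Section~\ref{Our_Tree_Like_Data_Structure} to the exact setting, now combined with the Gomory--Hu auxiliary-graph trick. First I would perturb the edge capacities by tiny distinct amounts so that every minimum cut in every auxiliary graph is unique; this removes ambiguity from the reassignment and lets the submodular uncrossing lemma be used cleanly. At each recursive call on a super-node $V'$ I would build the auxiliary graph $G_{V'}$ by contracting all other super-nodes; since $\calF$ is minor-closed, $G_{V'}\in\calF$, so the assumed data structure can be preprocessed on it. Then pick a random pivot $p\in V'$ and ask a single-source Min-Cut query from $p$ to every other node $u\in V'$, obtaining cuts $(S_u,V\setminus S_u)$ whose weights are recovered from the returned edge sets (so Max-Flow queries are not actually needed).

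The partition of $V'$ is formed essentially as in the approximate algorithm, but with the $(1+\eps)$-slack removed: define $V_{small}=\{u\in V'\setminus\{p\}:|S_u\cap V'|\leq|V'|/2\}$, sort by cut value, and greedily assign each $v$ to the smallest-value $S_u$ that contains it. Because the minimum cuts are unique, standard uncrossing makes the resulting parts themselves exact minimum $pu$-cuts in $G_{V'}$, so the tree edges we output correspond to true graph cuts (the cut-equivalent property, not merely the flow-equivalent one); this is the point where the exact case actually gains over the approximate one. The remaining set $V''_{big}$ is recursed on separately, with size at most $\tfrac34|V'|$, and the tournament argument (Lemma~\ref{Lemma:Tournament}) with random $p$ bounds the recursion depth by $O(\log n)$ with high probability.

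The main obstacle is a sharp per-level running-time bound. In the unweighted case a Lemma-5-of-BHKP07-style argument (already used in Lemma~\ref{lemma:NoDouble}) gives total edge count $O(m)$ across all cuts produced in one level; the delicate part is the weighted case, where a naive implementation of contraction pays $\Omega(n^2)$ per level even when $m\ll n^2$. I would introduce \emph{capacitated auxiliary graphs}, a compressed representation of each $G_{V'}$ whose total size across one recursion level remains $\tO(m)$, so that the preprocessing cost $t_p(\cdot)$ on the $G_{V'}$'s in one level and the query/reassignment cost both telescope to $\tO(m\cdot t_{mc}(m))$ per level. Reassignment in time proportional to reported cut edges rather than $|S_u|$ would be handled by dynamic-connectivity data structures exactly as in the proof of Claim~\ref{Claim:dynamic}.

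Finally, the single-source restriction is compatible with the algorithm by construction, since every query in one call is from the fixed pivot $p$ of the current super-node. A failed pivot is verified in $\tO(1)$ queries using the tournament guarantee, and by Lemma~\ref{Lemma:Tournament} a random pivot succeeds with probability at least $1/2$, so $O(\log n)$ trials suffice; a union bound over the $O(n)$ super-nodes processed across the whole recursion then gives the desired total time $\tO(t_p(m)+m\cdot t_{mc}(m))$ with high probability.
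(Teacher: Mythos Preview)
Your proposal captures the high-level architecture correctly (perturbation, random pivots, capacitated auxiliary graphs, dynamic connectivity), but it has a genuine running-time gap for weighted graphs that the paper's proof handles with a substantially different inner loop.

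The problematic step is ``ask a single-source Min-Cut query from $p$ to every other node $u\in V'$.'' In the weighted setting, the total number of edges in all these cuts need not be $O(m)$. Concretely, take a path $p=v_0,v_1,\dots,v_{n-1}$ with strictly decreasing weights, and add tiny-weight edges from $p$ to each $v_i$ ($i\ge 2$). Then the minimum $pv_i$-cut is $\{v_i,\dots,v_{n-1}\}$ and has $\Theta(n-i)$ crossing edges, so $\sum_i k_{pv_i}=\Theta(n^2)$ while $m=\Theta(n)$. Capacitated auxiliary graphs do not help here: at the top level the CAG is $G$ itself, and Lemma~\ref{Lemma:TotalSizeCags} bounds only the sizes of the $G_{V'}$'s, not the total output of the Min-Cut queries. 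The BHKP07 Lemma~5 / Lemma~\ref{lemma:NoDouble} argument you cite bounds total \emph{capacity}, which equals edge count only in unweighted graphs. So your per-level cost does not telescope to $\tO(m\cdot t_{mc}(m))$.

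The paper's expansion step avoids this by \emph{not} querying all targets. It first identifies $V_i^{\le 1/2}(p)$ (via Max-Flow values, or in Section~\ref{Section:Unnecessariness} via sampling batches of $O(\log n)$ candidates and testing each with Claim~\ref{Claim:dynamic}), and then draws targets $u_1,u_2,\dots$ from $U_p\cap V_i^{\le 1/2}(p)$ one at a time, stopping as soon as $|U_p|\le \tfrac{7}{8}n_i$. The running-time bound then rests on a path-hitting argument (Claim~\ref{Claim:main_technical}): along any root-to-$p$ path in $\TG_i$, only $O(\log n)$ edges are ever ``hit'' by the chosen targets, which implies each node and edge of the CAG is touched $O(\log n)$ times, giving time $\tO(|G_i|)$ per expansion (Lemma~\ref{Lemma:SpreadTime}). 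Your proposal has no analogue of this early-stopping plus path-hitting mechanism, and the greedy reassignment from Section~\ref{Our_Tree_Like_Data_Structure} is not needed in the exact setting anyway, since by Assumption~\ref{Assumption:Uniqueness} the sets $S_u$ are laminar. Finally, your claim that ``a failed pivot is verified in $\tO(1)$ queries'' is not right as stated: deciding whether $|V_i^{\le 1/2}(p)|\ge n_i/4$ requires information about $|S'_u|$ for many $u$, which is exactly why Section~\ref{Section:Unnecessariness} resorts to sampling.
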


By combining our algorithm with the Min-Cut data structure of Arikati, Chaudhuri, and Zaroliagis~\cite{ACZ98} for graphs with treewidth bounded by (a parameter) $t$,
which attains $t_p=n\log n \cdot 2^{2^{O(t)}}$ and $t_{mc}=t_{mf}=2^{2^{O(t)}}$, 
we immediately get the first near-linear time construction of
a cut-equivalent tree for graphs with bounded treewidth, as follows.

\begin{corollary} [Expanded Corollary~\ref{cor:CutEquivTWintro}] 
\label{cor:CutEquivTW}
Given a graph $G$ with $n$ nodes and treewidth at most $t$,
one can construct, with high probability,
a cut-equivalent tree for $G$ in time $\tO(2^{2^{O(t)}} n)$.
\end{corollary}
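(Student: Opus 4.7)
The plan is to derive this corollary directly by instantiating Theorem~\ref{theorem:accel_alg} with the data structure of Arikati, Chaudhuri, and Zaroliagis~\cite{ACZ98}. First I would verify the hypotheses of Theorem~\ref{theorem:accel_alg}: graphs of treewidth at most $t$ form a minor-closed family $\calF$ (treewidth cannot increase under taking minors, since contracting or deleting edges preserves the existence of a tree decomposition of width $\le t$), so the theorem's ``minor-closed family'' requirement is met. I would also note the standard fact that a graph with $n$ nodes and treewidth at most $t$ has $m = O(tn)$ edges, so $m$ and $n$ are equal up to a factor of $t$, which will be absorbed into the $2^{2^{O(t)}}$ term at the end.

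Next I would invoke the \GHU-style \cite{ACZ98} data structure, which, for any graph of treewidth at most $t$, preprocesses in time $t_p(m) = n \log n \cdot 2^{2^{O(t)}}$ and answers Min-Cut queries (even with a fixed source) in output-sensitive amortized time $t_{mc}(m) = 2^{2^{O(t)}}$. This data structure is deterministic, as required. I would then plug these bounds into the conclusion of Theorem~\ref{theorem:accel_alg}, yielding a construction time of
\[
\tO\!\left(t_p(m) + m \cdot t_{mc}(m)\right)
= \tO\!\left(n \log n \cdot 2^{2^{O(t)}} + O(tn) \cdot 2^{2^{O(t)}}\right)
= \tO\!\left(2^{2^{O(t)}}\, n\right),
\]
where the factor $t$ is folded into the double-exponential. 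The success probability is that of Theorem~\ref{theorem:accel_alg}, namely high probability in $n$.

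There is essentially no new obstacle here, since Theorem~\ref{theorem:accel_alg} already does the heavy lifting; the role of this corollary is to exhibit a nontrivial setting where the preconditions are met. The only mild subtlety I would double-check is that the ACZ data structure is used within the minor-closed family $\calF$ of treewidth-$t$ graphs in a way consistent with the algorithm of Theorem~\ref{theorem:accel_alg}: that algorithm performs queries on auxiliary graphs obtained from $G$ by edge contractions, and since contractions do not increase treewidth, every auxiliary graph the algorithm queries still lies in $\calF$ and is therefore handled within the stated bounds. Once this is observed, the corollary follows.
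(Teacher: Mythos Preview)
Your proposal is correct and takes essentially the same approach as the paper, which simply states that the corollary follows immediately by plugging the ACZ data structure (with $t_p = n\log n \cdot 2^{2^{O(t)}}$ and $t_{mc} = t_{mf} = 2^{2^{O(t)}}$) into Theorem~\ref{theorem:accel_alg}. Your additional verifications --- that treewidth-$t$ graphs are minor-closed so that the auxiliary graphs remain in $\calF$, and that $m = O(tn)$ --- are exactly the details the paper leaves implicit.
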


The rest of this section is devoted to proving Theorem~\ref{theorem:accel_alg}.
Our analysis relies on the classical Gomory-Hu algorithm~\cite{GH61}, 
hence we start by briefly reviewing it (largely following~\cite{AKT20})
with a bit more details than in Section~\ref{sec:overview}.

\paragraph{The Gomory-Hu algorithm.}
This algorithm constructs a cut-equivalent tree $\T$ in iterations. 
Initially, $\T$ is a single node associated with $V$ (the node set of $G$), 
and the execution maintains the invariant that $\T$ is a tree; 
each tree node $i$ is a \emph{super-node},
which means that it is associated with a subset $V_i\subseteq V$; 
and these super-nodes form a partition $V=V_1 \sqcup\cdots\sqcup V_l$.
Each iteration works as follows:
pick arbitrarily two graph nodes $s,t$ 
that lie in the same tree super-node $i$, i.e., $s\neq t\in V_i$,
then construct from $G$ an auxiliary graph $G'$
by merging nodes that lie in the same connected component of  $\T\setminus\set{i}$, 
and invoke a \MF algorithm to compute in $G'$ a minimum $st$-cut, denoted $C'$.
(For example, if the current tree is a path on super-nodes $1,\ldots,l$, 
then $G'$ is obtained from $G$ by merging $V_1\cup\cdots\cup V_{i-1}$
into one node and $V_{i+1}\cup\cdots\cup V_l$ into another node.)
The submodularity of cuts ensures that this cut is also 
a minimum $st$-cut in the original graph $G$,
and it clearly induces a partition $V_i=S\sqcup T$ with $s\in S$ and $t\in T$. The algorithm then modifies $\T$ by splitting super-node $i$
into two super-nodes, one associated with $S$ and one with $T$,
that are connected by an edge whose weight is the value of the cut $C'$,
and further reconnecting each $j$ which was a neighbor of $i$ in $\T$ 
to either super-node $S$ or $T$, 
depending on which side of the minimum $st$-cut $C'$ contains $V_j$.

The algorithm performs these iterations until all super-nodes are singletons,
and then $\T$ is a weighted tree with effectively the same node set as $G$.
It is proved in~\cite{GH61} that for every $s,t\in V$,
the minimum $st$-cut in $\T$, viewed as a bipartition of $V$,
is also a minimum $st$-cut in $G$, and of the same cut value.
We stress that this property holds regardless of the choices, 
made at each iteration, of two nodes $s\neq t\in V_i$.

\subsection{The Algorithm for General Capacities}
We turn out attention to proving Theorem~\ref{theorem:accel_alg}.
Let $G=(V,E,c)$ be the input graph. 
We shall make the following assumption,
justified by a standard random-perturbation argument that we provide for completeness
in Section~\ref{sec:perturb}. 

\begin{assumption}\label{Assumption:Uniqueness}
The input graph $G$ has a single cut-equivalent tree $\TG$, with $n-1$ distinct edge weights.
\footnote{Even though the perturbation algorithm is Monte Carlo,
  our algorithm can still be made Las Vegas since
  if a random perturbation fails Assumption~\ref{Assumption:Uniqueness},
  then our algorithm could encounter two crossing cuts,
  but it can identify this situation and restart the algorithm with another perturbation.
}
\end{assumption}

\subsection{Overview of the Algorithm}\label{Section:Algorithm_Overview}
At a very high level, our algorithm accelerates the Gomory-Hu algorithm 
by performing every time a batch of Gomory-Hu steps instead of only one step.
Similarly to the actual Gomory-Hu algorithm, 
our algorithm is iterative and maintains a tree $\T$ of super-nodes, 
which means that every tree node $i$ is associated with $V_i\subseteq V$, 
and these super-nodes form a partition $V=V_1 \sqcup\cdots\sqcup V_l$. 
This tree $\T$ is initialized to have a single super-node corresponding to $V$,
and since it is modified iteratively, 
we shall call $\T$ the \emph{intermediate tree}. 
Eventually, every super-node is a singleton
and the tree $\T$ corresponds to $\TG$.

In a true Gomory-Hu execution,
every iteration partitions some super-node $i$ into exactly two super-nodes, say $V_i=S\sqcup T$, which are connected by an edge
according to the minimum cut between a pair $s\in S, t\in T$
that is computed in an auxiliary graph.
In contrast, our algorithm partitions a super-node $i$ into multiple super-nodes,
say $V_i=U_p\sqcup V_{i,1}\sqcup\dots\sqcup V_{i,d}$,
that are connected in a tree topology
where the last edge in the path from $U_p$ to each $V_{i,j}$, $j\in[d]$,
is set according to the minimum cut between a pivot $p\in U_p$
and a corresponding $u_{i,j}\in V_{i,j}$,
where all these cuts are computed in the same auxiliary graph.
We call this an \emph{expansion step}
and super-node $U_p$ is called the \emph{expansion center};
see Figure~\ref{Figs:Spread} for illustration.
Each iteration of our algorithm applies such an expansion step
to every super-node in the intermediate tree $\T$.
These iterations can also be viewed as recursion, 
and thus each expansion step occurs at a certain recursion depth, which will be bounded by our construction.

To prove that our algorithm is correct, we will show that every expansion step
corresponds to a valid sequence of Gomory-Hu steps. 
Just like in the Gomory-Hu algorithm,
our algorithm relies on minimum-cut computations in auxiliary graphs,
although it will make multiple queries on the same auxiliary graph. 
This alone does not guarantee overall running time $\tO(m)$, 
because in some scenarios the total size of all auxiliary graphs at a single depth is much bigger than $m$.
For example, if $\TG$ consists of two stars of size $n/3$ connected by a path of length $n/3$,
and $G$ is similar but has in addition all possible edges between the stars (with low weight),
the total size of all auxiliary graphs would be $\Omega(n^3)$. 
We overcome this obstacle using a \emph{capacitated auxiliary graph} (\CAG),
which is the same auxiliary graph as in the Gomory-Hu algorithm, 
but with parallel edges merged into a single edge with their total capacity. 
We will show (in Lemma~\ref{Lemma:TotalSizeCags})
that the total size of all \CAGs at a single depth is linear in $m$.

Another challenge is to bound the recursion depth by $O(\log n)$. 
A partition in the Gomory-Hu algorithm might be unbalanced,
where in our algorithm, this issue comes into play by a poor choice of a pivot;
for example, in a star graph with edge-capacities $1,\ldots,n-1$,
if the pivot $p$ is the leaf incident to the edge of capacity $1$,
then the minimum cut between $p$ and any other node
is the same $(\{p\},V\setminus \{p\})$, 
giving little information on how to partition $V$ and make significant progress.
Observe however that a random pivot would work much better in this example;
more precisely, a set of $O(\log n)$ random pivots contains,
with high probability, at least one pivot $p$ for which
the minimum cuts between $p$ and each of the other nodes
will partition $V$ into super-nodes that are all constant-factor smaller, thus our expansion step will decrease the super-node size by a constant factor. 
But notice that even if a pivot $p$ is given, 
we still need to bound the time it takes to partition the super-node.
Our algorithm repeatedly computes a minimum cut between $p$ and some other node, 
such that the time spent on computing this minimum cut
is proportional to its progress in reducing $\card{V_i}$,
until $\Omega(\card{V_i})$ nodes are separated away from $V_i$.
Altogether, all these minimum cuts (from a single pivot $p$)
take time that is near-linear in the size of the corresponding \CAG. 
It will then follow that the total time of all expansion steps
at a single depth is near-linear in the total size of their CAGs,
which as mentioned above is linear in $m$,
and finally since the depth is $O(\log n)$,
the overall time bound is $\tO(m)$.

\subsection{Full Algorithm}
\label{Section:Cut_Alg_Full}

To better illustrate our main ideas, 
we now present our algorithm with a slight technical simplification
of employing both Min-Cut and Max-Flow queries. 
After analyzing its correctness and running time
in Section~\ref{Section:Cut_Alg_Analysis}, 
we will show that Max-Flow queries are not necessary,
in Section~\ref{Section:Unnecessariness}.

The algorithm initializes $\T$ as a single super-node associated with the entire node set $V$,
and ends when all super-nodes in $\T$ are singletons,
supposedly corresponding to the cut-equivalent tree $\TG$.
At every recursion depth in between,
the algorithm performs an expansion step in every non-singleton super-node.
The expansion of super-node $i\in \T$ of size $n_i=\card{V_i} \geq 2$,
whose \CAG is denoted $G_i$, works as follows.
Pick a pivot node $p\in V_i$ uniformly at random,
and for every node $u\in V_i\setminus \{p\}$ let $(S_u,V(G_i)\setminus S_u)$ be the minimum $up$-cut in $G_i$, and let $S'_u=V_i\cap S_u$.
In order to compute $\card{S'_u}$,
create in a preprocessing step a copy $\tilde{G}_i$ of $G_i$,
and assuming its edge-capacities are integers (by scaling),
connect (in $\tilde{G}_i$) the pivot $p$
to all other nodes $u\in V_i\setminus \{p\}$
by new edges of small capacity $\delta=1/n^3$.
Note that $\tilde{G}$ depends on $p$ but not on $u$,
hence it is preprocessed once per pivot $p$ then used for multiple nodes $u$. 
Then for every node $u\in V_i\setminus \{p\}$ compute
$$
h_p(u):=[\MFV_{\tilde{G}_i}(u,p)-\MFV_{G_i}(u,p)] / \delta, 
$$
which clearly satisfies $h_p(u)=\card{S'_u}$, 
and then compute the set
$$
  V_i^{\leq 1/2}(p) := \set{ u\in V_i\setminus\{p\} : h_p(u)\leq n_i/2 }.
$$
Now repeat picking random pivots until finding a pivot $p$ for which $\card{V_i^{\leq 1/2}(p)}\geq n_i/4$.

Next, initialize $U_p:=V_i$, 
pick uniformly at random a node $u\in U_p\cap V_i^{\leq 1/2}(p)$, 
and enumerate the edges in the cut $(S_u, V(G_i)\setminus S_u)$. 
Partition $U_p$ into two super-nodes, $U_p\cap S_u$ and $U_p\setminus S_u$, 
connected by an edge of capacity $\MFV(u,p)$,
then reconnect every edge previously connected to $U_p$ in $\T$
to either $U_p\cap S_u$  or $U_p\setminus S_u$
according to the cut $(V(G_i)\setminus S_u, S_u)$. 
Repeat the above,
i.e., pick another node $u\in U_p\cap V_i^{\leq 1/2}(p)$ and so forth,
as long as $\card{U_p}> 7n_i/8$
(we shall prove that such a node $u$ always exists),
calling these nodes $u_1,\ldots,u_d$ in the order they are picked by the algorithm;
when $\card{U_p}\le 7n_i/8$ is reached, conclude the current expansion step. 

Recall that the algorithm performs such an expansion step to every non-singleton super-node (i.e., $n_i\geq 2$) at the current depth,
and only then proceeds to the next depth. 
The base case $n_i=1$ can be viewed as returning a trivial tree on $V_i$. 

\ifprocs
\begin{figure*}[!ht]
\else
\begin{figure}[!ht]
\fi
       \includegraphics[width=1.0\textwidth,left]{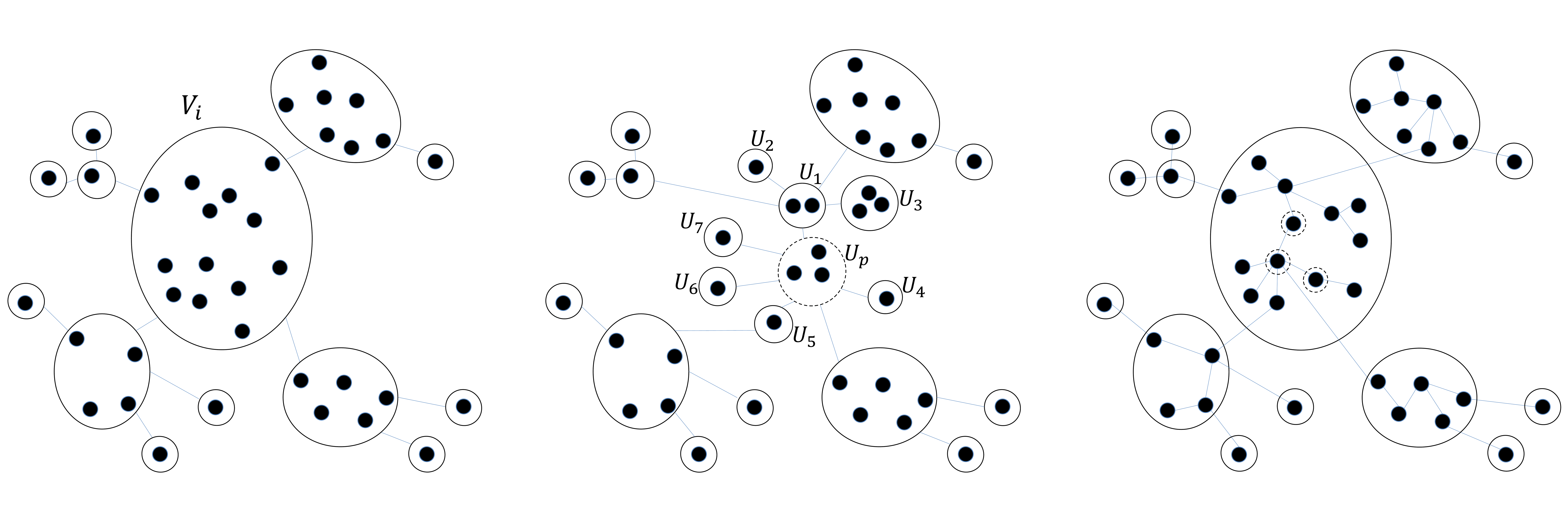}
   \caption[-]{
     The changes to $\T$ by our algorithm. 
     Left: before expansion step of $V_i$.
     Middle: after expansion step with expansion center $U_p$ (dashed),
     and the subtree of $\T$ corresponds to partition $V_i = \bigsqcup_{j=1}^{7}U_j\sqcup U_p$.
     Right: when the algorithm terminates.
   }
   \label{Figs:Spread}
\vspace{.1in}\hrule
\ifprocs
\end{figure*}
\else
\end{figure}
\fi

\subsection{Analysis}
\label{Section:Cut_Alg_Analysis}
We start by showing that whenever our algorithm reports a tree,
there exists a Gomory-Hu execution that produces the same tree.
Notice that super-nodes at the same depth are disjoint,
hence an expansion of one of them does not affect the other super-nodes,
and the result of these expansion steps is the same regardless of
whether they are executed in parallel or sequentially in any order.

\begin{lemma}[Simulation by Gomory-Hu Steps] 
\label{Lemma:AlgCorrectness}
Suppose there is a sequence of Gomory-Hu steps producing tree $\T^{(j)}$,
and that an expansion step performed to $V_i\in \T^{(j)}$ produces $\T^{(j+1)}$.
Then there is a sequence of Gomory-Hu steps that simulates also
this expansion step and produces $\T^{(j+1)}$.
\end{lemma}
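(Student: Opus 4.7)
The plan is to induct on the number $d$ of sub-steps performed in the expansion of $V_i$, and exhibit, for each sub-step, a single Gomory-Hu step that produces the same local change to the intermediate tree. For the first sub-step, observe that the CAG $G_i$ is obtained from $G$ by contracting exactly the connected components of $\T^{(j)}\setminus\{i\}$, which is precisely the auxiliary graph a Gomory-Hu execution would use for the pair $(p, u_1) \in V_i$; merging parallel edges into their total capacity does not alter any cut or its value. Hence a Gomory-Hu step on $(p, u_1)$ returns the same cut $(S_{u_1}, V(G_i)\setminus S_{u_1})$ as the first sub-step, performs the same split of $V_i$ into $V_{i,1} = V_i \cap S_{u_1}$ and $U_p^{(1)} = V_i \setminus S_{u_1}$, uses the same edge capacity $\MFV(u_1, p)$, and reroutes the former neighbors of $V_i$ in $\T$ by the same rule.

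For the inductive step, suppose the first $j-1$ sub-steps have been simulated, so that $p$ lies in a super-node $U_p^{(j-1)} = V_i \setminus \bigcup_{\ell<j} S_{u_\ell}$, with $V_{i,1}, \ldots, V_{i,j-1}$ split off as siblings connected by edges of the correct capacities. The auxiliary graph $G_i^{(j)}$ used by the next Gomory-Hu step for the pair $(p, u_j)$ is obtained from $G_i$ by additionally contracting each $V_{i,\ell}$, $\ell < j$, into a single node. I need to show that the minimum $u_j p$-cut in $G_i^{(j)}$ coincides with $S_{u_j}$ as originally computed in $G_i$, so that the Gomory-Hu step produces exactly the same split and edge capacity as sub-step $j$.

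The crux, and the step I expect to be the main obstacle, is a non-crossing claim: for every $\ell < j$, the set $V_{i,\ell}$ lies entirely on one side of $S_{u_j}$, so that $S_{u_j}$ projects to a valid cut in $G_i^{(j)}$. Under Assumption~\ref{Assumption:Uniqueness}, every minimum $st$-cut in $G$ (and hence in any contraction of $G$) is unique, so a standard submodular uncrossing argument forces $S_{u_\ell}$ and $S_{u_j}$ to be non-crossing, leaving only the four possibilities $S_{u_\ell} \subseteq S_{u_j}$, $S_{u_j} \subseteq S_{u_\ell}$, $S_{u_\ell} \cap S_{u_j} = \emptyset$, or $S_{u_\ell} \cup S_{u_j} = V(G_i)$. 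The last case is ruled out because $p \notin S_{u_\ell} \cup S_{u_j}$, and the second because $u_j \in U_p^{(j-1)}$ avoids $S_{u_\ell}$ by induction; in the remaining two cases $V_{i,\ell} \subseteq S_{u_\ell}$ lies wholly inside $S_{u_j}$ or wholly outside it. The projected cut then has the same capacity as $S_{u_j}$ in $G_i$, and since $G_i^{(j)}$ is a contraction of $G_i$ its minimum $u_j p$-cut value is no smaller, so equality holds and uniqueness forces the Gomory-Hu step to return exactly $S_{u_j}$. Iterating this for $j = 1, \ldots, d$ produces $\T^{(j+1)}$.
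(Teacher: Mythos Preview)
Your inductive framework and the core non-crossing argument are correct, and they lead to the same conclusion as the paper, but the paper's proof takes a shorter route that you may find instructive to compare with.

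You argue combinatorially: show that $S_{u_\ell}$ and $S_{u_j}$ are non-crossing (via submodularity plus Assumption~\ref{Assumption:Uniqueness}), then do a case analysis to conclude $S_{u_j}$ projects to a cut in the new auxiliary graph of the same capacity, whence it must be the unique minimum $u_jp$-cut there. The paper instead observes that both $G_i$ and the auxiliary graph for $U_p$ in $\T^{(j),k}$ arise from valid intermediate Gomory--Hu trees, so the minimum $u_{k+1}p$-cut in each is a minimum $u_{k+1}p$-cut in $G$; uniqueness in $G$ then forces them to coincide, with no explicit uncrossing needed. Your approach is more hands-on and self-contained; the paper's buys brevity by appealing once to the Gomory--Hu correctness lemma rather than re-deriving the laminar structure.

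One imprecision worth tightening: you describe $G_i^{(j)}$ as ``obtained from $G_i$ by additionally contracting each $V_{i,\ell}$, $\ell<j$, into a single node.'' In fact the Gomory--Hu auxiliary graph for $U_p^{(j-1)}$ contracts each connected component of $\T^{(j),j-1}\setminus\{U_p^{(j-1)}\}$, and such a component may merge several $V_{i,\ell}$'s together with some of the original contracted nodes of $G_i$. Your non-crossing argument actually establishes the stronger fact that each full set $S_{u_\ell}$ (not just $V_{i,\ell}$) lies on one side of $S_{u_j}$, and one can check inductively that every such component equals either a single original contracted node or some maximal $S_{u_\ell}$; so your argument does cover what is needed, but the statement of what is being contracted should be adjusted accordingly.
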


\begin{proof}
Assume there is a truncated execution of the Gomory-Hu algorithm that produces $\T^{(j)}$, we describe next a sequence of Gomory-Hu algorithm's steps starting with $\T^{(j)}$ that produces $\T^{(j+1)}$.
Recall that to produce $\T^{(j+1)}$, our algorithm partitions a super-node $V_i\in \T^{(j)}$ into $U_p\sqcup V_{i,1}\sqcup\dots\sqcup V_{i,d}$,
where the last edge in the path from super-node $U_p\in \T^{(j+1)}$ 
to each super-node $V_{i,k}\in \T^{(j+1)}$ for $k\in[d]$ was set according to the minimum cut
between a pivot $p\in U_p$ and a corresponding $u_{i,k}\in U_{i,k}$, at the time of the partition,
and these minimum cuts are computed in the same auxiliary graph $G_i$. 
Let $u_{i,1},\ldots, u_{i,d}$ be in the order they are picked by the algorithm,
thus 
if the path between $U_p$ and $u_{i,a}$ in $\T^{(j+1)}$ contains $U_{i,b}$, 
then $a \leq b$
(We may omit the subscript $i$ when it is clear from the context.)

The Gomory-Hu steps are as follows.
Starting with $\T^{(j)}$, for each $k=1,\ldots,d$,
execute a Gomory-Hu step with the pair $u_k,p$ from super-node $U_p$ in $\T$
(we will shortly show that indeed $u_k,p\in U_p$ at that stage),
and denote the resulting tree by $\T^{(j),k}$.
By convention, $\T^{(j),0} := \T^{(j)}$.

Informally, one may ask why can we carry out 
multiple Gomory-Hu steps using the same auxiliary graph
and circumvent the sequential nature of the Gomory-Hu algorithm? 
The answer stems from the Gomory-Hu analysis,
that for every $s,t\in V_i$ the minimum $st$-cut in $G_i$
is also a minimum $st$-cut in $G$,
and from Assumption~\ref{Assumption:Uniqueness},
which guarantees that the minimum $st$-cuts in $G$ are unique,
and thus do not cross each other.
Therefore these cuts may be found all in the same auxiliary graph, 
and we only need to verify the corresponding Gomory-Hu steps. 

More formally, we prove by induction that for every $k\in [0,..,d]$,
there is a sequence of Gomory-Hu steps that produces $\T^{(j),k}$. 
The base case $k=0$ holds because of our initial assumption
that $\T^{(j)}$ can be produced by a sequence of Gomory-Hu steps.
For the inductive step,
assume that $\T^{(j),k}$ can be produced by a sequence of Gomory-Hu steps.
By the analysis of the Gomory-Hu algorithm,
for every pair of nodes $s,t\in U_p$ in $\T^{(j),k}$,
the minimum $st$-cut in the auxiliary graph of $U_p$ in $\T^{(j),k}$ is a minimum $st$-cut in $G$,
and this is correct in particular for the pair our algorithm picks, $u_{k+1},p$.
By the same reasoning, the minimum $u_{k+1}p$-cut in $G_i$ is also a minimum $pu_{k+1}$-cut in $G$.
By Assumption~\ref{Assumption:Uniqueness}, these two cuts are identical, and hence the partition of $U_p$ in $\T^{(j),k}$ that our algorithm performs and the reconnection of the subtrees that it does (based on the minimum $u_{k+1}p$-cut in $G_i$) is exactly the same as the Gomory-Hu execution would do (based on the minimum $u_{k+1}p$-cut in the auxiliary graph of $U_p$ in $\T^{(j),k}$), resulting in $\T^{(j),k+1}$.
Lemma~\ref{Lemma:AlgCorrectness} now follows from the case $k=d$.
\end{proof}

The next corollary follows from Lemma~\ref{Lemma:AlgCorrectness}
immediately by induction.

\begin{corollary}
There is a Gomory-Hu execution that outputs the same tree as our algorithm,
which by the correctness of the Gomory-Hu algorithm and Assumption~\ref{Assumption:Uniqueness}, is the cut-equivalent tree $\TG$.
\end{corollary}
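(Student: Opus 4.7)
The plan is a straightforward induction on the total number of expansion steps performed by the algorithm, using Lemma~\ref{Lemma:AlgCorrectness} as the single inductive engine. Let $\T^{(0)}, \T^{(1)}, \ldots, \T^{(N)}$ denote the sequence of intermediate trees produced by the algorithm, where $\T^{(0)}$ is the initial single super-node associated with $V$, $\T^{(N)}$ is the final tree (all super-nodes are singletons), and each $\T^{(j+1)}$ is obtained from $\T^{(j)}$ by applying one expansion step to some super-node $V_i\in\T^{(j)}$.

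First, I would justify that this linearization is legitimate. The algorithm is described as performing, at each recursion depth, an expansion step \emph{in every} non-singleton super-node; however, since super-nodes at the same depth form a partition of $V$ and are pairwise disjoint, an expansion step on one super-node $V_i$ does not modify any other super-node $V_{i'}$ (the auxiliary graph $G_{i'}$ and the \CAG\ it comes from depend only on $\T$ outside of $V_i$ in a way that is unchanged when we refine $V_i$). Hence the expansion steps at a single depth can be carried out in any order, and in particular sequentially, yielding a well-defined sequence $\T^{(0)},\ldots,\T^{(N)}$ where each transition is a single expansion step.

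Next, the induction itself. The base case is trivial: $\T^{(0)}$ is the initial tree of the Gomory-Hu algorithm, produced by zero Gomory-Hu steps. For the inductive step, suppose $\T^{(j)}$ is produced by some sequence $\Sigma_j$ of Gomory-Hu steps. Applying Lemma~\ref{Lemma:AlgCorrectness} to the expansion step that transforms $\T^{(j)}$ into $\T^{(j+1)}$ yields a sequence $\Sigma'_{j+1}$ of Gomory-Hu steps that, starting from $\T^{(j)}$, produces $\T^{(j+1)}$. Concatenating, $\Sigma_{j+1} := \Sigma_j \circ \Sigma'_{j+1}$ is a Gomory-Hu sequence producing $\T^{(j+1)}$. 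Taking $j=N$ gives a Gomory-Hu execution whose output equals the algorithm's output.

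Finally, I would conclude by invoking the correctness of the Gomory-Hu algorithm: any complete Gomory-Hu execution (i.e., one that continues until all super-nodes are singletons) outputs a cut-equivalent tree of $G$. By Assumption~\ref{Assumption:Uniqueness}, $G$ has a \emph{unique} cut-equivalent tree $\TG$, so the output of our algorithm must coincide with $\TG$. I do not foresee a real obstacle here beyond correctly articulating the parallel-to-sequential reduction of expansion steps at the same depth; the uniqueness assumption does the work of pinning down the output, and Lemma~\ref{Lemma:AlgCorrectness} encapsulates all the nontrivial content.
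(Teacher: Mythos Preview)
Your proposal is correct and matches the paper's approach exactly: the paper states that the corollary ``follows from Lemma~\ref{Lemma:AlgCorrectness} immediately by induction,'' and your write-up is a faithful unpacking of that one line, including the parallel-to-sequential linearization of expansion steps at the same depth (which the paper also notes just before the corollary). There is nothing to add.
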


We proceed to prove the time bound stated in Theorem~\ref{theorem:accel_alg}. Our strategy is to bound the running time of a single expansion step in proportion to the size of the corresponding \CAG,
and then bound the total size, as well as the construction time,
of all \CAGs at a single depth of the recursion.
Finally, we will bound the recursion depth by $O(\log n)$,
to conclude the overall time bound stated in Theorem~\ref{theorem:accel_alg}.

\begin{lemma}\label{Lemma:SpreadTime}
Assuming $t_p(m)=\tO(m)$ and $t_{mc}(m)=\tO(1)$,
the (randomized) running time of a single expansion step on $V_i$,
including constructing the children \CAGs, and preprocessing it for queries, is near-linear in the size of $G_i$
with probability at least $1-1/n^3$. 
\end{lemma}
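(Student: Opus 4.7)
My plan is to account for the cost of the expansion step in three layers: the number of pivot retries, the cost of a single attempt (preprocessing plus computing all $h_p(u)$), and the cost of the main loop that ultimately produces the children \CAGs (and their Min-Cut preprocessing).

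First, I would apply Corollary~\ref{Corollary:Tournament} to the cut family answered by the Min-Cut data structure on $G_i$ with $V'_F := V_i$. This gives that at least $n_i/2$ nodes of $V_i$ are ``effective'' pivots in the sense that $\card{V_i^{\leq 1/2}(p)}\geq n_i/4$. Hence a uniformly random pivot succeeds with probability at least $1/2$, and by a standard Chernoff/union bound, $3\log n$ rounds of random pivot picking produce an effective pivot with probability at least $1-1/n^3$. This is the only source of randomness and of failure, so the rest of the analysis only needs to bound the deterministic cost conditioned on an effective pivot.

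Second, for each pivot attempt I would bound the cost as follows. Preprocessing the Min-Cut data structure on $G_i$ and on $\tilde{G}_i$ (which has $n_i$ additional tiny-capacity edges) costs $t_p(|G_i|) + t_p(|\tilde{G}_i|) = \tO(|G_i|)$. Computing $h_p(u)$ for each of the $n_i-1$ nodes requires two Max-Flow value queries (one in $\tilde G_i$, one in $G_i$), each of amortized time $\tO(1)$ under the hypothesis, totalling $\tO(n_i)=\tO(|G_i|)$. Assembling $V_i^{\leq 1/2}(p)$ then takes another $\tO(n_i)$. Summed over $O(\log n)$ attempts, this remains $\tO(|G_i|)$.

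Third, for the main loop I would bound the cost of the $k$-th iteration by the amortized Min-Cut query time $\tO(|\delta_{G_i}(S_{u_k})|)$ plus the time to partition $U_p$, reconnect the tree edges incident to the expansion center, and extract the portion of $G_i$ that forms the child \CAG of $V_{i,k}$. I plan to implement all of these so that the cost of iteration $k$ is proportional (up to polylogarithmic factors) to the size of the child \CAG produced, rather than to the full cut $\delta_{G_i}(S_{u_k})$; summing over $k$ and invoking Lemma~\ref{Lemma:TotalSizeCags} (applied just to the children of $V_i$) then gives total cost $\tO(|G_i|)$. Preprocessing the Min-Cut data structure on each child \CAG contributes $\sum_k t_p(|G_{i,k}|)=\tO(\sum_k |G_{i,k}|)=\tO(|G_i|)$ by the same lemma.

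The main obstacle is the first part of the third step: charging each iteration's work to the size of the child \CAG being created rather than to the (potentially much larger and repeatedly touched) cut $\delta_{G_i}(S_{u_k})$. The laminar structure of minimum $u_k p$-cuts guaranteed by Assumption~\ref{Assumption:Uniqueness} ensures that the cuts $\{S_{u_k}\cap V_i\}_k$ together with $U_p$ refine $V_i$ into the final partition $V_i=U_p\sqcup V_{i,1}\sqcup\cdots\sqcup V_{i,d}$, and I expect that combining this with a dynamic-connectivity data structure on $G_i$ in the spirit of Claim~\ref{Claim:dynamic} lets us enumerate, per iteration, only the edges that will actually end up incident to the new child super-node. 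With that charging in place the overall bound $\tO(|G_i|)$ follows, failing only if pivot selection fails, i.e., with probability at most $1/n^3$.
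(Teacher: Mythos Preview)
Your first two layers match the paper, but the third layer has a genuine gap, and the route you sketch does not close it.

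The issue is the Min-Cut query itself. By the data-structure interface, querying $(u_k,p)$ costs $\tO(k^i_{u_kp})$ where $k^i_{u_kp}=\card{\delta_{G_i}(S_{u_k})}$, and you must pay this regardless of any downstream cleverness with dynamic connectivity. So the relevant quantity is $\sum_k k^i_{u_kp}$ (and analogously $\sum_k\card{S_{u_k}}$), not the sum of child \CAG sizes. These are \emph{not} the same: by Assumption~\ref{Assumption:Uniqueness} the sets $S_{u_k}$ are laminar, and since each $u_{k+1}$ is chosen from $U_p\setminus S_{u_k}$, either $S_{u_{k+1}}\supset S_{u_k}$ or they are disjoint. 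In the nested case an edge of $G_i$ from deep inside to outside all of them lies in $\delta_{G_i}(S_{u_k})$ for \emph{every} $k$ in the chain, yet contributes only a single merged edge to each child \CAG. Thus charging to child \CAG sizes undercounts, and without further control the chain length (hence the overcount) can be $\Theta(n_i)$.

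What actually bounds the chain length is the randomness of the \emph{target} selection inside the main loop, which you explicitly discard (``This is the only source of randomness and of failure''). The paper's proof hinges on this: it shows (Claims~\ref{Claim:main_technical} and~\ref{Claim:zeta}) that because $u_k$ is drawn uniformly from $U_p\cap V_i^{\le 1/2}(p)$, along every root-to-leaf path of $\TG_i$ only $O(\log n)$ of the associated ``lightest edges'' are ever hit, with high probability. That bounds, for every node and every edge of $G_i$, the number of iterations in which it is touched by $O(\log n)$, giving $\sum_k(\card{S_{u_k}}+k^i_{u_kp})=\tO(\card{G_i})$. This randomized depth argument is the key idea your plan is missing; laminarity and dynamic connectivity alone do not supply it.
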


\begin{proof}
We start with bounding the number of pivot choices. To do that, we use Corollary~\ref{Corollary:Tournament} with 
$V_F=V(G_i)$, $V'_F=V_i$, and $H_{G_i}(V_i)$ as the helper graph of $G_i$ on $V_i$, where the corresponding cuts are the minimum cuts between pairs in $V_i$.
By Corollary~\ref{Corollary:Tournament}, the probability that at least $4\log n$ random pivots $p$
all satisfy $\card{V^{\leq 1/2}_i(p)}<n_i/4$,
which we call an \emph{unsuccessful} choice of pivot $p$, 
is bounded by $1/n^4$.
The number of expansion steps is at most $n-1$,
because the final tree $\T$ contains $n-1$ edges,
and each expansion step creates at least one such edge. 
By a union bound we conclude that with probability at least $1-1/n^3$,
every expansion step picks a successful pivot within $4\log n$ trials.
Observe that for every choice of $p$ we compute $h_p(u)$ for all $u\in V_i$, which takes time $\tO(\card{V_i}+\card{G_i})$ for all pivots.
We can thus focus henceforth on the execution with a successful pivot $p$. 

We now turn to bound the total time spent on queries in $G_i$.
Let $\TG_i$ be the subgraph of $\TG$ induced on $V_i$.
Observe that $\TG_i$ must be connected,
because $V_i$ is a super-node in an intermediate tree
of the Gomory-Hu algorithm (see Lemma~\ref{Lemma:AlgCorrectness}). 
Define a function $\ell:V(\TG_i)\setminus\set{p}\rightarrow E(\TG_i)$, where $\ell(u)$ is the lightest edge in the path between $u$ and $p$ in $\TG_i$, and $\ell(p)=\emptyset$ (see Figure~\ref{Figs:Tzeta} for illustration);
it is well-defined because
Assumption~\ref{Assumption:Uniqueness} guarantees there are no ties.
For an edge $e\in \TG_i$, we say that $e$ is \emph{hit}
if the targets $u_{i,1},\ldots, u_{i,d}$ picked by the expansion step
include a node $u$ such that $\ell(u)=e$. 
Let $H_e$ be an indicator for the event that edge $e$ is hit.
In order to bound the total number of nodes and edges in the \CAG that participate in minimum-cut queries performed by the expansion step,
we first bound the number of edges that are hit along any single path.

\begin{claim}\label{Claim:main_technical}
With high probability, for every path $P$ between a leaf and $p$ in $\TG_i$,
the number of edges in $P$ that are hit is $\sum_{e\in P} H_e \leq O(\log n)$.
\end{claim}

\begin{proof}
Let $\TG_{i,\ell}$ be the graph constructed from $\TG_i$
by merging nodes whose image under $\ell$ is the same.
Observe that nodes that are merged together,
namely, $\ell^{-1}(e)$ for $e\in E(\TG_i)$, are connected in $\TG_i$,
and therefore the resulting $\TG_{i,\ell}$ is a tree.
See Figure~\ref{Figs:Tzeta} for illustration.
We shall refer to nodes of $\TG_{i,\ell}$ as \emph{vertices}
to distinguish them from nodes in the other graphs.
For example, $p$ is not merged with any other node, and thus forms its own vertex.

\ifprocs
\begin{figure*}[!ht]
\else
\begin{figure}[!ht]
\fi
       \includegraphics[width=0.7\textwidth,center]{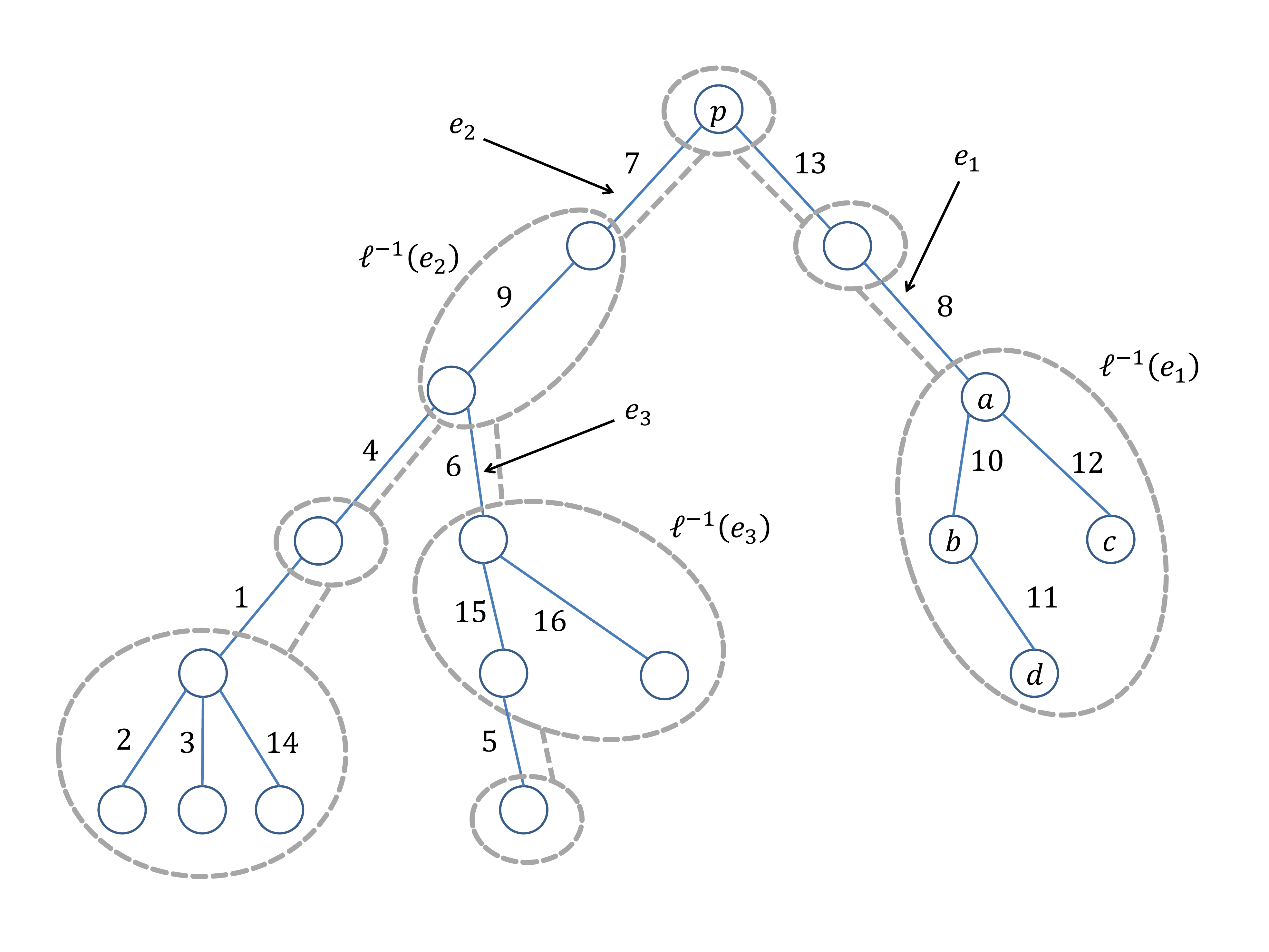}
   \caption[-]{
   An illustration showing $\TG_i$ with solid blue lines, while the corresponding graph $\TG_{i,\ell}$ with dashed gray lines. For example, $e_1=\ell(a)=\ell(b)=\ell(c)=\ell(d)$.
   The nodes in $\ell^{-1}(e_2)$ are not in $V^{\leq 1/2}_i(p)$, and so the expansion step never picks any of them as a sink.
   After picking any node from $\ell^{-1}(e_3)$, a new super-node containing $\ell(e_3)$ (and possibly the vertex below as well) is formed.
   }
   \label{Figs:Tzeta}
\vspace{.1in}\hrule
\ifprocs
\end{figure*}
\else
\end{figure}
\fi

For sake of analysis, fix a leaf in $\TG_{i,\ell}$,
which determines a path to the root $p$, denoted $P_{\ell}$,
and let us now bound the number of nodes picked (by the expansion step)
from vertices in $P_{\ell}$. 

\begin{claim}\label{Claim:zeta}
With high probability, the total number of nodes $u$ picked by the algorithm from vertices in $P_{\ell}$ is at most $O(\log n)$.
\end{claim}

\begin{proof}
We will need the following two observations regarding $\TG_{i,\ell}$. 

\begin{observation}\label{Observation:G_yes}
No vertex in $\TG_{i,\ell}$
contains nodes from both $V^{\leq 1/2}_i(p)$ and $V_i\setminus V^{\leq 1/2}_i(p)$.
\end{observation}
This is true because all nodes $u$ in the same vertex $\ell^{-1}(e)$
have the same minimum $up$-cut in $G$,
which is a basic property of the cut-equivalent tree $\TG$,
and thus all these nodes will have the same $S_u$ and the same $S'_u$
computed in the \CAG $G_i$.

\begin{observation}
The vertices that contain nodes in $V^{\leq 1/2}_i(p)$
form a prefix of the path $P_{\ell}$. 
\end{observation}
This is true by monotonicity of $\card{S_x}$ as a function of the hop-distance of $x$ from $p$ in $P_{\ell}$, denoted $P'_{\ell}$.

The algorithm only picks nodes from $V^{\leq 1/2}_i(p)$,
thus it suffices to bound the nodes picked from (the vertices along)
the prefix $P'_{\ell}$.
Fix a list $\pi$ of the nodes in (vertices in) $P'_{\ell}$
in increasing order of their hop-distance from $p$ in $P_{\ell}$,
Now recall that the targets $u_{i,1},\ldots,u_{i,d}$ are chosen sequentially,
each time uniformly at random from $U_p\cap V^{\leq 1/2}_i(p)$
for the current $U_p$.
Initially, $U_p$ contains all the nodes in $\pi$ (but may contain also nodes outside the path $P_{\ell}$).
Now each time a target $u$ is chosen, some nodes are separated away from $U_p$. Define the list $\pi'$ to be the restriction of $\pi$ to nodes currently in $U_p$;
notice that $U_p$ and $\pi'$ change during the random target choices,
but $\pi$ is fixed.
We can classify the randomly chosen target $u$ into three types.
\begin{enumerate} \compactify
  \renewcommand{\labelenumi}{\arabic{enumi}.} 
\item \label{Item:DontCare}
  $u$ is not from the current list $\pi'$:
  In this case $\pi'$ does not change.
  We call this a ``don't care'' event, because we shall ignore this choice. 
\item \label{Item:Progress}
  $u$ is from the current list $\pi'$:
  In this case $\pi'$ is shortened into a prefix of $\pi'$
  that \emph{does not} contain $u$.
  We now have two subcases:
\begin{enumerate} \compactify
  \renewcommand{\labelenumii}{\labelenumi\alph{enumii}.}
\item \label{Item:BigProgress}
  $u$ is from the first half of $\pi'$:
  Then $\pi'$ is shortened by factor at least $2$.
  We call this event ``big progress''. 
\item \label{Item:SmallProgress}
  $u$ is from the second half of $\pi'$:
  We call this event ``small progress''. 
\end{enumerate}
\end{enumerate}

Now to complete the proof of Claim~\ref{Claim:zeta},
consider the random process of choosing the targets $u$.
To count the number of targets $u$ from $P_{\ell}$, 
we can ignore targets of type~\ref{Item:DontCare}
and focus on targets of type~\ref{Item:Progress},
in which case type~\ref{Item:BigProgress} occurs with probability
at least $1/2$.
As the initial list $\pi$ has length at most $n$,
with high probability the random process terminates within $16\log n$ steps
(counting only targets of type~\ref{Item:Progress}).
\footnote{The similar but different idea that the minimum cuts from a uniformly random node $p$ partition the auxiliary graph in a balanced way with high probability, which allows bounding the recursion depth by analyzing the maximal length of paths in the recursion tree, appears in Lemma $35$ and Theorem $11$ in~\cite{BCHKP08}.
}
\end{proof}

Proceeding with the proof of Claim~\ref{Claim:main_technical},
suppose the path $P$ consists of nodes $v_1,\ldots,v_k=p$
where $v_1$ is the leaf.
Then the path $P_{\ell}$ consists of
$\ell^{-1}(\ell(v_1)),\ldots,\ell^{-1}(\ell(v_k))$
restricted to distinct vertices.
Note that whenever an edge $e$ in $P$ that is hit, some target $u$ is picked from $\ell^{-1}(e)$ and in particular from $P_{\ell}$.
By Claim~\ref{Claim:zeta}, with high probability 
the number of target nodes picked from $P_{\ell}$ is bounded by $O(\log n)$,
implying that also the number of hit edges in $P$ is bounded by $O(\log n)$. 
Finally, Claim~\ref{Claim:main_technical} follows by
applying a union bound over all (at most $n$) leaves. 
\end{proof}

Next, we use Claim~\ref{Claim:main_technical} to bound
the total running time of an expansion step.

\begin{claim} \label{Claim:internal}
An internal iteration in the expansion step,
that partitions a super-node $U_p$ into $U_p\setminus S_u$ and $U_p\cap S_u$,
takes time $\tO(\card{S_u}+k^{i}_{up})$, where $k^{i}_{up}$ is the number of edges in the minimum $up$-cut $(V(G_i)\setminus S_u, S_u)$.
\end{claim}

\begin{proof}
Using the Min-Cut data structure, the algorithm spends $\tO(k^i_{up})$ time for finding the edges in the minimum $up$-cut $(S_u, V(G_i)\setminus S_u)$, where we denote their number by $k^{i}_{up}$.
When partitioning a super-node $U_p$, the algorithm does not explicitly list the nodes in $U_p\setminus S_u$ as this would take too much time.
Instead, it only lists the nodes in $U_p\cap S_u$,
i.e., those that are separated from $U_p$, as follows.
We first find $S_u$ by using Claim~\ref{Claim:dynamic} on $G_i$, with terminals initialized to $V_T:= V(G_i)$, and queries to $S:=S_u$.
Observe that in our case $S_u$ is connected (i.e., $S(u)=S_u$)
as otherwise there would have been a subset $\tilde{S}_u\subset S_u$
such that $c(S'_u)<c(S_u)$, contradicting the minimality of $c(S_u)$.

Second, we enumerate the nodes in $S_u$ and test for membership in $U_p$,
to find $U_p\cap S_u$.
Recall that updating the intermediate tree $\T$ requires reconnecting each edge that was initially incident to super-node $U_p$, to one of the two new super-nodes $U_p\setminus S_u$ and $U_p\cap S_u$. Thus, we discuss this reconnection process next.

Throughout the expansion step, we maintain a list $L$ of all super-nodes that are adjacent to $U_p$, 
starting with the super-nodes $G_i\setminus V_i$.
Technically, for each super-node $V_j$ adjacent to $U_p$ it is stored by a representative node from $V_j$ and a pointer to $V_j$.
In order to reconnect subtrees after partitioning $U_p\cap S_u$ out of $U_p$, the algorithm finds which super-nodes in $L$ are in $S_u$.
This is done by enumerating the nodes in $S_u\cap V(G_i)$ and testing for membership in $L$.
Then, connect those super-nodes to the new super-node $U_p\cap S_u$ in $\T$, and finally update $L$ to reflect the reconnection.
At the end, $U_p\setminus S_u$ is connected to the remaining subtrees.
This proves Claim~\ref{Claim:internal}. 
\end{proof}

We continue with the proof of Lemma~\ref{Lemma:SpreadTime}, that the total time for an expansion step is bounded.
We may assume henceforth that the $O(\log n)$ bound in Claim~\ref{Claim:main_technical} holds, as it occurs with high probability. 
The number of times a node $u\in V(G_i)$ is queried (when it belongs to some $S_v$) is equal to the number of hit edges in its path to the pivot $p$ in $\TG_i$,
which we just assumed to be bounded by $O(\log n)$.
The number of times an edge $e\in E(G_i)$ is queried is equal to the number of hit edges in $\TG_i$ along the two paths from $e$'s ends to the pivot $p$,
which we just assumed to be bounded by $O(\log n)$.
Altogether, the time it takes to scan the cuts $S_{u_{i,1}},\ldots,S_{u_{i,d}}$
and the corresponding super-nodes $V_{i,1},\ldots,V_{i,d}$
that are separated away from $V_i$ is bounded, 
by Claim~\ref{Claim:internal}, by 
$$
  \tO\Big( \sum_{j=1}^{d} \card{S_{u_{i,j}}}+k^i_{u_{i,j}p} \Big)
  \leq
  \tO\Big( \card{V(G_i)}+\card{E(G_i)} \Big).
$$
Finally, observe that the total time it takes to construct the \CAGs of any super-node $V_i$'s children in a single expansion step is linear in the size of $V_i$'s \CAG.
This completes the proof of Lemma~\ref{Lemma:SpreadTime}.
\end{proof}

Next, we show that the total size of all \CAGs at a certain depth is bounded by $O(m)$.
In fact, we show it for partition trees,
which generalize the intermediate trees produced by our algorithm.
A \emph{partition tree} $T$ of a graph $G=(V,E)$
is a tree whose nodes $V_1,\ldots,V_l$ are super-nodes of $G$
and form a partition $V=V_1\sqcup\cdots\sqcup V_l$. 
Clearly, our intermediate tree $\T$ is a partition tree, and so we are left with proving the following lemma.

\begin{lemma}\label{Lemma:TotalSizeCags}
Let $G=(V,E)$ be an input graph,
and let $T$ be a partition tree on super-nodes $V_1,\ldots,V_l$.
Then the total size of the corresponding \CAGs $G_1,\ldots,G_l$
is at most $2n+3m=O(m)$.
\end{lemma}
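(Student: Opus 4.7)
The plan is to bound the total node count and total edge count across the \CAGs separately. For nodes, each $G_k$ consists of the $|V_k|$ original nodes in $V_k$ together with one mega-node per neighbor of $V_k$ in $T$, so $|V(G_k)| = |V_k| + d_T(V_k)$. Summing over $k$ and using that each tree edge contributes $2$ to the total degree, $\sum_k |V(G_k)| = n + 2(l-1) = O(n)$, which is $O(m)$ since $G$ is connected and hence $m\ge n-1$.

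For the edge count, I classify each edge of $G_k$ by how its endpoints relate to $V_k$: \emph{Type-1} has both endpoints in $V_k$, \emph{Type-2} has one endpoint in $V_k$ and the other is a mega-node, and \emph{Type-3} has both endpoints mega-nodes. Type-1 edges across all $k$ sum to $\sum_k |E(G[V_k])| \le m$, since each original edge internal to a super-node is counted once. For Type-2, fix $u\in V$; its Type-2 incidences appear only in $G_{V_{i(u)}}$ and count the distinct subtrees of $T\setminus V_{i(u)}$ that contain at least one neighbor of $u$, which is at most the cross-degree $\deg_{\mathrm{cross}}(u)$. Summing over $u$ gives $\sum_k |\text{Type-2}| \le 2m_{\mathrm{cross}} \le 2m$.

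For Type-3, observe that each such edge $(M_j,M_{j'})$ in $G_k$ corresponds to a length-2 path $V_j\text{-}V_k\text{-}V_{j'}$ in $T$, and is present precisely when $G$ contains an original edge whose tree-path $P_e$ traverses both tree edges of this length-2 path. I would bound $\sum_k |\text{Type-3}| = O(m)$ by charging each Type-3 cluster to a representative original edge $e=(u,v)$ with $u\in V(T_j), v\in V(T_{j'})$---chosen to minimize the tree-path length $|P_e|$, with a canonical tie-breaking rule---and arguing that each original edge serves as representative for only $O(1)$ clusters on average, using a structural analysis of the tree $T$. Combining the three sums, $\sum_k |E(G_k)| = O(m)$, and together with the node bound yields the claimed $O(n+m) = O(m)$ total.

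The main obstacle is the Type-3 bound: a single original edge $e$ whose tree-path is long lies in a Type-3 cluster at each of its $|P_e|-2$ interior super-nodes, so a naive charging over-counts $e$. The key insight is that the parallel-edge merging built into the definition of a \CAG is precisely what makes the sum manageable: whenever $e$ would be charged many times, those same Type-3 clusters must also contain other original edges (of shorter tree-path) that can serve as alternate representatives. Making this rigorous amounts to verifying a Hall-type matching condition on the bipartite graph between Type-3 clusters and original edges, which is where careful tie-breaking and an inductive argument on the tree structure are needed.
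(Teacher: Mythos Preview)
Your node count and your Type-1 and Type-2 bounds are correct and essentially match the paper's setup. The gap is in your Type-3 argument. The Hall-type matching you sketch does not go through: take $T$ to be a path on singleton super-nodes $V_1,\dots,V_n$ and let $G$ be the path $1\text{--}2\text{--}\cdots\text{--}n$ together with the single extra edge $(1,n)$. Then every interior $G_i$ (for $2\le i\le n-1$) has exactly one Type-3 edge, and the \emph{only} original edge whose tree-path passes through any interior $V_i$ is $(1,n)$. So $n-2$ Type-3 clusters all have the same unique candidate representative, and no matching-style argument can assign them to distinct edges. Your fallback to an ``on average $O(1)$'' statement is exactly the thing to be proved, and the Hall condition you invoke is in the wrong direction to yield an average bound.

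The paper's fix is much simpler than anything involving matchings: root $T$ arbitrarily and split your Type-3 edges into two subcases according to whether $V_i$ is the \LCA of the two subtrees involved. If $V_i$ is the \LCA, then any original edge $uv$ contributing to that \CAG edge has $V_i=\LCA(V_u,V_v)$, and each original edge has a unique \LCA, so this subcase contributes at most $m$ in total. If $V_i$ is \emph{not} the \LCA, then one of the two mega-nodes is the distinguished ``parent-side'' mega-node $\hat x_i$ (the one obtained by merging all non-descendants of $V_i$); hence every such edge in $G_i$ is incident to $\hat x_i$, and there are at most $d_i-1$ of them. Summing $\sum_i(d_i-1)\le 2(l-1)\le 2n$ handles this subcase. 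This is where the additive $2n$ in the bound $2n+3m$ comes from, and it is why a pure charge-to-edges argument for Type-3 cannot succeed without bringing $n$ back into the picture.
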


\begin{proof}
Root $T$ at an arbitrary node $r$ and direct all edges away from $r$. 
Now charge each edge $e$ in a CAG $G_i$
to some graph edge $uv\in E(G)$ that contributes to its capacity,
picking one arbitrarily if there are multiple such edges.
Let $P_{uv}$ be the path in $T$ between the two super-nodes $V_u$ and $V_v$ that contain $u$ and $v$, respectively,
and observe that super-node $V_i$ must lie on this path,
see Figure~\ref{Figs:SizeAux} for illustration. 

\begin{figure}[!ht]
\centering
\includegraphics[width=0.7\textwidth]{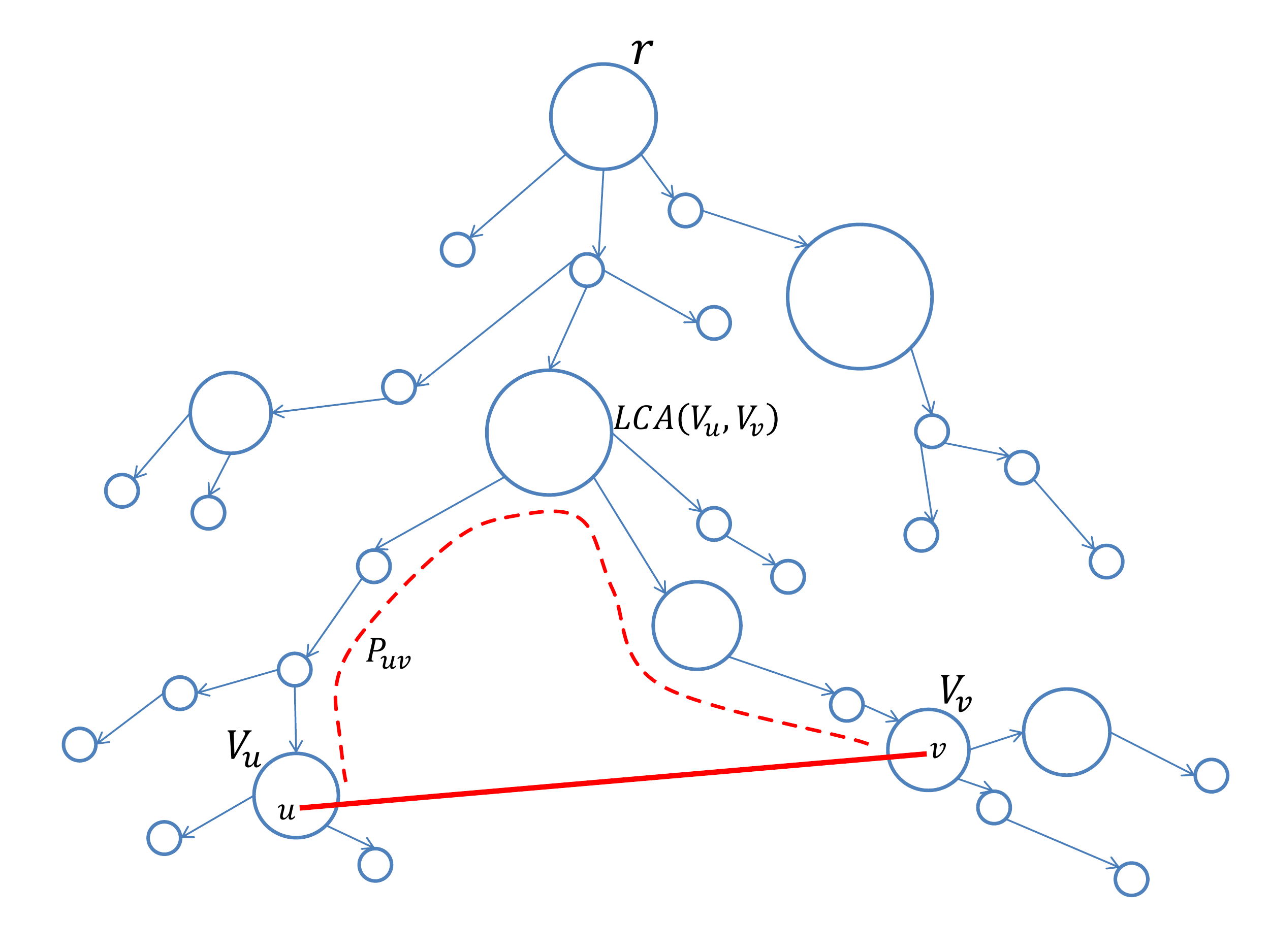}
\caption[-]{
  An illustration of the partition tree $T$ rooted at $r$. 
  The thick red line depicts a graph edge $uv\in E(G)$ that is being charged.
  The dashed red curve depicts $P_{uv}$,
  the path in $T$ between super-nodes $V_u$ and $V_v$. 
}
\label{Figs:SizeAux}
\vspace{.1in}\hrule
\end{figure}

To bound the total charge for a single graph edge $uv\in E(G)$,
observe that it cannot be charged by two edges $e',e''$ in the same CAG $G_i$,
it thus suffices to count how many different CAGs contribute to the charge of $uv$.
We split this into three cases.
\begin{enumerate}
\item 
$V_i$ is an endpoint of $P_{uv}$ (i.e., $V_i=V_u$ or $V_i=V_v$): 
An edge $uv\in E(G)$ can be charged in this manner at most twice (over all \CAGs),
namely, by one edge in $G_u$ and one in $G_v$.
Thus, the total charge over all $uv\in E(G)$ is at most $2m$.
\item
$V_i$ is the least common ancestor, abbreviated \LCA, of $V_u$ and $V_v$ in $T$: 
An edge $uv\in E(G)$ can be charged in this manner at most once (over all \CAGs). 
Thus, the total charge over all $uv\in E(G)$ is at most $m$.
\item
$V_i$ is not an endpoint of $P_{uv}$ nor it is the \LCA of $V_u$ and $V_v$:
In this case, exactly one of $V_u$ and $V_v$ is a descendant of $V_i$.
We bound the number of such edges $e$ (over all \CAGs) 
directly, i.e., without charging to $uv$, as follows.

Let $d_i$ be the degree of $V_i$ in the tree $T$.
Recall that the CAG $G_i$ is obtained from $G$ 
by merging the nodes in $V\setminus V_i$ into exactly $d_i$ nodes, 
one for each neighbor of $V_i$ in $T$,
and one of these $d_i$ nodes in $G_i$, denote it $\hat x_i$,
is the merger of all the nodes from all the super-nodes $V_j$
that are non-descendants of $V_i$. 
(see section~\ref{Section:Algorithm_Overview}).
It follows that an edge $e$ in $G_i$ (in this case)
connects this $\hat x_i$ to one of the other $d_i-1$ nodes mentioned above,
and clearly there are at most $d_i-1$ such edges. 
By summing over all the CAGs $G_1,\ldots,G_l$, 
the total number of such edges $e$ is at most $\sum_i (d_i-1) \le 2n$.
\end{enumerate}

Altogether, the total size of all the \CAGs is at most $2m + m + 2n=O(m)$,
as claimed. 
\end{proof}

We are now ready to prove the main Theorem.
\begin{proof}[Proof of Theorem~\ref{theorem:accel_alg} under the assumption on \MF queries]
To simplify matters, let us assume henceforth that $t_p(m)=\tO(m)$ and $t_{mc}(m)=\tO(1)$.
The general case is analyzed similarly and results in the time bound
$\tO(t_p(m) + m\cdot t_{mc}(m))$ stated in Theorem~\ref{theorem:accel_alg}
for the following reasons. 
The preprocessing time is performed $\tO(1)$ times per \CAG,
hence the total preprocessing time over all \CAGs that the algorithm constructs
is at most $\tO(t_p(m))$, the first summand above. 
The total size of all answers to all queries at a single depth
is near-linear in the total size of all CAGs at this depth;
hence over all depths it is bounded by
$\tO(m\cdot t_{mc}(m))$, the second summand above.

First, assume the perturbation attempt from Section~\ref{sec:perturb} is successful.
By Lemma~\ref{Lemma:SpreadTime} the total time spent at each super-node $V_i$ is near-linear in the size of $G_i$, and thus by Lemma~\ref{Lemma:TotalSizeCags}, the total time spent at each recursion depth is bounded by $O(m)$.
By the definition of the algorithm, at each super-node $V_i$ during the recursion, $\Theta(\card{V_i})$ nodes are partitioned away from $V_i$, and so by Lemma~\ref{Lemma:TotalSizeCags}, $\Theta(n)$ nodes are partitioned away from all \CAGs at this depth, thus after the $O(\log n)$ depth, each super-node $V_i$ is a singleton, concluding Theorem~\ref{theorem:accel_alg} in this case.

Second, if the perturbation attempt from Section~\ref{sec:perturb} is unsuccessful, which happens with probability at most $1/n^3$, and two cuts are crossing each other, then we would identify that and restart the algorithm. 
By Lemma~\ref{Lemma:SpreadTime}, with probability at most $1/n^3$ the number of incorrect pivots exceeds $O(\log n)$, and by a union bound with the probability of a failed perturbation attempt, the running time of the algorithm is bounded by $\tO(m)$ with high probability.
\end{proof}
\subsection{Lifting the Assumption on Max-Flow Queries}\label{Section:Unnecessariness}
Recall that our goal is to construct a cut-equivalent tree using access to Min-Cut queries. So far we have assumed that we also have access to Max-Flow queries. In this subsection we show how to lift this additional assumption.
We will change the algorithm and the analysis slightly, as follows. 

First, at each expansion step, run the algorithm on $4\log n$ preprocessed copies of $G_i$, each on one of the randomly picked pivots. Similar to our calculation from the original proof, with high probability, for every expansion step throughout the execution, at least one of the corresponding graphs will have a successful pivot.
We will make sure that an unsuccessful pivot will never output a wrong tree; it may only keep running indefinitely (until we halt it). 
Since with high probability at least one of the graphs is of a successful pivot, this only incurs a factor of $\tO(1)$ to the running time.

Second, instead of picking a node $u\in U_p\cap V^{\leq 1/2}_{i}(p)$ at random as in the original algorithm, pick $4\log_{8/7} n$ nodes from $U_p$ and use Claim~\ref{Claim:dynamic} on $4\log_{8/7} n$ copies of $G_i$, simultaneously, each for one of the chosen nodes $u$, to test if $\card{S'_u}\leq n_i/2$. 
If all nodes were unsuccessful choices, draw another set of $4\log_{8/7} n$ nodes. Continue to draw batches until at least one node is successful.
Then, for an arbitrary successful node $u$, 
use Claim~\ref{Claim:dynamic} to find the $k^i_{up}$ edges in the minimum $up$-cut, and the nodes in $S'_u$.

Since the probability for a single node $u$ chosen at random to satisfy $\card{S'_u}\leq n_i/2$ is always at least $1/8$, and as we pick $4\log_{8/7}n$ nodes uniformly at random each time, we get that: with probability at least $1-(7/8)^{4\log_{8/7} n}=1-1/n^4$, at least one of the $4\log_{8/7} n$ chosen nodes is successful. 
By a union bound over the maximal number of partitions in expansion steps throughout the execution, i.e. internal iterations of expansion steps (at most $n$), we get that with probability at least $1-1/n^3$ each one of the batches results in at least one of the $4\log_{8/7} n$ nodes in the batch is successful.
Hence, the only part of the proof that needs to be further addressed is Claim~\ref{Claim:main_technical}.
In particular, we prove the following variant of the claim.

\begin{claim}\label{Claim:main_technical2}
With high probability, for every path $P$ between a leaf and $p$ in $\TG_i$,
the total number of edges in $P$ that are hit is at most $O(\log^2 n)$.
\end{claim}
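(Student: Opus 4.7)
The plan is to adapt the proof of Claim~\ref{Claim:main_technical}, absorbing an extra $O(\log n)$ factor from the batched sampling of targets. I would keep the definitions of $\TG_{i,\ell}$, $P_\ell$, $P'_\ell$, and the dynamic list $\pi'$ (the restriction of $P'_\ell$ to nodes currently in $U_p$) exactly as in the original proof. The notion of an edge being \emph{hit} should be extended to count any node \emph{sampled} in any batch (not merely the one subsequently processed), since in the modified algorithm every sample is tested via Claim~\ref{Claim:dynamic} and must be charged in the running-time accounting; this is the one definitional change relative to the original.

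The bulk of the work is to prove the analog of Claim~\ref{Claim:zeta}: with high probability, across all iterations of the expansion step, the total number of sampled nodes that lie in vertices of $P_\ell$ is $O(\log^2 n)$. I would decompose this count as (i) the number of iterations whose batch contains at least one sample in $\pi'$, times (ii) the number of samples per such batch that can fall in $\pi'$, which is trivially at most $k := 4\log_{8/7} n$.

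For (i), the key observation is that among the $k$ i.i.d.\ uniform samples from $U_p$ in a single batch, the \emph{first} sample that lies in $U_p \cap V^{\leq 1/2}_i(p)$ is itself uniformly distributed over $U_p \cap V^{\leq 1/2}_i(p)$, by exchangeability. This suffices to lift the big-progress/small-progress analysis of Claim~\ref{Claim:zeta} verbatim: each iteration either contributes no sample to $\pi'$ (a don't-care event) or the first successful sample, conditioned on lying in $\pi'$, is uniform in $\pi'$ and hence in the first half of $\pi'$ with probability at least $1/2$, shrinking $\pi'$ by a factor of at least $2$. A Chernoff-style bound then yields $O(\log n)$ iterations whose batches touch $\pi'$ with high probability. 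Multiplying by the batch size $k=O(\log n)$ in (ii) gives the $O(\log^2 n)$ bound on the total samples from $P_\ell$.

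The main obstacle is the adversarial flavor of the ``arbitrary successful node'' that the algorithm selects for processing, whose distribution need not be uniform over $U_p \cap V^{\leq 1/2}_i(p)$. My approach sidesteps this by analyzing only the first successful sample (whose distribution \emph{is} uniform) as the driver of progress on $\pi'$, while bounding the contribution of all other samples in the batch crudely by $k$. Finally, a union bound over the at most $n$ leaves of $\TG_{i,\ell}$ upgrades the per-path bound to hold simultaneously for every leaf-to-pivot path $P$, yielding Claim~\ref{Claim:main_technical2}.
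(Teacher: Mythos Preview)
Your route differs from the paper's. The paper does not keep the $1/2$ threshold and then multiply by the batch size; instead it \emph{moves} the big-progress threshold to the first $1-\tfrac{1}{3\log_{8/7}n}$ fraction of $\pi'$, so that a type-2 processed target is ``big progress'' with probability at least $1-\tfrac{1}{3\log_{8/7}n}$, while each big-progress event now shrinks $\pi'$ only by a $1-\Theta(1/\log n)$ factor. The $O(\log^2 n)$ then comes from needing $O(\log^2 n)$ big-progress events to exhaust $\pi'$, not from a batch-size multiplier.

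There is a genuine gap in your argument. You want to bound (i), the number of iterations whose batch contains \emph{some} sample in $\pi'$, by $O(\log n)$, via: ``in each such iteration the first successful sample, conditioned on lying in $\pi'$, is uniform in $\pi'$ and hence halves $\pi'$ with probability $\ge 1/2$.'' But the two events ``the batch has a sample in $\pi'$'' and ``the first successful sample lies in $\pi'$'' are not the same. A batch can contain a later sample in $\pi'$ while its first successful sample lies in $(U_p\cap V_i^{\le 1/2}(p))\setminus\pi'$; such an iteration counts toward (i), yet the processed node (even if you set it to be the first successful sample) is outside $\pi'$ and need not shrink $\pi'$ at all. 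You also cannot instruct the algorithm to process ``the first sample that lands in $\pi'$,'' because $\pi'$ depends on the particular leaf being analyzed and is not available to the algorithm. Consequently your $O(\log n)$ bound on (i) is unsupported, and the decomposition $(\mathrm{i})\times(\mathrm{ii})$ does not deliver $O(\log^2 n)$. If you only care about the claim as literally stated (hits by \emph{processed} targets), note that fixing ``arbitrary'' to mean ``first successful sample'' makes the processed target uniform over $U_p\cap V_i^{\le 1/2}(p)$, and then the original Claim~\ref{Claim:zeta} already gives $O(\log n)$ without any batch multiplier; it is precisely your extension to \emph{all} sampled nodes that creates the difficulty your argument does not resolve.
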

\begin{proof}
We mention the differences from the proof of the original Claim~\ref{Claim:main_technical}.
The classification of the choice of a random target $u$ into three types is as follows.

\begin{enumerate} \compactify
  \renewcommand{\labelenumi}{\arabic{enumi}.} 
\item \label{Item:DontCareB}
(Similar to before) $u$ is not from the current list $\pi'$: In this case $\pi'$ does not change. We call this a ``don't care'' event, because we shall ignore this choice.
\item \label{Item:ProgressB}
  $u$ is from the current list $\pi'$:
  In this case $\pi'$ is shortened into a prefix of $\pi'$
  that \emph{does not} contain $u$.
  We now have two subcases:
\begin{enumerate} \compactify
  \renewcommand{\labelenumii}{\labelenumi\alph{enumii}.}
\item \label{Item:BigProgressB}
$u$ is from the first $1-1/(3\log _{8/7} n)$ fraction of $\pi'$: Then $\pi'$ is shortened by factor at least $1/(3\log _{8/7} n)$. We call this event ``big progress''.
\item \label{Item:SmallProgressB}
$u$ is from the complement part of $\pi'$:
  We call this event ``small progress''. 
\end{enumerate}
\end{enumerate}

Here, we have a random process in which type~\ref{Item:BigProgressB} occurs with probability at least $1-1/(3\log _{8/7} n)$, and therefore with high probability it terminates within $64\log_{8/7} n\ln n$ steps (these steps count only targets of type~\ref{Item:Progress}). We conclude that with high probability, every such path has at most $64\log_{8/7} n\ln n=O(\log^2 n)$ nodes chosen from its vertices.
\end{proof}
We proceed to the proof of Theorem~\ref{theorem:accel_alg}, highlighting the differences.
\begin{proof}[Proof of Theorem~\ref{theorem:accel_alg}]
With high probability, at each expansion step at most $O(\log n)$ unsuccessful pivots are chosen before picking a successful one. At each level, we spend at most $t_p(m)$ time for the preprocessing of the min-cut data structures for fixed sources, and so unsuccessful pivots only incur a factor $\tO(1)$ on the running time.
Thus, the proof of Theorem~\ref{theorem:accel_alg} is concluded.
\end{proof}

\subsection{Unique Cut-Equivalent Tree via Pertubation} 
\label{sec:perturb}

The following proposition shows that by adding small capacities to the edges, we can assume that $G$ has one cut-equivalent tree $\TG$ (see also~\cite[Preliminaries]{BENW16}).

\begin{proposition}
\label{Proposition:Perturbation}
One can add random polynomially-bounded values to the edge-capacities in $G$,
such that with high probability, the resulting graph $G'$ has
a single cut-equivalent tree $\TG$ with $n-1$ distinct edge weights,
and moreover the same $\TG$ (with edge weights rounded back) 
is a valid cut-equivalent tree also for $G$.
\end{proposition}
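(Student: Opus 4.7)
The plan is a standard scale-and-perturb argument. Fix polynomials $\tilde{N}=n^{20}$ and $N=n^{10}$, scale every capacity of $G$ by $\tilde{N}$, and for each edge $e$ draw an integer perturbation $r(e)\in\{1,\ldots,N\}$ uniformly and independently. Set $c'(e):=\tilde{N}c(e)+r(e)$ and $G':=(V,E,c')$; all capacities of $G'$ remain polynomially bounded.

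First I would observe that, since $r(C):=\sum_{e\in C}r(e)\le mN<\tilde{N}$ for any cut $C$, the ``high-order bits'' $\tilde{N}c(C)$ always dominate the ``low-order bits'' $r(C)$: whenever $c(C)<c(C')$ we have $c'(C)<c'(C')$, so every minimum $st$-cut in $G'$ is already a minimum $st$-cut in $G$, and $\lfloor c'(C)/\tilde{N}\rfloor=c(C)$. Consequently, any cut-equivalent tree for $G'$, reweighted so that each tree edge carries the $c$-value of the bipartition it induces, is automatically a cut-equivalent tree for $G$---this is the ``rounding back'' step.

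Next I would apply the Isolation Lemma of Mulmuley--Vazirani--Vazirani to each deterministic family $\mathcal{F}_{st}^{\star}\subseteq 2^E$ of minimum $st$-cuts in $G$ (the family depends only on $G$, not on $r$). It gives, with probability at least $1-m/N$, a unique minimum-$r$-weight element of $\mathcal{F}_{st}^{\star}$, which by the previous paragraph is the unique minimum $st$-cut of $G'$. A union bound over all $\binom{n}{2}$ pairs yields simultaneous uniqueness of minimum $st$-cuts in $G'$; this in turn forces the Gomory--Hu tree $\TG$ of $G'$ to be unique (independent of the step ordering in Gomory--Hu), as in the classical analysis.

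Finally, for the $n-1$ distinct tree-edge weights, two tree edges whose bipartitions have different $c$-values cannot collide under $c'$ by the high-order-bits gap, so it remains to rule out collisions between distinct bipartitions $C\neq C'$ that share the same $c$-value and each is the unique minimum $st$-cut in $G'$ for some pair. For any fixed such $C\neq C'$, picking any $f\in C\triangle C'$ and conditioning on $\{r(e''):e''\neq f\}$ pins $r(f)$ to a single value, so $\Pr[r(C)=r(C')]\le 1/N$; applying this to the cuts selected by the (random) tree $\TG$ and carrying out the union bound analogous to the one in~\cite{BENW16} gives distinctness of all tree-edge weights with high probability once $N=n^{10}$. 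The delicate point---where careful accounting of the number of ``relevant'' bipartitions is needed---is precisely this last union bound; bipartitions that differ in $c$-value are handled for free by the scaling, but bipartitions tied in $c$-value require that the random $r$ separates them, and one has to argue that only a tractable collection of bipartitions can ever appear as tree edges of $\TG$.
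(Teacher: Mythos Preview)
Your proposal is correct and follows essentially the same route as the paper: perturb with polynomially-bounded random values (the paper adds fractions in $\{1/n^{10},\dots,n^7/n^{10}\}$ rather than scaling up, which is equivalent), invoke the Isolation Lemma per pair $s,t$ for uniqueness of minimum $st$-cuts, and then use the single-edge-in-the-symmetric-difference trick together with a union bound for distinctness of the tree-edge weights. The ``delicate point'' you flag is dispatched in the paper simply by union-bounding over all $O(n^4)$ \emph{pairs of node-pairs} rather than over bipartitions, so no accounting of which bipartitions can appear as tree edges of $\TG$ is required.
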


\begin{proof}
We use the following well known lemma.
\begin{lemma}[The Isolation Lemma~\cite{MulmuleyVV87}]
Let $h$ and $H$ be positive integers, and let $\mathcal{F}$ be an arbitrary family of subsets of the universe $[h]$. Suppose each element $x\in [h]$ in the universe receives an integer weight $w(x)$, each of which is chosen independently and uniformly at random from $[H]$. The weight of a set $S$ in $\mathcal{F}$ is defined as $w(S):=\sum_{x\in S}w(x)$. Then, there is probability at most $h/H$ that more than one set in $\mathcal{F}$ will attain the minimum weight among them.
\end{lemma}

Consider $s,t\in V$. Using the lemma above with $\mathcal{F}$ the set of all minimum $st$-cuts in $G$, $h:=m$, and $H:=n^7$, we would get that there is probability at most $1/n^5$ that more than one cut separating $s$ and $t$ will attain the minimum capacity among them (i.e. will be a minimum $st$-cut).
However this might drastically change the capacity of the edges (and cuts), so we divide all added weights by $n^10$. 
In other words, we add a number from $\{1/n^10,\ldots,n^7/n^10\}$ uniformly at random to the capacity of every edge in $G$ to get that with probability at most $1/n^5$, the pair $s,t$ have more than one minimum $st$-cut, and also the capacity of the cut remains close to its original value.
By a union bound over all pairs in $V$ there is a probability of at most $1/n^3$ for at least one pair to have more than one minimum cut.
Next, the probability for two minimum-cuts $(S_u,V\setminus S_u)$ and $(S_w,V\setminus S_w)$ separating two different pairs of nodes $u,u'$ and $w,w'$, respectively, to have the same value after the perturbation is small. Without loss of generality, let $e$ be an edge in the cut $(S_u,V\setminus S_u)$ but not in $(S_w,V\setminus S_w)$.
Conditioning on the values of all other edges, $e$ could have at most one value that makes the cuts' values equal. Since each value is drawn with probability $1/n^7$, by a union bound on all pairs of pairs of node in $V$, the probability that two different pairs of nodes that have different minimum cuts but had the same value in $G$ will have also the same value in $G'$ (i.e., after the perturbation) is at most $1/n^3$.
Finally, by applying a union bound again, with probability at least $1-1/n^2$ none of the events happen,
that is every pair has a unique minimum cut, and no two pairs of nodes have two different minimum-cuts with the same value.

Since the value of every cut in $G'$ is bigger by at most $m\cdot 1/n^3\leq 1/n$ than its original value, and assuming the edge-capacities in $G$ are integers (by scaling), the minimum $st$-cut in $G'$ is smaller than any non-minimum $st$-cut, that is a cut separating $s$ and $t$ that is not the minimum one in $G$, and also the value of any non-minimum $st$-cut in $G'$ is bigger by at least $1-1/n$ than the minimum $st$ cut in $G$. Hence, $T^*$ is a valid cut-equivalent tree for $G$, and by removing the added weights from $T^*$ we have also the original cut values.
This completes the proof of Proposition~\ref{Proposition:Perturbation}.
\end{proof}

\section{Algorithm for an Output Sensitive Data Structure}
\label{Alg_Out}
For completeness, we show here that designing an output sensitive data structure for minimum-cuts can be reduced to the construction of cut-equivalent trees, i.e. the opposite direction than in Section~\ref{Section:Cut_Alg}.

\begin{theorem}\label{Theorem:Alg_Out}
Given a capacitated graph $G=(V,E,c)$ on $n$ nodes, $m$ edges, and a cut-equivalent tree $T$ of $G$, there is a deterministic data structure that after preprocessing in time $\tO(m)$, can report for a query pair $s,t\in V$, the edges in a minimum $st$-cut in time $\tO(\OutputLen)$.
\end{theorem}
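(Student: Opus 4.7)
The plan is to exploit the cut-equivalent tree $T$ to reduce a Min-Cut query to an orthogonal range reporting query on the graph edges. Root $T$ at an arbitrary node and run a DFS to obtain, for every node $v\in V$, an entry time $\mathrm{in}(v)$ and an exit time $\mathrm{out}(v)$, so that the subtree rooted at $v$ corresponds exactly to the interval $[\mathrm{in}(v),\mathrm{out}(v)]$. Additionally, preprocess $T$ for $O(1)$-time LCA queries and build a sparse table (or a heavy-path decomposition) that, given $s,t\in V$, returns in $\tilde O(1)$ time the minimum-weight edge $e^*$ on the unique $s$-$t$ path in $T$. All of this preprocessing takes $\tilde O(n)$ time and space.

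Given a query $(s,t)$, use the tree structure to obtain $e^*$ in $\tilde O(1)$ time. Removing $e^*$ splits $T$ into two subtrees, one of which is the subtree rooted at some node $x$; by the defining property of a cut-equivalent tree, the bipartition $(V_x, V\setminus V_x)$ induced on $V$ is a minimum $st$-cut of $G$, and its weight equals $c(e^*)$. Thus answering the query reduces to \emph{enumerating} the edges of $G$ that cross this bipartition. Because $V_x$ is exactly $\{v\in V:\mathrm{in}(v)\in[a,b]\}$ for $a=\mathrm{in}(x)$ and $b=\mathrm{out}(x)$, an edge $uv\in E$ crosses the cut if and only if exactly one of $\mathrm{in}(u),\mathrm{in}(v)$ lies in $[a,b]$.

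To enumerate such edges quickly, represent every edge $uv\in E$ as a point $(\mathrm{in}(u),\mathrm{in}(v))\in[1,2n]^2$ (including both orderings, so that each edge appears as two points), and preprocess these $O(m)$ points for 2D orthogonal range reporting in $\tilde O(m)$ time and space, supporting queries that return the points in an axis-aligned rectangle in time $O(\log m + k)$ where $k$ is the number of reported points (e.g., via range trees with fractional cascading). A query for the cut edges is then answered by two rectangle queries: $[a,b]\times[1,a-1]$ and $[a,b]\times[b+1,2n]$, reporting together $\OutputLen$ crossing edges in time $\tilde O(\OutputLen)$. Deduplication across the two orderings is handled by a simple pairing (e.g., only emit the point whose first coordinate is the smaller of the two).

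The only genuinely delicate step is the 2D range reporting piece, but this is a textbook data structure whose preprocessing and output-sensitive query bounds suffice. Everything else — LCA and min-on-path queries on $T$, translating the cut into an interval on the DFS traversal, and assembling the rectangle queries — is standard tree bookkeeping. Combining these pieces yields the claimed $\tilde O(m)$ preprocessing and $\tilde O(\OutputLen)$ query time, which establishes Theorem~\ref{Theorem:Alg_Out}.
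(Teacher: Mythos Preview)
Your proposal is correct and is essentially the same as the paper's proof: both reduce the query to finding the minimum-weight edge on the $s$--$t$ path in $T$, identify the resulting subtree with a DFS interval, map each graph edge to two 2D points using DFS timestamps, and answer with two orthogonal range-reporting rectangles. One small remark: your deduplication step is unnecessary (and the suggested rule would actually drop edges if applied), since for any crossing edge exactly one of its two points lands in $[a,b]\times\overline{[a,b]}$, as the paper also observes.
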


We first give an overview of the reduction.
Consider a tour $t_1,\ldots,t_{2n-1}=t_1$ on (the nodes of) the tree $T$, 
starting at an arbitrary node $t_1$ and following a DFS
(i.e., going ``around'' the tree and traversing each edge twice).
Now assign each graph edge $e=(w,w')\in E$ two points $p^1,p^2$
in a two-dimensional grid of size $(2n-1)\times (2n-1)$, as follows. 
One point $p^1$ has $x$ and $y$ coordinates according to the first time the tour visits $w$ and $w'$, respectively; 
the other point $p^2$ has the same coordinates but in the opposite order. 
See Figure~\ref{Figs:Grid} for illustration.
\begin{figure}[!ht]
\centering
       \includegraphics[width=0.7\textwidth]{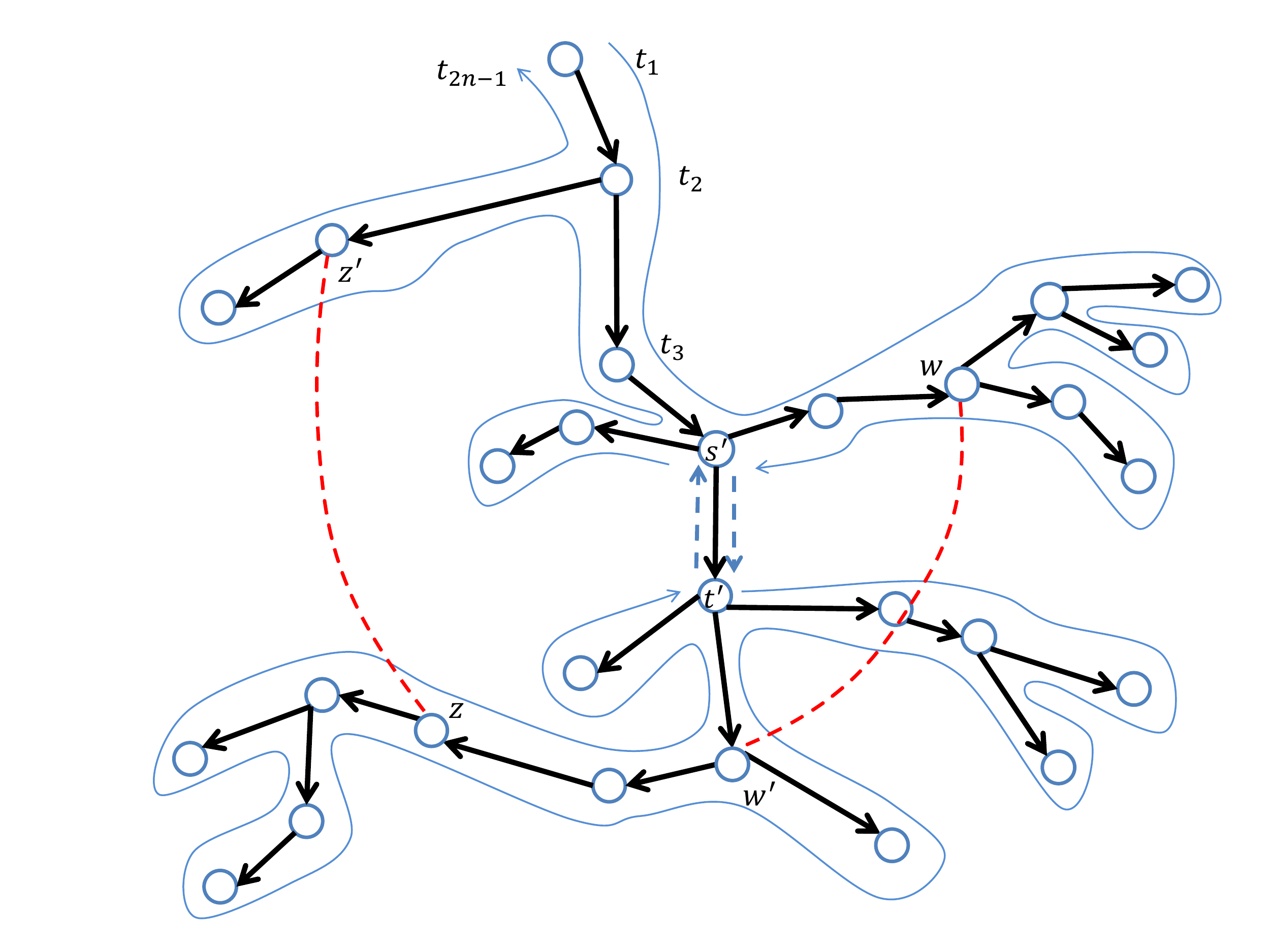}
   \caption[-]{
     An illustration of the tour on $T$ and how the edges $E$ are mapped to grid points $P$.
     The rooted tree $T$ is depicted by black arrows,
     and the tour by a solid blue line except for one tree edge $(s',t')$ that is dashed.
     We also show two edges of the graph $G$ that have exactly one endpoint
     in the subtree under $t'$, depicted by dashed red lines. 
     They are mapped to grid points $p^2(w,w')$ inside rectangle~\eqref{eq:1}, 
     and $p^1(z,z')$ inside rectangle~\eqref{eq:2}. 
   }
   \label{Figs:Grid}
\vspace{.1in}\hrule
\end{figure}

Given a query pair $s,t\in V$, the algorithm first finds the lightest edge $(s',t')\in E(T)$ in the unique $st$-path in $T$.
It then reports all the graph edges in the cut corresponding to removing $(s',t')$ from $T$, using the following observation. 
View $T$ as rooted at $t_1$ (where the tour begins),
and assume without loss of generality that $s'=\parent(t')$.
Then the subtree under $t'$ is visited exactly in the interval
$I_{s't'} := [\FirstVisit(t'),\LastVisit(t')]$ where for a node $q\in V$,
\begin{align*}
  \FirstVisit(q) &:= \min\{k\in [2n-1]:\ t_k=q\}, \\
  \LastVisit(q) &:=  \max\{k\in [2n-1]:\ t_k=q\}.
\end{align*}
As a result, every graph edge $e$ that crosses the cut corresponding to $(s',t')$ has exactly one endpoint inside the interval $I_{s't'}$ (more precisely, all its visits are inside that interval) and one endpoint outside that interval (actually, all its visits are outside).
Finally, we define two rectangles in the grid that contain exactly the points corresponding to edges of this cut, and employ a known algorithm to report all the points (edges of $G$) inside these rectangles.

\begin{proof}[Proof of Theorem~\ref{Theorem:Alg_Out}]
The preprocessing algorithm works as follows. 
Given $G$ and its cut-equivalent tree $T$, 
construct a tour $t_1,\ldots,t_{2n-1}$ on $T$ as described in the overview.
Then, for every graph edge $(w,w')\in E$, create two points 
\begin{align*}
  p^1(w,w') &:= (\FirstVisit(w), \FirstVisit(w')), \\
  p^2(w,w') &:= (\FirstVisit(w'),\FirstVisit(w)).
\end{align*} 
Store the set $P$ of the $2m$ points created in this manner
in a data structure that supports range queries (as explained below). 

Given a pair of nodes $s,t\in V$ as a query for minimum $st$-cut,
the algorithm first finds the lightest edge in the unique $st$-path between in $T$ in $\tO(1)$ time, denoted $(s',t')$
where we assume without loss of generality that $s'=\parent(t')$
(recall we view $t_1$ as the root of $T$).
The algorithm then reports all the points in $P$ that lie inside the two rectangles 
\begin{align}
  [\FirstVisit(t'),\LastVisit(t')] &\times [1,\FirstVisit(t')-1],
  \label{eq:1}
  \\
  [\FirstVisit(t'),\LastVisit(t')] &\times [\LastVisit(t')+1,2n-1] .
  \label{eq:2}
\end{align}
To see why this output is correct,
observe that these two rectangles are disjoint,
and that their union is exactly $I_{s't'} \times \overline{I_{s't'}}$
(using the notation from the overview).
Thus, points of $P$ inside their union correspond precisely to edges in $E$
with exactly one endpoint visited in the interval $I_{s't'}$, 
i.e., exactly one endpoint in the subtree under $t'$.
Moreover, an edge $e$ can be reported at most once,
because it cannot be that both $p^1,p^2\in I_{s't'} \times \overline{I_{s't'}}$. 

Reporting all the points inside these two rectangles could be done by textbook approach through range trees in time $O(k+\log n)$~\cite{Prepa85},
where $k$ is the output size which for us is the number of edges in the cut.
The preprocessing time of~\cite{Prepa85} for $p$ points is $O(p\log p)$, and so the preprocessing time of our data structure is $O(m\log m)$, and the query time is $\tO(output)$, where $output$ is the number of edges in the output cut.

\end{proof}

\section{Algorithm for Flow-Equivalent Trees}
\label{Section:Recovering}

In this section we prove that $O(n\log n)$ queries to a \MFV oracle are enough to construct a flow-equivalent tree with high probability.
This is analogous to the Gomory-Hu algorithm,
which constructs a cut-equivalent tree using minimum-cut queries.
Let $\mathcal{F}$ be a graph family that is closed under perturbation of edge-capacities, and suppose that for every graph in $\mathcal{F}$ with $m$ edges, after $t_p(m)$ preprocessing time, \MFV queries could be answered in time $t_{mf}(m)$.
The following is the main result of this section, 
which is a consequence of Theorem~\ref{Theorem:ultrametrics} below.
We use the term Min-Cut data structure as in Section~\ref{Section:Cut_Alg},
although we only need here queries for the value (not an actual cut). 

\begin{theorem}\label{thm:floweq}
Given a capacitated graph $G=(V,E)\in \mathcal{F}$ with $n$ nodes and $m$ edges, as well as access to a deterministic Min-Cut data structure for $\mathcal{F}$ with running times $t_p(m),t_{mf}(m)$, one can construct a flow-equivalent tree for $G$ in time $O(t_p(m)+ n\log n \cdot t_{mf}(m) + n\log^2 n)$ with high probability.
\end{theorem}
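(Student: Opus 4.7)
The plan is to reduce the construction of a flow-equivalent tree to the reconstruction of an ultrametric on $V$, and then apply Theorem \ref{Theorem:ultrametrics} (the stated ultrametric reconstruction result) with $\MF$-queries playing the role of distance queries.

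First I would recall why the Max-Flow values induce an ultrametric structure. The strong triangle inequality $\MF(u,v) \geq \min\{\MF(u,w), \MF(w,v)\}$ holds for every triple $u,v,w$ (this is precisely the property used in the proof of Proposition \ref{Proposition:extendingGH}). Equivalently, the function $d(u,v) := -\MF(u,v)$ (or any monotone transformation such as $1/\MF$) is an ultrametric on $V$. A flow-equivalent tree for $G$ is exactly a tree $T$ on $V$ whose bottleneck metric equals $\MF$, which is the same as the canonical tree representation of this ultrametric.

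Next I would set up the preconditions of Theorem \ref{Theorem:ultrametrics}, which assumes the ultrametric has at least (hence exactly) $n-1$ distinct pairwise values. This is where a perturbation step is needed: using the same technique as in Proposition \ref{Proposition:Perturbation}, add small random perturbations to the edge-capacities of $G$ so that with high probability the perturbed graph $G'$ has a unique cut-equivalent (and hence flow-equivalent) tree with $n-1$ distinct edge weights, and the original tree is recovered by rounding. Crucially, $\mathcal{F}$ is assumed closed under edge-capacity perturbations, so the Min-Cut data structure still applies to $G'$. This guarantees the ultrametric on $V$ induced by $\MF_{G'}$ has exactly $n-1$ distinct values, meeting the hypothesis of Theorem \ref{Theorem:ultrametrics}.

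Then I would invoke Theorem \ref{Theorem:ultrametrics}: it reconstructs the ultrametric from $O(n \log n)$ distance queries plus $O(n \log^2 n)$ internal processing time. Each distance query here is a Max-Flow query, answered in time $t_{mf}(m)$ after a single preprocessing of $G'$ in time $t_p(m)$. Summing gives $O(t_p(m) + n \log n \cdot t_{mf}(m) + n \log^2 n)$, matching the claimed bound. Finally, round the edge weights of the recovered tree back to the original capacities to obtain a flow-equivalent tree for $G$ (the same uncrossing/uniqueness argument used in Section \ref{sec:perturb} shows that this rounded tree is still flow-equivalent for $G$).

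The main obstacle, assuming Theorem \ref{Theorem:ultrametrics} is available as stated, is ensuring the distinctness hypothesis in a way that is both cheap and preserves correctness for $G$ itself. The perturbation argument handles this with high probability, but one must verify that (i) the polynomially bounded perturbation does not change the combinatorial structure of min $st$-cuts in $G$ (so that the $\MF$-values of $G'$ collapse to those of $G$ after rounding), and (ii) the Min-Cut data structure for $\mathcal{F}$ still applies to $G'$, which is exactly what the closure assumption on $\mathcal{F}$ provides. All remaining work is a direct accounting of the query and preprocessing costs.
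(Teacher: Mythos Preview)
Your proposal is correct and follows essentially the same route as the paper: perturb to ensure $n-1$ distinct Max-Flow values (using that $\mathcal{F}$ is closed under perturbation), observe that the inverted Max-Flow function is an ultrametric, and invoke Theorem~\ref{Theorem:ultrametrics} with Max-Flow queries as distance queries to get the stated time bound.

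One small point you gloss over: the output of Theorem~\ref{Theorem:ultrametrics} is a \emph{representing tree}, in which the points of $V$ are the leaves and the internal nodes carry the labels, whereas a flow-equivalent tree is a weighted tree whose node set is $V$ itself. These are not literally the same object, so an explicit (linear-time) conversion is needed. The paper handles this by a short post-order recursion that turns the representing tree into a path on $V$: at each internal node with children subtrees $T_u,T_v$, concatenate the recursively built paths $P_u,P_v$ by an edge whose weight is that node's label. This step is routine once noticed, but your sentence ``which is the same as the canonical tree representation of this ultrametric'' conflates the two structures and should be replaced by such a conversion.
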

Similar to Section~\ref{sec:randomized}, Theorem~\ref{thm:floweq} could be adjusted to handle randomized Min-Cut data structures as well.

One application of the above theorem is to graphs with treewidth bounded by (a parameter) $t$, for which Arikati, Chaudhuri, and Zaroliagis~\cite{ACZ98} obtain $t_p=n\log n\cdot 2^{2^{O(t)}}$ and $t_{mf}=2^{2^{O(t)}}$,
and thus our algorithm constructs a flow-equivalent tree on such graphs
in time $\tO_{t}(n)$, which was not known before. 
  
\begin{corollary}
There is a randomized algorithm that given a capacitated graph $G$
with $n$ nodes and treewidth at most $t$,
constructs with high probability a flow-equivalent tree for $G$
in time $O(n\log n \cdot 2^{2^{O(t)}})$.
\end{corollary}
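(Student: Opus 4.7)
The plan is to directly invoke Theorem~\ref{thm:floweq} with the graph family $\mathcal{F}$ taken to be the family of graphs of treewidth at most $t$. The first thing to check is that $\mathcal{F}$ is closed under edge-capacity perturbations; this is immediate because treewidth is a purely combinatorial property of the underlying graph and is insensitive to edge weights, so the perturbation step implicit in any randomized sub-routine that calls the oracle stays within $\mathcal{F}$.

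Next I would instantiate the Min-Cut data structure required by Theorem~\ref{thm:floweq}. For graphs of treewidth at most $t$, the data structure of Arikati, Chaudhuri, and Zaroliagis~\cite{ACZ98} achieves preprocessing time $t_p(m) = n\log n \cdot 2^{2^{O(t)}}$ and query time $t_{mf}(m) = 2^{2^{O(t)}}$; note that we only need query access to the \emph{value} of the minimum $st$-cut (i.e.\ Max-Flow queries), which is exactly what Theorem~\ref{thm:floweq} uses. Since a graph of treewidth at most $t$ on $n$ nodes has $m = O(tn)$ edges, we have $m = O(n)$ up to factors absorbed in $2^{2^{O(t)}}$.

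Plugging these bounds into the time estimate of Theorem~\ref{thm:floweq} gives a total running time of
\[
O\bigl(t_p(m) + n\log n \cdot t_{mf}(m) + n\log^2 n\bigr)
  = O\bigl(n\log n \cdot 2^{2^{O(t)}}\bigr),
\]
where the third summand $n\log^2 n$ is absorbed into the first because $2^{2^{O(t)}} \ge \log n$ asymptotically (and in any case it is dominated when $t$ is treated as a parameter, which is the regime the statement addresses). The success probability inherited from Theorem~\ref{thm:floweq} is ``with high probability,'' which matches the corollary's guarantee.

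There is essentially no hard part here: the corollary is a straightforward substitution, and the only small things to verify are (a) the closure of the bounded-treewidth family under perturbations (trivial), and (b) that $m = O(n)$ for bounded-treewidth graphs so that every ``per-edge'' cost in Theorem~\ref{thm:floweq} collapses to a ``per-node'' cost. Both are routine, so the proposal is simply: combine Theorem~\ref{thm:floweq} with the ACZ data structure and read off the bound.
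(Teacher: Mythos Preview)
Your proposal is correct and matches the paper's approach exactly: the paper treats this corollary as an immediate application of Theorem~\ref{thm:floweq} by plugging in the Arikati--Chaudhuri--Zaroliagis data structure bounds $t_p = n\log n \cdot 2^{2^{O(t)}}$ and $t_{mf} = 2^{2^{O(t)}}$, with no further argument. One small quibble: your justification that the $n\log^2 n$ term is absorbed because ``$2^{2^{O(t)}} \ge \log n$ asymptotically'' is not literally true for fixed $t$, but this is a harmless slack shared by the corollary's own statement (the paper's surrounding text uses $\tO_t(n)$, which hides the extra $\log n$).
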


Our main tool can be described as a theorem about recovering ultrametrics.
This is stated formally in Theorem~\ref{Theorem:ultrametrics},
whose proof appears in Section~\ref{sec:ultrametrics}.
But we first recall some standard terminology (see also~\cite{GV12}).
Let $(V,\dist)$ be a finite metric space.
(which means that distances
are non-negative, symmetric, satisfy the triangle inequality,
and are zero between, and only between, every point and itself). 
It is called an \emph{ultrametric space} if in addition
\begin{equation}
\label{eq:ultrametric}
  \forall u,v,w\in V,
  \qquad
  \dist(u,w)\leq \max\{\dist(u,v), \dist(v,w)\}. 
\end{equation}
It is easy to see that~\eqref{eq:ultrametric} is equivalent to saying that
the two largest distances in every ``triangle'' $u,v,w$ are equal.

A \emph{representing tree} for an ultrametric $(V,dist)$ 
is a rooted tree $T=(V_T,E_T)$ 
in which the set of leaves $L\subseteq V_T$ is (a copy of) $V$,
and every internal node (non-leaf) $z\in V_T \setminus L$
has a label $\lbl_T(z)\in \mathbb{R}^+$.
Moreover, the labels along every root-to-leaf path are monotonically decreasing.
For two leaves $u,v\in L$, let $\label_{T}(u,v)$ denote
the label of their \LCA in $T$. 
It is easy to see that $\dist(u,v)=\lbl_{T}(u,v)$ is an ultrametric on $L$,
and in particular satisfies~\eqref{eq:ultrametric}. 
Without loss of generality, we further assume throughout that 
that every internal node $v \in V_T \setminus L$ has at least two children. 

\begin{theorem}\label{Theorem:ultrametrics}
There is a randomized algorithm that, given oracle access to distances in an ultrametric on a set of $n$ points where the $\binom{n}{2}$ distances have exactly $n-1$ distinct labels,
constructs a representing tree of the ultrametric,
and with high probability it runs in time $O(n\log n \cdot Q(n)+n\log^2 n)$ using $O(n\log n)$ distance queries, where $Q(n)$ is the time to answer a query.
\end{theorem}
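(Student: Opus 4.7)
The plan is to recursively reconstruct the representing tree via a random-pivot strategy. The first step I would take is to exploit the hypothesis that the $\binom{n}{2}$ distances take exactly $n-1$ distinct values: since each internal node contributes exactly one label and every internal node is assumed to have at least two children, a representing tree with $n$ leaves has at most $n-1$ internal nodes, so here there must be exactly $n-1$ of them and the tree is forced to be binary. This observation is crucial because in a binary representing tree the set of leaves at a fixed distance from any pivot $p$ is exactly the leaf set of a single connected subtree---the sibling of one of $p$'s ancestors---so the profile of distances from $p$ reveals the partition into sibling subtrees unambiguously.

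The recursive procedure, on an input set $S$, picks a pivot $p\in S$ uniformly at random, queries $d(p,v)$ for every $v\in S\setminus\{p\}$, groups the answers by value to obtain $G_{\ell_1},\ldots,G_{\ell_j}$ with $\ell_1<\cdots<\ell_j$, recursively constructs representing trees $T_1,\ldots,T_j$ for these groups, and then chains them along the $p$-spine: create internal nodes $u_1,\ldots,u_j$ with labels $\ell_1,\ldots,\ell_j$, where $u_1$ is the parent of the leaf $p$ and of the root of $T_1$, and for each $i>1$ the node $u_i$ is the parent of $u_{i-1}$ and of the root of $T_i$. Correctness follows by induction on $|S|$: by the binary-tree observation, $G_{\ell_i}$ is precisely the leaf set of the subtree attached at the ancestor of $p$ with label $\ell_i$, so the construction reproduces the representing tree restricted to $S$ with the correct labels.

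To bound the recursion depth I would apply Corollary~\ref{Corollary:Tournament} to the cut structure on $V$ defined as follows: for each pair $u,v\in V$, let $S_{uv}$ be the leaf set of the child subtree of $\LCA(u,v)$ containing $u$, and $S_{vu}=V\setminus S_{uv}$. Applied within a recursive call on $S$ (using $V_F'=S$), the corollary yields that with probability at least $1/2$ over the random pivot, at least $|S|/4$ of the other points of $S$ lie in a group whose size inside $S$ is strictly less than $|S|/2$; since at most one group can exceed $|S|/2$, a counting argument forces the largest group to have size at most $\tfrac{3}{4}|S|$. Consequently, along the sequence of recursive calls containing any fixed element $v\in V$, each call is ``good'' with probability at least $1/2$ and shrinks the working set size by a factor of $4/3$, independently across calls. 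A Chernoff bound then gives that $v$'s depth in the recursion is $O(\log n)$ except with probability $1/\poly(n)$, and a union bound over the $n$ elements upgrades this to an $O(\log n)$ bound on the recursion depth with high probability.

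The complexity now follows by summation: at every depth of the recursion the active subsets partition a subset of $V$, so the per-depth query count is at most $n-1$, giving $O(n\log n)$ distance queries in total and an $O(n\log n\cdot Q(n))$ query-time contribution. The only non-query bookkeeping is sorting the queried distances and linking the spine in each call, which on a set of size $k$ costs $O(k\log k)$ and sums to $O(n\log n)$ per depth and $O(n\log^2 n)$ overall, matching the bound in the theorem. The main obstacle I anticipate is the depth analysis: translating the per-pivot balance guarantee, which is a statement about a single random choice, into a high-probability bound on the global recursion depth requires the Chernoff-plus-union-bound argument above, and it relies on recognizing that the LCA-induced bipartitions fit the hypothesis of Corollary~\ref{Corollary:Tournament} (whose original motivation was minimum cuts rather than ultrametrics).
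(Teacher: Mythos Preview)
Your proposal is correct and takes a genuinely different route from the paper's proof. The paper's algorithm, on each subset $V'$, also picks a random pivot $p$ and queries its distances to all of $V'$, but then makes only a \emph{binary} split: it sorts $V'$ by distance to $p$, retries pivots until one satisfies $\dist(q_{\lceil n'/4\rceil},p)<\dist(q_{\lceil n'/2\rceil+1},p)$, chooses a split index $s\in[\lceil n'/4\rceil,\lceil n'/2\rceil]$ with $\dist(q_s,p)<\dist(q_{s+1},p)$, and recurses on $\{q_1,\ldots,q_s\}$ and $\{q_{s+1},\ldots,q_{n'}\}$; the two returned trees must then be grafted together by locating, along the root path of $q_{s+1}$, where to insert a new node carrying the label $\dist(q_{s+1},p)$. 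The balance analysis in the paper goes via a centroid argument in the (binary) representing tree rather than through Corollary~\ref{Corollary:Tournament}. Your multi-way split with spine-chaining makes the tree assembly simpler, avoids the retry loop, and your reuse of Corollary~\ref{Corollary:Tournament} (proved in the paper for the approximate-cut setting, not here) to control the largest group is a nice cross-application; the price is that the recursion depth is random rather than deterministically $O(\log n)$, though the high-probability bound is the same.

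Two small points to tighten. First, in a recursive call on $S\subsetneq V$ the group $G_{\ell_i}$ is the intersection with $S$ of the sibling subtree in the full tree, not the full leaf set; so either phrase correctness purely via the ultrametric inequality (for $u\in G_{\ell_a}$, $v\in G_{\ell_b}$ with $a<b$ one has $d(u,v)=\max\{d(u,p),d(v,p)\}=\ell_b$, which matches the spine label), or note, as the paper does in Claim~\ref{Claim:Binary}, that the induced ultrametric on every $S\subseteq V$ again has exactly $|S|-1$ distinct values and hence a binary representing tree. Second, ``independently across calls'' is not literally true; what you need, and what suffices for the Chernoff step, is that each call is good with probability at least $1/2$ \emph{conditioned on all earlier calls}, which holds because the pivot is drawn fresh and uniformly from the current $S$.
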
 

\begin{proof}[Proof of Theorem~\ref{thm:floweq}]
Given a graph $G=(V,E)$,
we use Proposition~\ref{Proposition:Perturbation} (proved in Section~\ref{Section:Cut_Alg}) to perturb the edge-capacities, and thus we assume henceforth that $G$ has a single cut-equivalent tree with $n-1$ distinct capacities on its edges.
Let $N=(V,E')$ be a complete graph, where the weight of every edge $(u,v)$ is $\MFV(u,v)$.
It is well-known that $N$ with each edge weight inverted, denoted $N'$, is an ultrametric (see~\cite{GH61} or Proposition $5$ in~\cite{GV12}). Since the cut-equivalent tree of $G$ has $n-1$ distinct capacities on its edges, it must be that for the constructed ultrametric, the $\binom{n}{2}$ distances have exactly $n-1$ labels, and so we can apply Theorem~\ref{Theorem:ultrametrics} to recover a representing tree $T_{N'}$ of $N'$ in total time $O(t_p(m)+n\log n \cdot t_{mf}(m)+n\log^2 n)$
with high probability of success.

Finally, construct a path $P$ that is a flow-equivalent tree for $G$,
by the following recursive process, 
resembling a post-order traversal of the tree $T_{N'}$. 
Given a node $r$ of $T_{N'}$ (initially $r$ is the root), 
let $u,v$ be its two children,
and let $T_u, T_v$ be the subtrees rooted at $u,v$, respectively.
By applying this procedure recursively on $u$,
compute a path $P_u$ that is a flow-equivalent tree for the leaves of $T_u$, and similarly compute a path $P_v$ for $T_v$. 
Now chose arbitrarily one endpoint of $P_u$ and one endpoint of $P_v$,
and connect them by an edge whose capacity is the label of $r$ in $T_{N'}$, 
and return the resulting path $P$.

The proof that this process computes a flow-equivalent tree of $T_{N'}$
follows easily by induction. 
The main observation is that for every two leaves $a\in T_u, b\in T_v$,
their \LCA in $T_{N'}$ is $r$
and thus $\MFV(u,v)$ is the smallest among all pairs of leaves under $r$,
and it follows by induction that the new edge connecting $P_u$ and $P_v$
will have minimum weight among all the edges between $a$ and $b$ in $P$.
The time to construct the path is linear in the size of $T_{N'}$, 
and this concludes the proof of Theorem~\ref{thm:floweq}. 
\end{proof}

\subsection{Recovering Ultrametrics}
\label{sec:ultrametrics}

\begin{proof}[Proof of Theorem~\ref{Theorem:ultrametrics}]
Denote the input ultrametric by $(V,\dist)$. 
The algorithm works recursively as follows, starting with $V'=V$. 
Given a subset $V'\subseteq V$ of size $n'\ge 2$ of points in an ultrametric, 
pick a pivot point $p\in {V'}$ uniformly at random,
query the distance from $p$ to all other points in ${V'}$,
and enumerate $V'$ as $p=q_1,q_2,\ldots,q_{n'}$
in order of non-decreasing distance from $p$.
Repeat picking pivots until finding a pivot $p$ for which
\begin{equation} \label{eq:PivotSuccess}
  \dist(q_{\lceil n'/4\rceil },p) < \dist(q_{\lceil n'/2 \rceil+1},p).
\end{equation}
We assumed $n'\ge 2$, as in the base case $n'=1$
the algorithm returns a trivial tree on $V'$. 
Next, find $s\in[\lceil n'/4\rceil,\lceil n'/2\rceil]$
such that $\dist(q_{s},p)<\dist(q_{s+1},p)$,
partition $V'$ into
$V'_{\leq s}=\{q_1,\ldots,q_{s}\}$ and $V'_{> s}=\{q_{s+1},\ldots,q_{n'}\}$
(see Figure~\ref{Figs:Ultrametrics}).
Now recursively construct trees $T'_{\leq s}$ and $T'_{> s}$
representing the ultrametrics induced on $V'_{\leq s}$ and $V'_{> s}$.
By Claim~\ref{Claim:Binary} below, 
each of the constructed trees $T'_{\leq s}$ and $T'_{> s}$ is binary,
and its internal nodes have distinct labels.

Finally, connect the tree $T'_{\leq s}$ ``into'' $T'_{> s}$ as follows. 
Scan in $T'_{> s}$ the path from the leaf $q_{s+1}$ to the root, 
and create a new node $u_{s+1}$ with label $\dist(q_{s+1},p)$
immediately after the last node with a smaller label on this path
(by subdividing an existing edge, or adding a parent to the root to form a new root). 
Then connect $T'_{\leq s}$ under this new node $u_{s+1}$,
and return the combined tree, denoted $T'_{V'}$, as the output.

\ifprocs
\begin{figure*}[!ht]
\else
\begin{figure}[!ht]
\fi
       \includegraphics[width=0.8\textwidth,center]{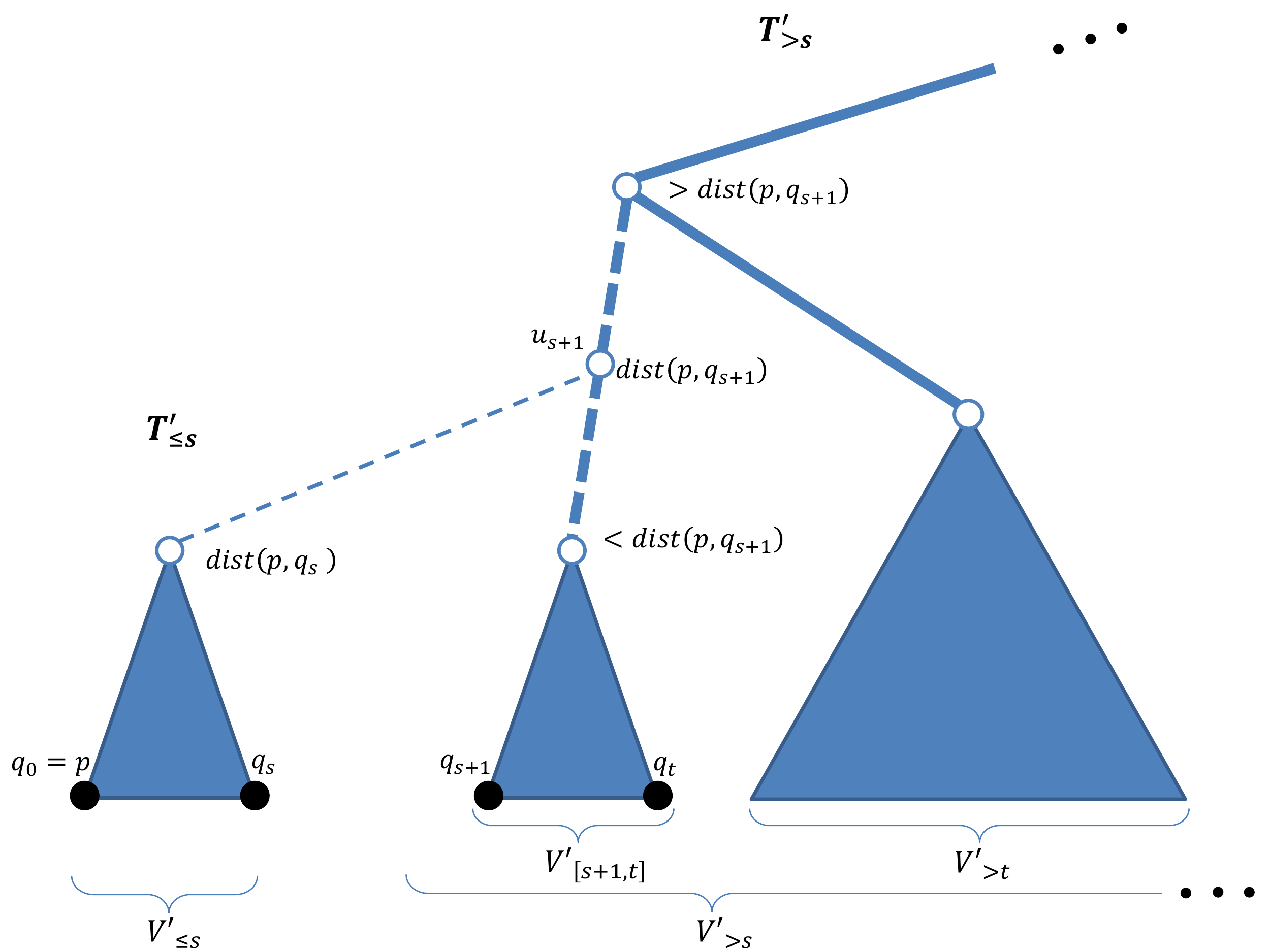}
   \caption[-]{
   An illustration of the algorithm. Bold lines represent edges in $T'_{> s}$, and dashed lines represent edges affected by connecting $T'_{\leq s}$ into this tree.
   }
   \label{Figs:Ultrametrics}
\vspace{.1in}\hrule
\ifprocs
\end{figure*}
\else
\end{figure}
\fi

\begin{claim}\label{Claim:Binary}
For every $V'\subseteq V$,
every representing tree $T_{V'}$ of the ultrametric $(V',\dist)$ is binary,
and each of its internal nodes has a distinct distance label. 
\end{claim}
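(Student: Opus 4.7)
The plan is to prove the claim in two stages: first, a counting principle bounding the number of distinct distances in any finite ultrametric, and then a structural argument showing this bound is attained on every subset $V'\subseteq V$ under the hypothesis of the theorem.

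The counting principle I will establish is: for any ultrametric $(W,\dist)$ on $k$ points, the number of distinct nonzero values $\{\dist(u,v):u\neq v\in W\}$ is at most $k-1$, with equality if and only if the (unique) representing tree is binary and its internal nodes carry pairwise distinct labels. This follows because, under the standing convention that every internal node has at least two children, the standard leaf-counting identity $\sum_{v\text{ internal}}(\#\text{children}(v)-1)=k-1$ bounds the number of internal nodes by $k-1$ with equality iff every internal node has exactly two children. Each pairwise distance equals the label of an LCA, so the number of distinct distances is bounded above by the number of distinct internal-node labels, which is at most the number of internal nodes. Therefore equality at $k-1$ distinct distances forces both the binary structure and the distinctness of internal labels.

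Applying the principle to $V$ itself, the hypothesis that the $\binom{n}{2}$ distances take exactly $n-1$ values forces the representing tree $T$ of $(V,\dist)$ to be binary with $n-1$ pairwise distinct internal labels. For a general $V'\subseteq V$, I will consider the tree $T|_{V'}$ obtained from $T$ by deleting each leaf in $V\setminus V'$, iteratively pruning any internal node that loses all descendants in $V'$, and finally contracting each internal node left with a single child. I will verify that $T|_{V'}$ represents $(V',\dist)$: for any two leaves $u,v\in V'$, their LCA in $T$ cannot be pruned (since $u,v$ are descendants that remain), and its label is preserved under contraction; hence $\lbl_{T|_{V'}}(u,v)=\lbl_T(u,v)=\dist(u,v)$.

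Finally, I will check that $T|_{V'}$ inherits the two desired properties from $T$. Since $T$ is binary, every surviving internal node corresponds to an internal node whose two original children both contributed descendants in $V'$; after the subtrees containing no $V'$-leaves are pruned and unary chains contracted, each such surviving node has exactly two children in $T|_{V'}$. The internal labels of $T|_{V'}$ are a subset of the labels of $T$, which are pairwise distinct by the case $V'=V$, so all surviving internal labels are distinct. The only real subtlety is verifying cleanly that the described delete-and-contract procedure produces the unique representing tree of the induced ultrametric while preserving the binary structure; once this combinatorial bookkeeping is in hand, the claim follows.
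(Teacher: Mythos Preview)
Your approach is correct but takes a genuinely different route from the paper. Both proofs share the same counting backbone (your leaf-counting identity $\sum_{v\text{ internal}}(\#\text{children}(v)-1)=k-1$ is exactly the paper's degree-counting), but they diverge in how they establish that the induced ultrametric $(V',\dist)$ has at least $n'-1$ distinct distances. The paper argues this directly: remove the points of $V\setminus V'$ one at a time and observe, via the ultrametric inequality~\eqref{eq:ultrametric}, that each removal can kill at most one distinct distance. You instead construct the pruned-and-contracted tree $T|_{V'}$ from the global representing tree $T$ and read off that it has $n'-1$ internal nodes with pairwise distinct labels inherited from $T$. The paper's route is shorter and avoids any tree surgery; your route is more structural and has the side benefit of exhibiting the representing tree of $(V',\dist)$ explicitly.

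One point to tighten: the claim is about \emph{every} representing tree of $(V',\dist)$, but you only verify the two properties for the particular tree $T|_{V'}$ you built, and then assert ``the claim follows.'' This needs one more sentence: since $T|_{V'}$ has $n'-1$ internal nodes with distinct labels, and each label is realized as the distance between two leaves in different subtrees, $(V',\dist)$ has at least $n'-1$ distinct distances; combined with your upper bound, it has exactly $n'-1$, and now the forward direction of your counting principle applies to \emph{any} representing tree of $(V',\dist)$. Your parenthetical ``(unique)'' is not a substitute for this step---representing trees are not unique in general (e.g., three points at pairwise distance $1$), and uniqueness here is a \emph{consequence} of the binary-with-distinct-labels structure, not something you can invoke beforehand.
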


\begin{proof}
We first claim that the number of distinct distances
in the ultrametric induced on $V'$ is at least $n'-1$.
Indeed, consider starting with the entire ultrametric $(V,\dist)$,
which has exactly $n-1$ distances,
and removing the points in $V\setminus V'$ one by one. 
Each removed point can decrease the number of distinct distances by at most $1$,
because if removing point $z$ eliminates two distinct distances, 
say to points $x_1$ and $x_2$, 
then the ``triangle'' $z,x_1,x_2$ has three distinct distances,
in contradiction with~\eqref{eq:ultrametric}.
Since $(V,\dist)$ has exactly $n-1$ distances,
the induced metric on $V'$ must have at least $n-1-(n-n') = n'-1$ distances.

Now denote by $k$ the number of internal nodes in $T_{V'}$,
and let us show that $k=n'-1$. 
In one direction, $k\ge n'-1$ because by the above claim,
the tree $T_{V'}$ must have at least $n'-1$ distinct labels. 
For the other direction we count degrees.
Every internal node in $T_{V'}$ has at least two children,
every internal node has degree at least $3$,
except for the root which has at least $2$,
hence the sum of degrees in $T_{V'}$ is at least $n' + 3k-1$.
At the same time, $T_{V'}$ is a tree and has exactly $n'+k-1$ edges,
hence this sum of degrees is $2(n'+k-1) \ge n' + 3k-1$,
i.e., $k\le n'-1$. 
We conclude that both inequalities above hold with equality,
which implies that all $k=n'-1$ internal nodes have distinct labels,
and none of them can have three or more children. 
\end{proof}

Continuing with the proof of Theorem~\ref{Theorem:ultrametrics},
let us now prove that the tree $T'_{V'}$ constructed by the algorithm represents all the distances correctly. 
It suffices to consider $u\in T'_{\leq s}$ and $v\in T'_{> s}$,
and show that in the combined tree
\[
  \lbl_{T'_{V'}} (u,v) = \dist(u,v).
\]
By the ordering of $V'$, we have
$\dist(u,p)\leq \dist(q_{s},p) < \dist(q_{s+1},p) \leq \dist(v,p)$,
and thus by~\eqref{eq:ultrametric}, $\dist(u,v)=\dist(v,p)$.
Since both $u,p\in T'_{\leq s}$,
we have $\lbl_{T_{V'}}(v,p)=\lbl_{T_{V'}}(u,v)$, 
and thus it suffices to show that
\[
  \lbl_{T_{V'}}(v,p)=\dist(v,p).
\]
We now have two case, as follows.
Let $t\ge {s+1}$ be the largest such that $\dist(q_t,p)=\dist(q_{s+1},p)$,
and partition $V'_{> s}$ into $V'_{[s+1,t]}=\{q_{s+1},\ldots,q_t\}$
and (possibly empty) $V'_{>t}=\{q_{t+1},\ldots,q_{n'}\}$.
Suppose first that $v\in V'_{[s+1,t]}$.
In this case, by the way we connected the two trees, 
the \LCA of $p$ and $v$ is the same as of $p$ and $q_{s+1}$
(i.e., the new node $u_{s+1}$), 
and thus $\lbl_{T'_{V'}}(v,p) = \dist(q_{s+1},p) = \dist(v,p)$, as required. 
Suppose next that $v\in V'_{>t}$. 
In this case, we shall show $\lbl_{T'_{V'}}(v,p)=\dist(v,q_{s+1}) =\dist(v,p)$.
The first equality is because by the way we connected the two trees, 
the \LCA of $p$ and $v$ is the same as of $q_{s+1}$ and $v$. 
For the second equality,
observe that $\dist(q_{s+1},p) < \dist(q_{s+1},v)$
by inspecting at the \LCA of each pair,
and now use~\eqref{eq:ultrametric} on the ``triangle'' $p,q_{s+1},v$ 
to identify its two largest distances as $\dist(v,q_{s+1})=\dist(v,p)$.
We conclude that indeed in all cases $\lbl_{T_{V'}}(v,p)=\dist(v,p)$.

We proceed to show that with high probability,
the algorithm makes only $O(n\log n)$ distance queries.
We first claim that for every $V'\subseteq V$
(and thus every instance throughout the recursion), 
every representing tree $T_{V'}$ has a centroid-like node $c^*$, 
where the number of leaves under it in the tree $T_{V'}$ 
is in the range $[\lceil n'/4\rceil,\lceil n'/2\rceil]$.
To see this, start with the root of $T_{V'}$, 
and follow the child with more leaves under it,
until that number is no larger than $\lceil n'/2 \rceil$. 
Because the tree is binary by Claim~\ref{Claim:Binary}, 
this stops at a node $c^*$ where the number of leaves under it
is some $s^*\in [\lceil n'/4\rceil,\lceil n'/2\rceil]$, as claimed. 
Now, a uniformly random pivot $p$ has probability $s^*/n'\geq 1/4$
to be a descendant of $c^*$,
in which case~\eqref{eq:PivotSuccess} holds. 
Thus~\eqref{eq:PivotSuccess} occurs with probability at least $1/4$.

Consider now an execution of the algorithm,
and describe it using a recursion tree defined as follows
(note the difference from a representing tree of $V'$). 
In this tree, a vertex (we use this term to distinguish from the nodes in the trees discussed above) corresponds to an instance of the recursion 
and has two children corresponding to the two new instances if a successful pivot is picked,
and has one child if an unsuccessful pivot is picked. 
Thus, this recursion tree has a vertex for every pivot that is picked.
The total number of distance queries performed at each depth $i$
in the recursion tree is bounded by $n$,
because instances at the same depth $i$ have pairwise-disjoint node sets,
and every instance performs exactly one query for every non-pivot node
(for its distance to the pivot in the same instance).
It thus suffice to show that with high probability,
the depth of the recursion tree is at most $8\log_{4/3} n$,
and this would imply that the total number of queries is $O(n\log n)$. 
To see end, fix a node $j\in V$;
its root-to-leaf path in the recursion tree contains at most $\log_{4/3}n$
successful pivots, as these already reduce the instance size to at most $1$.
Now imagine these random pivots an infinite sequence of coins
with probability of success (heads) at least $1/4$,
even when conditioned on the outcomes of earlier coins. 
With probability at least $1-1/n^2$, 
the prefix of $16\log_{4/3}n$ first random coins
already contains at least $\log_{4/3}n$ heads. 
If this high-probability event occurs,
there are enough successful pivots (heads) to guarantee
that the recursion terminates before that coins prefix is exhausted, 
which means that node $j$ goes through at most $16\log_{4/3}n$ pivots.
By union bound over all $n$ nodes, we conclude that with high probability
the depth of the recursion tree is at most $16\log_{4/3}n$,
in which case the total number of distance queries is $O(n\log n)$.
Finally, we bound the sorting of the distances the algorithm does for each instance from the pivot in order to check if \eqref{eq:PivotSuccess} holds.
This takes $c\cdot n'\log n'$ for some constanct $c$ by a standard sorting algorithm, and by using the recursion tree as before, the sorting for all instances at a single depth $j$ takes time $\sum_{V'_i \in depth j}{c\cdot n_i\log n_i}\leq O(n\log n)$,
where $\card{V'_i}=n_i$, and the inequality is by the convexity of $n_i\log n_i$.
Then, multiply by the height of the recursion tree $O(\log n)$ to get the term $O(n\log^2 n)$.
Note that connecting the trees that came back from the recursion takes $O(\log n')$ time, which is much smaller than the sorting and thus is bounded as well.
Altogether, we get a total running time of $O(n\log n Q(n) + n\log^2 n)$, as required.
This concludes Theorem~\ref{Theorem:ultrametrics}.
\end{proof}

{\small
\ifprocs
\bibliographystyle{alpha}
\else
\bibliographystyle{alphaurlinit}
\fi
\bibliography{robi}
}

\end{document}